\documentclass[12pt]{amsart}
\usepackage[a4paper,margin=2.5cm]{geometry}
\usepackage[T1]{fontenc}
\usepackage[english]{babel}
\usepackage{microtype}
\usepackage{comment}
\usepackage{amsmath,amssymb,amsthm,amscd,mathtools}
\usepackage{mathrsfs}
\usepackage{enumitem}
\usepackage{xcolor}
\usepackage{xcolor}
\definecolor{citeblue}{RGB}{0,70,140}
\usepackage[
  colorlinks=true,
  linkcolor=citeblue,
  citecolor=citeblue,
  urlcolor=citeblue
]{hyperref}
\usepackage{etoolbox}
\makeatletter
\patchcmd{\@setaddresses}
  {\par\addvspace\bigskipamount\indent}
  {\par\addvspace\bigskipamount\noindent}
  {}{}
\makeatother
\makeatletter
\def\l@subsection{\@tocline{2}{0pt}{3pc}{6pc}{}}
\makeatother
\makeatletter
\renewcommand\section{\@startsection{section}{1}%
  \z@{.7\linespacing\@plus\linespacing}{.5\linespacing}%
  {\normalfont\bfseries\centering}}

\renewcommand\subsection{\@startsection{subsection}{2}%
  \z@{.5\linespacing\@plus.7\linespacing}{-.5em}%
  {\normalfont\bfseries}}
\makeatother

\numberwithin{equation}{section}

\newtheorem{theorem}{Theorem}[section]
\newtheorem{prop}[theorem]{Proposition}
\newtheorem{lemma}[theorem]{Lemma}
\newtheorem{corollary}[theorem]{Corollary}
\newtheorem{definition}[theorem]{Definition}
\newtheorem*{remark}{Remark}

\DeclareMathOperator{\Aut}{Aut}
\DeclareMathOperator{\Hom}{Hom}
\DeclareMathOperator{\Map}{Map}
\DeclareMathOperator{\Tr}{Tr}

\DeclareMathOperator{\Tors}{Tors}

\DeclareMathOperator{\Det}{det}

\title[Toral Chern--Simons TQFT]{Toral Chern--Simons TQFT\\via Geometric Quantization in Real Polarization}
\author{Daniel Galviz}
\address{YAU MATHEMATICAL SCIENCES CENTER AND DEPARTMENT OF MATHEMATICS, TSINGHUA
UNIVERSITY, BEIJING, CHINA.}
\date{}

\begin{document}

\begin{abstract}
We construct toral Chern--Simons theory with gauge group $\mathbb T=\mathfrak t/\Lambda\cong U(1)^n$ from an even, integral, nondegenerate symmetric bilinear form $K:\Lambda\times\Lambda\to\mathbb Z$ by geometric quantization via real polarization. We obtain a unitary extended $(2+1)$-dimensional TQFT by constructing the boundary state spaces and canonical operators and proving that they satisfy the cylinder and gluing axioms. The finite discriminant group $G_K=\Lambda^*/K\Lambda$ arises naturally in the theory and controls the genus-$g$ state spaces. At genus one, the theory recovers the finite quadratic data underlying bosonic Abelian topological order.
\vspace{0.5cm}
\medskip

\textbf{Main Theorem.}
Let \(\mathbb T=\mathfrak t/\Lambda\cong U(1)^n\) be a compact torus, and let
\(K:\Lambda\times\Lambda\to\mathbb Z\) be an even, integral, nondegenerate symmetric bilinear form.
Then the geometric quantization via real polarization of the moduli spaces of flat \(\mathbb T\)-connections
defines a unitary extended \((2+1)\)-dimensional TQFT
\[
Z^{\mathrm{CS}}_{\mathbb T,K}\colon \mathrm{Cob}^{\mathrm{ext}}_{2+1}\to \mathrm{Vect}_{\mathbb C}.
\]
Moreover, if \(\Sigma_g\) is a connected closed oriented surface of genus \(g\), then the vector space assigned to \(\Sigma_g\) has dimension $|G_K|^g = |\det K|^g.$
\end{abstract}

\maketitle
{\scriptsize
\setlength{\parskip}{0pt}
\hypersetup{linkcolor=black}
\tableofcontents
}

\newpage
%----------------------
\section{Introduction}
%----------------------

Recently, we showed that \(U(1)\) Chern--Simons theory and the Reshetikhin--Turaev theory
associated with the pointed modular category \(C(\mathbb Z_k,q_k)\) are naturally
isomorphic as extended \((2+1)\)-dimensional TQFTs~\cite{Galviz1}. This motivates the
geometric study of \(U(1)^n\) Chern--Simons theory as a concrete extended
\((2+1)\)-dimensional TQFT constructed by geometric quantization in real polarization.

For a closed oriented surface \(\Sigma\), the classical phase space of Abelian
Chern--Simons theory with torus gauge group \(\mathbb T\) is the moduli space of flat
\(\mathbb T\)-connections on \(\Sigma\), which is a symplectic torus. In the rank-one
case, this point of view already appears in Manoliu's construction of  $U(1)$ Abelian
Chern--Simons theory as a \((2+1)\)-dimensional TQFT \cite{Manoliu1,Manoliu2}. The toral
case requires a systematic extension of that picture to higher rank.

Although Freed, Hopkins, Lurie, and Teleman \cite{FreedHopkinsLurieTeleman} construct a fully extended \(U(1)^n\) Chern--Simons theory, the higher-categorical framework developed there is not the one needed in the present work. Their construction establishes the theory at the level of extended functorial field theory, whereas here we seek an explicit geometric model for the boundary phase spaces, state spaces, and bordism operators. Such a model is desirable both for concrete calculations and for clarifying how the lattice data, geometric quantization, and bordism operations enter the toral theory.

A complementary approach to toral Chern--Simons quantization is provided by Belov--Moore \cite{BelovMoore}, where the theory is studied in a K\"ahler framework. There one uses a complex structure on the surface to induce a complex structure on the moduli space of flat \(\mathbb T\)-connections, and the resulting quantum space is described in terms of holomorphic sections. In the present paper we instead work in real polarization. This is the natural language for our purposes, since the relevant extended topological structures are organized by Lagrangian boundary data rather than by auxiliary choices of complex structure. In particular, the real-polarization formalism is adapted to boundary Lagrangian data and makes the cylinder, gluing, and Maslov-correction structures directly visible. The Kähler and real-polarization pictures are thus complementary: the former emphasizes holomorphic geometry, while the latter is especially well suited to the extended bordism formalism developed here.

The geometric quantization of Abelian Chern--Simons theory is motivated by both physics and mathematics. On the physical side, Wen and Zee \cite{wenzee1992,Wen2016} showed that multicomponent Abelian Chern--Simons theories arise as the universal low-energy effective field theories of Abelian quantum Hall states, with the topological order encoded by the integral \(K\)-matrix. In this picture, the relevant observables are not only closed-manifold partition functions, but also the protected boundary state spaces, the operators assigned to bordisms, and the corresponding gluing laws. Our toral construction gives a geometric realization of these structures: it recovers the same topological data from geometric quantization and organizes it into an extended TQFT, thereby providing a mathematical model for the topological structure of Abelian quantum Hall phases. On the mathematical side, the same lattice-theoretic and quadratic structures appear in the theory of discriminant forms and linking pairings, as developed, for example, by Nikulin \cite{Nikulin1980} and Deloup \cite{Deloup1999}. From this point of view, the integral form \(K\) serves as a concrete lattice representative of the finite quadratic data encoded by the discriminant group \(G_K=\Lambda^*/K\Lambda\). One of the main points of the present paper is that these finite quadratic data arise naturally from geometric quantization itself.

More concretely, let \(\mathbb T=\mathfrak t/\Lambda\) be a compact torus and let
\(K:\Lambda\times\Lambda\to\mathbb Z\) be an even, integral, nondegenerate symmetric
bilinear form. We construct the symplectic boundary phase spaces
\(\mathcal M_\Sigma(\mathbb T)\) and the canonical prequantum line bundles
\(L_{\Sigma,K}\). After choosing a rational Lagrangian, we quantize in real polarization,
identify the Bohr--Sommerfeld leaves, and obtain finite-dimensional Hilbert spaces
together with canonical BKS operators whose projective defect is measured by the
Maslov--Kashiwara index. We then construct torsion-sector Chern--Simons sections and
canonical half-densities, assemble them into bordism vectors, and prove the cylinder and
gluing laws. This yields a unitary extended \((2+1)\)-dimensional TQFT in the sense of Turaev and Walker. In a companion paper, we develop a rigorous functional-integral construction of toral Chern–Simons theory \cite{Galviz3.5} and show that it produces the same extended TQFT. Together, these two constructions give a strong consistency check on the theory.

The construction proceeds in four steps: we first identify the boundary phase spaces and
their symplectic forms; next construct the prequantum line bundles and the Lagrangian
boundary data associated to \(3\)-manifolds; then quantize in real polarization and
construct the BKS operators; and finally build the bordism states and prove the TQFT
axioms.

Several ingredients are toral extensions of results already present in the rank-one
theory of Manoliu \cite{Manoliu1,Manoliu2}. In those cases we indicate precisely which
arguments are symplectic-torus generalizations and which require new input from the
lattice-valued level structure encoded by \(K\).\\

\textbf{Acknowledgements.} I would like to thank Nicolai Reshetikhin for many helpful
conversations.

\section{\texorpdfstring{Boundary Phase Space and Geometric Quantization}{Boundary Phase Space and Geometric Quantization}}
\label{sec:Toral-TQFT}

A useful conceptual point is that the existence  of a toral fully extended Chern--Simons theory is already guaranteed by the work of Freed, Hopkins, Lurie, and Teleman \cite{FreedHopkinsLurieTeleman}. For torus gauge groups they construct a fully extended theory, together with its $4$--dimensional anomaly theory, and explicitly distinguish this toral theory from Manoliu's circle-group construction, which they describe as a $2$--$3$ theory. Thus, the toral Chern--Simons theory is known to admit a $2$--$3$ TQFT formulation.

What is needed here, however, is an explicit geometric model of the boundary state spaces and bordism operators. The formulation of \cite{FreedHopkinsLurieTeleman} is anomalous and higher-categorical, and its lower-dimensional values are organized in categorical and Morita-theoretic terms. That framework is ideal for conceptual structure, but it does not provide the concrete geometric realization of boundary state spaces, BKS operators, and bordism vectors developed here. For this purpose one needs a toral analogue of Manoliu’s rank-one construction \cite{Manoliu1,Manoliu2}, extended to higher-rank torus gauge groups and to the full bordism formalism.

\subsection{Classical Boundary Phase Space}

Because \(\mathbb T=\mathfrak t/\Lambda\) is Abelian, a flat
\(\mathbb T\)--connection on \(\Sigma\) is determined up to gauge by its
holonomy class, hence by a class in
\(H^1(\Sigma;\mathbb T)\cong H^1(\Sigma;\mathfrak t)/H^1(\Sigma;\Lambda)\).
This quotient is therefore the natural classical boundary phase space, and we
denote it by \(\mathcal M_\Sigma(\mathbb T)\). The bilinear form \(K:\Lambda\times\Lambda\to\mathbb Z\) induces the symplectic form
\(\omega_{\Sigma,K}\), which is the toral analogue of the usual Chern--Simons boundary symplectic form. Let \(\Sigma\) be a closed oriented surface. Set
\begin{equation}
\mathcal{M}_\Sigma(\mathbb T)
:=
H^1(\Sigma;\mathfrak t)\big/H^1(\Sigma;\Lambda).
\end{equation}
Since \(H^1(\Sigma;\mathfrak t)\cong H^1(\Sigma;\mathbb R)\otimes\mathfrak t\)
is a finite-dimensional real vector space and \(H^1(\Sigma;\Lambda)\) is a full lattice,
\(\mathcal{M}_\Sigma(\mathbb T)\) is a compact torus \cite{FreedCS,Atiyah:1990}, with symplectic form induced by

\begin{equation}
\omega_{\Sigma,K}([\alpha],[\beta])
=
\int_\Sigma K(\alpha\wedge\beta),
\qquad
[\alpha],[\beta]\in H^1(\Sigma;\mathfrak t).
\end{equation}

\begin{prop}
\label{prop:toral-phase-space}
The pair $\bigl(\mathcal{M}_\Sigma(\mathbb T),\omega_{\Sigma,K}\bigr)$ is a compact symplectic torus.
\end{prop}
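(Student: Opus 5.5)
We split the statement into three parts: the quotient $\mathcal M_\Sigma(\mathbb T)$ is a compact torus, the bilinear form $\omega_{\Sigma,K}$ is a well-defined constant-coefficient $2$-form on it, and this form is closed and nondegenerate.

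\emph{Step 1: the torus.} Set $V=H^1(\Sigma;\mathfrak t)\cong H^1(\Sigma;\mathbb R)\otimes\mathfrak t$, of real dimension $2gn$ for $\Sigma$ of genus $g$ (treating components separately if $\Sigma$ is disconnected). Since $H^1(\Sigma;\mathbb Z)$ is free of rank $2g$ and $\Lambda\cong\mathbb Z^n$, the universal coefficient theorem gives $H^1(\Sigma;\Lambda)\cong H^1(\Sigma;\mathbb Z)\otimes\Lambda$. Its image in $V$ is the tensor product of a full lattice in $H^1(\Sigma;\mathbb R)$ with the full lattice $\Lambda\subset\mathfrak t$, so it is a full lattice in $V$. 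Hence $\mathcal M_\Sigma(\mathbb T)=V/H^1(\Sigma;\Lambda)$ is a compact torus of dimension $2gn$.

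\emph{Step 2: the form.} The pairing $\omega_{\Sigma,K}$ is bilinear on $V$, and it depends only on cohomology classes by Stokes' theorem on the closed surface, since $\int_\Sigma K(d\gamma\wedge\beta)=\int_\Sigma d\,K(\gamma\wedge\beta)=0$ for closed $\beta$. It is antisymmetric because $K$ is symmetric and $1$-forms anticommute. Extend $K$ $\mathbb R$-bilinearly to $\mathfrak t$. A constant antisymmetric form on $V$ is translation invariant, so it descends to an invariant $2$-form on $V/H^1(\Sigma;\Lambda)$, and it is automatically closed.

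\emph{Step 3: nondegeneracy.} This is the only step with content. Choose a symplectic basis $a_1,\dots,a_g,b_1,\dots,b_g$ of $H^1(\Sigma;\mathbb Z)$ for the intersection pairing, so that $\int a_i\wedge b_j=\delta_{ij}$ and $\int a_i\wedge a_j=\int b_i\wedge b_j=0$. Choose also a basis $e_1,\dots,e_n$ of $\Lambda$. Then on decomposable classes
\[
\omega_{\Sigma,K}(x\otimes u,\,y\otimes v)=\Bigl(\int_\Sigma x\wedge y\Bigr)K(u,v),
\]
so $\omega_{\Sigma,K}$ is the tensor product of the Poincaré duality pairing $Q$ on $H^1(\Sigma;\mathbb R)$ with $K\otimes\mathbb R$. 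In the chosen bases its Gram matrix is the Kronecker product $J\otimes K$, where $J$ is the standard symplectic matrix, and
\[
\det(J\otimes K)=(\det J)^{n}(\det K)^{2g}=(\det K)^{2g}\neq0,
\]
using the nondegeneracy of $K$. Hence $\omega_{\Sigma,K}$ is nondegenerate, and $\bigl(\mathcal M_\Sigma(\mathbb T),\omega_{\Sigma,K}\bigr)$ is a compact symplectic torus.

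As a byproduct, the integrality of $K$ shows that $\omega_{\Sigma,K}$ takes integer values on the lattice $H^1(\Sigma;\Lambda)$. This is not needed here, but it is relevant for prequantization later.
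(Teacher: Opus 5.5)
Your proof is correct and takes essentially the same route as the paper. Alternation comes from the symmetry of $K$ and the antisymmetry of the cup product, and nondegeneracy comes from the tensor-product structure, which you make explicit through $\det(J\otimes K)=(\det K)^{2g}\neq 0$. You are also right that integrality of $\omega_{\Sigma,K}$ on $H^1(\Sigma;\Lambda)$ is not needed for a translation-invariant form to descend to the quotient torus. The paper's proof invokes integrality at that point, but it only matters later, for prequantization.
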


\begin{proof}
The pairing \(\omega_{\Sigma,K}\) is bilinear. Since the cup product on
\(H^1(\Sigma;\mathbb R)\) is alternating and \(K\) is symmetric,
\(\omega_{\Sigma,K}\) is alternating. The cup-product pairing on
\(H^1(\Sigma;\mathbb R)\) is nondegenerate, and \(K\) is nondegenerate on
\(\mathfrak t\), hence \(\omega_{\Sigma,K}\) is nondegenerate on \(H^1(\Sigma;\mathfrak t)\). If \(u,v\in H^1(\Sigma;\Lambda)\), then \(K(u\wedge v)\in H^2(\Sigma;\mathbb Z)\), so
\[
\omega_{\Sigma,K}(u,v)=\int_\Sigma K(u\wedge v)\in\mathbb Z.
\]
Therefore \(\omega_{\Sigma,K}\) is integral on the lattice \(H^1(\Sigma;\Lambda)\),
and so descends to a symplectic form on the quotient torus.
\end{proof}

Proposition~\ref{prop:toral-phase-space} shows that
\(\mathcal M_\Sigma(\mathbb T)\) is a finite-dimensional compact symplectic manifold. This is the classical object that will later be
quantized. In particular, the pair
\((\mathcal M_\Sigma(\mathbb T),\omega_{\Sigma,K})\) is the precise toral
replacement for the rank--one phase space in Manoliu's \(U(1)\) theory \cite{Manoliu1}.

\subsection{Boundary Polarizations and Lagrangians}

Let \(X\) be a compact oriented \(3\)--manifold with boundary. Let
\begin{equation}
r_X:H^1(X;\mathfrak t)\longrightarrow H^1(\partial X;\mathfrak t)
\end{equation}
be the restriction map and define
\begin{equation}
L_X:=\operatorname{Im}(r_X)\subset H^1(\partial X;\mathfrak t).
\end{equation}

Given a compact oriented \(3\)--manifold \(X\) with boundary \(\partial X\), not
every boundary class in \(H^1(\partial X;\mathfrak t)\) extends across \(X\).
The extendable classes form the subspace
\(L_X\), where
\(r_X\) is restriction. The
next proposition shows that \(L_X\) is Lagrangian. Geometrically, this means
that the bulk manifold \(X\) singles out a maximal classical constraint on the boundary phase space. To prove maximality, we identify the symplectic orthogonal of $L_X$ with the kernel of the connecting
homomorphism in the long exact sequence of the pair $(X,\partial X)$.

\begin{prop}
\label{prop:toral-boundary-lagrangian}
The subspace \(L_X\subset H^1(\partial X;\mathfrak t)\) is Lagrangian with
respect to $\omega_{\partial X,K}$.
\end{prop}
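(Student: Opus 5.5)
We prove isotropy and maximality separately. Throughout, $\omega_{\partial X,K}$ factors as the cup-product pairing on $H^1(\partial X;\mathbb R)$ tensored with $K$ on $\mathfrak t$, so it suffices to organize the argument around the long exact sequence of the pair $(X,\partial X)$ with $\mathfrak t$ coefficients,
\[
H^1(X;\mathfrak t)\xrightarrow{r_X}H^1(\partial X;\mathfrak t)\xrightarrow{\delta}H^2(X,\partial X;\mathfrak t).
\]

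\emph{Isotropy.} Let $a,b\in H^1(X;\mathfrak t)$. Then $K(a\wedge b)\in H^2(X;\mathbb R)$ and
\[
\omega_{\partial X,K}(r_Xa,r_Xb)=\int_{\partial X} r_X\bigl(K(a\wedge b)\bigr)=\int_X d\,K(a\wedge b)=0
\]
by Stokes' theorem, using closed de Rham representatives. Equivalently, the fundamental class $[\partial X]$ is the image of $[X,\partial X]$ under the boundary map, so it pairs trivially with every class restricted from $X$. Hence $L_X\subset L_X^{\omega}$.

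\emph{Maximality.} We show $L_X^{\omega}\subset\ker\delta=\operatorname{Im} r_X=L_X$. Take $c\in H^1(\partial X;\mathfrak t)$ with $\omega_{\partial X,K}(c,r_Xa)=0$ for all $a\in H^1(X;\mathfrak t)$. The key identity is the naturality of cup product with the connecting map:
\[
\int_{\partial X} K(c\wedge r_X a)=\pm\int_X K(\delta c\wedge a),
\]
where on the right we use the pairing $H^2(X,\partial X)\times H^1(X)\to H^3(X,\partial X)\cong\mathbb R$. By Lefschetz duality this pairing is nondegenerate over $\mathbb R$. Since $K$ is nondegenerate on $\mathfrak t$, the $\mathfrak t$-valued pairing is also nondegenerate. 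So vanishing for all $a$ forces $\delta c=0$, and exactness gives $c\in L_X$. Thus $L_X^{\omega}=L_X$, which is the Lagrangian condition.

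As a dimension check, this also gives the standard ``half lives, half dies'' count: $\dim L_X=\tfrac12\dim H^1(\partial X;\mathfrak t)$.

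\emph{Main obstacle.} The delicate step is the compatibility identity $\langle c\cup r_Xa,[\partial X]\rangle=\pm\langle\delta c\cup a,[X,\partial X]\rangle$ and the resulting identification of $L_X^\omega$ with $\ker\delta$. I would prove it at the cochain level. Extend a cocycle representing $c$ to a cochain $\tilde c$ on $X$, so that $\delta c$ is represented by $d\tilde c$. Then use $d(\tilde c\cup a)=d\tilde c\cup a$ for closed $a$ together with Stokes' theorem.

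Everything is tensored with $\mathfrak t$ and contracted with $K$. The sign is irrelevant for this argument, and $\Lambda$-integrality plays no role, since the statement is over $\mathbb R$.
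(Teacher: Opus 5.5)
Your proposal is correct and follows essentially the same route as the paper: isotropy by Stokes' theorem, and maximality by showing $\omega_{\partial X,K}(c,r_Xa)=\pm\langle\delta c,a\rangle$ through an extension $\tilde c$ with $\delta c=[d\tilde c]$, then using nondegeneracy of the $K$-twisted Lefschetz pairing and exactness $\ker\delta=\operatorname{Im}r_X$. The only cosmetic difference is that you prove just the inclusion $L_X^{\omega}\subset L_X$ and combine it with isotropy, whereas the paper proves both directions of $u\in L_X^\perp\iff\delta u=0$; the extra direction is redundant given isotropy.
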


\begin{proof}
We first show that $L_X$ is isotropic. Let $a,b\in H^1(X;\mathfrak t)$, and choose closed
$\mathfrak t$--valued $1$--forms, still denoted $a,b$, representing these classes. Then
\[
\omega_{\partial X,K}(r_X a,r_X b)
=
\int_{\partial X} K(r_X a\wedge r_X b)
=
\int_{\partial X} r_X\!\bigl(K(a\wedge b)\bigr).
\]
Since $a$ and $b$ are closed, the $\mathfrak t$--valued bilinear form $K(a\wedge b)$ is a closed
$2$--form on $X$, so by Stokes' theorem,
\[
\int_{\partial X} r_X\!\bigl(K(a\wedge b)\bigr)
=
\int_X d\,K(a\wedge b)=0.
\]
Hence $\omega_{\partial X,K}$ vanishes on $L_X=\operatorname{Im}(r_X)$, so $L_X$ is isotropic.

It remains to show that $L_X$ is maximal isotropic, equivalently that $L_X^\perp=L_X$,  with respect to $\omega_{\partial X,K}$. Consider the long exact sequence of the pair $(X,\partial X)$:
\[
H^1(X,\partial X;\mathfrak t)\xrightarrow{\,j^*\,} H^1(X;\mathfrak t)
\xrightarrow{\,r_X\,} H^1(\partial X;\mathfrak t)
\xrightarrow{\,\delta\,} H^2(X,\partial X;\mathfrak t).
\]
We claim that $L_X^\perp=\ker(\delta).$ Since exactness gives $\ker(\delta)=\operatorname{Im}(r_X)=L_X$, this proves the result. To prove the claim, let $u\in H^1(\partial X;\mathfrak t)$. We show that
\[
u\in L_X^\perp \quad\Longleftrightarrow\quad \delta u=0.
\]
First suppose that $\delta u=0$. By exactness, there exists $a\in H^1(X;\mathfrak t)$ with
$r_X a=u$. Then for every $b\in H^1(X;\mathfrak t)$,
\[
\omega_{\partial X,K}(u,r_X b)=\omega_{\partial X,K}(r_X a,r_X b)=0
\]
by the isotropic computation above. Thus $u\in L_X^\perp$. Conversely, suppose that $u\in L_X^\perp$. We must show that $\delta u=0$. By
Poincar\'e--Lefschetz duality and the nondegeneracy of $K$, there is a natural nondegenerate pairing
\[
H^2(X,\partial X;\mathfrak t)\times H^1(X;\mathfrak t)\longrightarrow \mathbb R,
\qquad
(\eta,b)\longmapsto \int_X K(\eta\wedge b).
\]
We will show that $\delta u$ pairs trivially with every class $b\in H^1(X;\mathfrak t)$, and hence
must vanish.

Let $b\in H^1(X;\mathfrak t)$ be arbitrary, represented by a closed form still denoted $b$.
Choose any representative $\tilde u\in \Omega^1(\partial X;\mathfrak t)$ of the boundary class $u$,
and extend it to a collar neighborhood of $\partial X$. Then choose a global extension
$\widetilde u\in \Omega^1(X;\mathfrak t)$ of that collar form. In the relative de Rham model,
$\delta u$ is represented by the closed relative $2$--form $d\widetilde u$, and therefore
\[
\langle \delta u,b\rangle
=
\int_X K(d\widetilde u\wedge b).
\]
Since $db=0$, Stokes' theorem gives
\[
\int_X K(d\widetilde u\wedge b)
=
\int_X d\,K(\widetilde u\wedge b)
=
\int_{\partial X} K(u\wedge r_X b)
=
\omega_{\partial X,K}(u,r_X b).
\]
By assumption $u\in L_X^\perp$, so the last expression vanishes for every
$b\in H^1(X;\mathfrak t)$. Hence
\[
\langle \delta u,b\rangle =0
\qquad\text{for all } b\in H^1(X;\mathfrak t).
\]
By the nondegeneracy of the above pairing, $\delta u=0$. Thus $u\in \ker(\delta)$.

We have therefore proved $L_X^\perp=\ker(\delta)=\operatorname{Im}(r_X)=L_X$, so $L_X$ is Lagrangian.
\end{proof}

The Lagrangian subspace \(L_X\subset H^1(\partial X;\mathfrak t)\) is the
cohomological shadow of the bulk-extension problem: it records exactly which
boundary infinitesimal fields come from the interior of \(X\). This is the
boundary polarization associated with the bordism \(X\).

\begin{remark}
To compare with the Reshetikhin--Turaev side, it is useful to identify this cohomological Lagrangian \(L_X\) with the more familiar homological Lagrangian
\(\lambda_X\subset H_1(\Sigma;\mathbb R)\). The next lemma identifies the two
descriptions under Poincar\'e duality.
\end{remark}

\begin{lemma}\label{RT/CS boundary Lagrangian correspondence}
Let $X$ be a compact oriented $3$–manifold with boundary $\Sigma = \partial X$. Define
\[
\lambda_X := \ker\!\left(H_1(\Sigma;\mathbb{R}) \longrightarrow H_1(X;\mathbb{R})\right).
\]
Then $\lambda_X$ is a Lagrangian subspace of $H_1(\Sigma;\mathbb{R})$ with respect to the intersection pairing. Moreover, under Poincaré duality $H_1(\Sigma;\mathbb{R}) \;\cong\; H^1(\Sigma;\mathbb{R}),$ the subspace $\lambda_X$ corresponds to the Lagrangian
\[
L_X^{\mathbb R} := \operatorname{Im}\!\left(H^1(X;\mathbb R) \longrightarrow H^1(\Sigma;\mathbb R)\right), \qquad L_X=L_X^{\mathbb R}\otimes\mathfrak t.
\]
that appears in the Chern–Simons boundary phase space.
\end{lemma}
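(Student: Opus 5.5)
Our plan is to establish the Poincaré duality correspondence first and then deduce that \(\lambda_X\) is Lagrangian from the rank-one (scalar) version of Proposition~\ref{prop:toral-boundary-lagrangian}. We fix the duality map \(PD_\Sigma\colon H^1(\Sigma;\mathbb R)\to H_1(\Sigma;\mathbb R)\), \(u\mapsto u\cap[\Sigma]\). It intertwines the cup-product pairing \(\int_\Sigma u\wedge v\) with the intersection pairing, up to a universal sign that does not affect isotropy. Hence it suffices to show
\[
PD_\Sigma\bigl(L_X^{\mathbb R}\bigr)=\lambda_X .
\]
The Lagrangian property of \(\lambda_X\) then follows from that of \(L_X^{\mathbb R}\). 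Taking \(\mathfrak t=\mathbb R\) and \(K=1\) in the proof of Proposition~\ref{prop:toral-boundary-lagrangian} shows \(L_X^{\mathbb R}\) is Lagrangian; evenness of \(K\) plays no role in that argument. Since \(H^1(\Sigma;\mathfrak t)\cong H^1(\Sigma;\mathbb R)\otimes\mathfrak t\) and restriction commutes with \(\otimes\mathfrak t\), we also get \(L_X=L_X^{\mathbb R}\otimes\mathfrak t\).

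The key input is the compatibility of the long exact sequences of the pair \((X,\partial X)\) under Poincaré--Lefschetz duality. The square
\[
\begin{CD}
H^1(X;\mathbb R) @>{r_X}>> H^1(\Sigma;\mathbb R)\\
@V{\cong}VV @VV{PD_\Sigma}V\\
H_2(X,\Sigma;\mathbb R) @>{\partial}>> H_1(\Sigma;\mathbb R)
\end{CD}
\]
commutes up to sign, with the left map given by capping with \([X,\partial X]\). From this we get \(PD_\Sigma(L_X^{\mathbb R})=\operatorname{Im}(\partial)\). The long exact homology sequence
\[
H_2(X,\Sigma)\xrightarrow{\partial}H_1(\Sigma)\xrightarrow{i_*}H_1(X)
\]
then gives \(\operatorname{Im}(\partial)=\ker(i_*)=\lambda_X\). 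The identity \(\partial(a\cap[X,\partial X])=\pm\, r_X(a)\cap[\Sigma]\) is the standard naturality of cap product with the fundamental class, using \(\partial[X,\partial X]=[\Sigma]\).

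As a backup that avoids sign bookkeeping, we can argue by dimension. Both \(\lambda_X\) and \(PD_\Sigma(L_X^{\mathbb R})\) have dimension \(g(\Sigma)\): the first by half-lives-half-dies, the second by Proposition~\ref{prop:toral-boundary-lagrangian}. So it is enough to prove one inclusion, \(PD_\Sigma(L_X^{\mathbb R})\subset\lambda_X\). For a closed form \(a\) on \(X\) and a cycle \(\gamma\), Stokes gives \(\int_\gamma r_Xb=0\) for every closed \(b\) on \(X\) whenever \(\gamma=\partial S\) bounds in \(X\). Combined with \(\int_\gamma u=\pm\int_\Sigma u\wedge PD_\Sigma^{-1}\gamma\), this identifies \(\lambda_X\) with the annihilator of \(L_X^{\mathbb R}\) under \(\int_\Sigma\), which is \((L_X^{\mathbb R})^{\perp}=L_X^{\mathbb R}\). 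The reverse containment follows from the de Rham theorem: a cycle on which all restrictions of closed forms on \(X\) integrate to zero is real-nullhomologous in \(X\).

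The main obstacle is keeping the duality conventions consistent, namely which side the cap product acts on and the orientation of \(\partial X\). The annihilator argument in the previous paragraph is insensitive to signs, so we would present that version.
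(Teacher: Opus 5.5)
Your proposal is correct. Your first argument is the paper's own proof. The paper also reads off $\lambda_X=\operatorname{Im}\bigl(\partial\colon H_2(X,\Sigma;\mathbb R)\to H_1(\Sigma;\mathbb R)\bigr)$ by exactness and uses Poincar\'e--Lefschetz duality to match $\partial$ with $r_X$. It then takes the Lagrangian property from Proposition~\ref{prop:toral-boundary-lagrangian}. Where the paper just says ``dual to'', you make the identity $\partial(a\cap[X,\partial X])=\pm\,r_X(a)\cap[\Sigma]$ explicit.

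The annihilator version, which you say you would actually present, is a genuinely different route, and it also works. Its key point is that $i_*\colon H_1(\Sigma;\mathbb R)\to H_1(X;\mathbb R)$ is the transpose of $r_X$ under the de Rham (Kronecker) pairings. Hence $\lambda_X=\ker i_*=\operatorname{Ann}(L_X^{\mathbb R})$ exactly. Poincar\'e duality on the closed surface alone then turns this annihilator into $(L_X^{\mathbb R})^{\perp}$, which equals $L_X^{\mathbb R}$.

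What this buys you is that you never need the compatibility of Poincar\'e--Lefschetz duality with the connecting homomorphism, or its sign conventions. What it costs is that the identification $\lambda_X\leftrightarrow L_X^{\mathbb R}$ now depends on the maximality half of Proposition~\ref{prop:toral-boundary-lagrangian}. That proof itself uses Poincar\'e--Lefschetz duality on $X$, so the duality input is relocated rather than eliminated. The paper's route, by contrast, identifies the two subspaces without knowing that either is Lagrangian.

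Two small presentation points:
\begin{itemize}
\item The Stokes computation actually proves $\lambda_X\subset PD_\Sigma(L_X^{\mathbb R})$, not the inclusion $PD_\Sigma(L_X^{\mathbb R})\subset\lambda_X$ that you announce. This is harmless, since either inclusion together with the dimension count suffices.
\item Once you add the de Rham reverse containment, the half-lives-half-dies dimension count is redundant and can be dropped, along with the unused ``closed form $a$''.
\end{itemize}
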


\begin{proof}
Consider the long exact sequence of the pair $(X,\Sigma)$
\[
H_2(X,\Sigma;\mathbb{R}) \longrightarrow H_1(\Sigma;\mathbb{R})
\longrightarrow H_1(X;\mathbb{R}).
\]
The image of the first map is precisely $\lambda_X$. By Poincaré–Lefschetz duality, $H_2(X,\Sigma;\mathbb{R}) \cong H^1(X;\mathbb{R}),$ and the composition $H_2(X,\Sigma;\mathbb{R}) \to H_1(\Sigma;\mathbb{R})$ is dual to the restriction map $H^1(X;\mathbb{R}) \to H^1(\Sigma;\mathbb{R}).$ Under the identification $H_1(\Sigma;\mathbb{R}) \cong H^1(\Sigma;\mathbb{R})$ given by the intersection pairing on $\Sigma$, the subspace $\lambda_X$ therefore corresponds to
\[
L_X^{\mathbb R} := \operatorname{Im}\!\left(H^1(X;\mathbb R) \longrightarrow H^1(\Sigma;\mathbb R)\right), \qquad L_X=L_X^{\mathbb R}\otimes\mathfrak t.
\]
Finally, since the corresponding subspace $L_X=L_X^{\mathbb R}\otimes \mathfrak t$ is Lagrangian by Proposition~\ref{prop:toral-boundary-lagrangian}, the subspace $\lambda_X$ is Lagrangian in $H_1(\Sigma;\mathbb{R})$ with respect to the intersection pairing.
\end{proof}

This identifies the cohomological Lagrangian \(L_X^{\mathbb R}\subset H^1(\Sigma;\mathbb R)\) selected by the bulk extension problem with the corresponding homological Lagrangian \(\lambda_X\subset H_1(\Sigma;\mathbb R)\) under Poincaré duality. Thus the boundary polarization determined by the bordism can be expressed equivalently in cohomological or homological language.

\subsection{Canonical Toral Prequantum Line Bundle}
Having identified the classical phase space
\((\mathcal M_\Sigma(\mathbb T),\omega_{\Sigma,K})\), the next step is to
construct the corresponding prequantum line bundle. For our purposes this line
bundle must do two jobs at once: it must encode the classical Chern--Simons
boundary phase, and it must carry the connection whose curvature reproduces the
symplectic form \(\omega_{\Sigma,K}\).

Let $\Sigma$ be a closed oriented surface. Let $\mathbb T=\mathfrak t/\Lambda$ be a compact
torus and let $K:\Lambda\times\Lambda\to \mathbb Z$ be an even, integral, nondegenerate symmetric bilinear form.
We write again $K$ for the induced symmetric bilinear form on $\mathfrak t$.

\begin{lemma}
\label{lem:boundary-triviality-toral}
Let $X$ be a compact oriented $3$--manifold with boundary $\Sigma$.
If $P\to X$ is a principal $\mathbb T$--bundle with characteristic class
$c(P)\in H^2(X;\Lambda)$ torsion, then the boundary bundle
$Q:=P|_\Sigma\to \Sigma$ is trivializable.
\end{lemma}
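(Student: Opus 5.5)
Principal $\mathbb T$--bundles over a space $Y$ are classified by $c\in H^2(Y;\Lambda)$, because $B\mathbb T\simeq K(\Lambda,2)$ (use $\mathbb T\cong U(1)^n$ and $\Lambda\cong\mathbb Z^n$, so that $c$ is the vector of first Chern classes). The class is natural under pullback, so
\[
c(Q)=c(P|_\Sigma)=\iota^*c(P)\in H^2(\Sigma;\Lambda),
\]
where $\iota:\Sigma\hookrightarrow X$ is the inclusion. It is therefore enough to show that $\iota^*c(P)=0$. I will first reduce to a connected component of $\Sigma$, and then show that $H^2$ of that component has no torsion.

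\textbf{Step 1.} Write $\Sigma=\bigsqcup_i\Sigma_i$ as a finite disjoint union of closed connected oriented surfaces. Then
\[
H^2(\Sigma;\Lambda)=\bigoplus_i H^2(\Sigma_i;\Lambda),
\]
so it suffices to show that each component $c(Q|_{\Sigma_i})$ vanishes.

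\textbf{Step 2.} Since $\Sigma_i$ is closed, connected and oriented, $H^2(\Sigma_i;\mathbb Z)\cong\mathbb Z$. Because $\Lambda$ is free, $H^2(\Sigma_i;\Lambda)\cong\Lambda$, which is torsion-free. (This is where orientability of $\partial X$ is used; it is inherited from the orientation of $X$.)

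\textbf{Step 3.} Pullback $\iota_i^*:H^2(X;\Lambda)\to H^2(\Sigma_i;\Lambda)$ is a group homomorphism, so it sends torsion elements to torsion elements. If $m\,c(P)=0$ for some integer $m\ge1$, then $m\,\iota_i^*c(P)=0$ in the torsion-free group $\Lambda$. Hence $\iota_i^*c(P)=0$ for every $i$.

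\textbf{Step 4.} A principal $\mathbb T$--bundle with vanishing class is trivial. Concretely, after choosing a basis $\Lambda\cong\mathbb Z^n$, $Q$ splits as a product of $n$ circle bundles, each with vanishing first Chern class and therefore trivial. So $Q$ is trivializable.

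No step is a genuine obstacle. The only care needed is the classification statement $[\,Y,B\mathbb T\,]\cong H^2(Y;\Lambda)$ with its naturality, and the observation that each boundary component is closed and orientable, so its top cohomology is free. An alternative route is the long exact sequence of the pair $(X,\Sigma)$, but torsion-freeness of $H^2(\Sigma;\Lambda)$ already settles the question directly.
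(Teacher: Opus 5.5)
Your proof is correct and follows the same route as the paper. In both, naturality gives $c(Q)=\iota^*c(P)$, a torsion class must restrict to zero in the torsion-free group $H^2(\Sigma;\Lambda)$, and $Q$ is then trivial by the classification of principal $\mathbb T$--bundles. Your componentwise reduction is slightly more careful than the paper's proof, which writes $H^2(\Sigma;\Lambda)\cong\Lambda$ and so tacitly assumes $\partial X$ is connected; in general this group is a direct sum of copies of $\Lambda$, one per boundary component, and is still torsion-free.
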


\begin{proof}
The characteristic class of $Q$ is the restriction of $c(P)$ to
$H^2(\Sigma;\Lambda)$. Since $\Sigma$ is a closed oriented surface,
\[
H^2(\Sigma;\Lambda)\cong H^2(\Sigma;\mathbb Z)\otimes \Lambda\cong \Lambda
\]
is torsion-free. Therefore the restriction of the torsion class $c(P)$ to
$H^2(\Sigma;\Lambda)$ vanishes. Hence $Q$ is topologically trivial.
\end{proof}

The point of Lemma~\ref{lem:boundary-triviality-toral} is that torsion bulk bundles become topologically trivial on the boundary surface \(\Sigma\). This
lets us work uniformly with a trivializable boundary bundle \(Q\to\Sigma\), so
that the boundary line \(L_{(Q,\eta)}\) depends only on the boundary connection
\(\eta\).

The basic idea is the same as in the Chern--Simons line construction: a boundary
field \((Q,\eta)\) does not canonically determine a vector, but any bulk filling
\((X,P,\Theta,\varphi)\) determines one, and different fillings differ by a
well-controlled \(U(1)\)--phase computed from a \(4\)--dimensional Chern--Weil term. Let us fix now a trivializable principal $\mathbb T$--bundle $Q\to\Sigma$ and let
$\eta\in \mathcal A_Q$ be a connection.

\begin{definition}
\label{def:filling-boundary-field}
A \emph{filling} of the boundary field $(Q,\eta)$ is a quadruple $(X,P,\Theta,\varphi)$,
where:
\begin{enumerate}
\item[\textup{(i)}] $X$ is a compact oriented $3$--manifold with $\partial X=\Sigma$;
\item[\textup{(ii)}] $P\to X$ is a principal $\mathbb T$--bundle;
\item[\textup{(iii)}] $\Theta\in\mathcal A_P$ is a connection on $P$;
\item[\textup{(iv)}] $\varphi:P|_\Sigma\xrightarrow{\sim}Q$ is a bundle isomorphism such that
$\varphi_*(\Theta|_\Sigma)=\eta$.
\end{enumerate}
\end{definition}

\begin{lemma}
For any two fillings of the same boundary field \((Q,\eta)\), there exists a compact
oriented \(4\)-manifold \(W\) with boundary \(X_1\cup_\Sigma(-X_2)\), together with a
principal \(\mathbb T\)-bundle \(\widetilde P\to W\) and connection \(\widetilde\Theta\)
restricting to the glued boundary data.
\end{lemma}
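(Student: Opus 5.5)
We first glue the two fillings along their common boundary. Given fillings $(X_1,P_1,\Theta_1,\varphi_1)$ and $(X_2,P_2,\Theta_2,\varphi_2)$ of $(Q,\eta)$, we form the closed oriented $3$--manifold $Y:=X_1\cup_\Sigma(-X_2)$. Over it we form the bundle $P_Y:=P_1\cup_{\varphi_2^{-1}\circ\varphi_1}P_2$, obtained by gluing $P_1|_\Sigma$ to $P_2|_\Sigma$ via $\varphi_2^{-1}\circ\varphi_1$. Since $\varphi_{i*}(\Theta_i|_\Sigma)=\eta$ for $i=1,2$, the connections agree on $\Sigma$. To get a smooth connection $\Theta_Y$ on $P_Y$, we first put $\Theta_1,\Theta_2$ in product form near $\Sigma$ (pull back $\eta$ along a collar projection), which changes them only by a gauge-trivial homotopy supported in the collar. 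It then remains to produce a compact oriented $4$--manifold $W$ with $\partial W=Y$, together with a principal $\mathbb T$--bundle $\widetilde P\to W$ extending $P_Y$ and a connection extending $\Theta_Y$.

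The main obstacle is the bordism problem: the pair $(Y,P_Y)$ must bound. Principal $\mathbb T$--bundles are classified by maps $Y\to B\mathbb T\simeq K(\Lambda,2)\simeq (\mathbb{CP}^\infty)^n$, after choosing a basis of $\Lambda\cong\mathbb Z^n$. So $(Y,P_Y)$ defines a class in the oriented bordism group $\Omega_3^{SO}(K(\Lambda,2))$. Using the Atiyah--Hirzebruch spectral sequence, with $\Omega_0=\mathbb Z$, $\Omega_1=\Omega_2=\Omega_3=0$, we get
\[
\Omega_3^{SO}(K(\Lambda,2))\cong \widetilde\Omega_3(K(\Lambda,2))\oplus\Omega_3 ,
\]
where $\Omega_3=0$ and the reduced part is filtered by $H_p(K(\Lambda,2);\Omega_{3-p})$. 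The only possibly nonzero entry is $H_3(K(\Lambda,2);\mathbb Z)$, which vanishes because $H_*(\mathbb{CP}^\infty)^{\otimes n}$ is concentrated in even degrees and is torsion-free by Künneth. Hence $\Omega_3^{SO}(B\mathbb T)=0$. This yields $W$ and a map $W\to B\mathbb T$ extending the classifying map of $P_Y$, that is, a bundle $\widetilde P\to W$ with $\widetilde P|_{\partial W}\cong P_Y$. For the lemma one could alternatively argue coordinatewise: for $n=1$ this is the classical statement that a circle bundle over a closed $3$--manifold bounds, and one can take products and sums over a basis of $\Lambda$. The spectral-sequence argument handles this uniformly, so we will use it.

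Finally we extend the connection. Choose a collar $Y\times[0,1)\subset W$ on which $\widetilde P$ is identified with the pullback of $P_Y$; this is possible after adjusting the bundle isomorphism by homotopy. On the collar, take the pullback connection of $\Theta_Y$. Take any connection on $\widetilde P$ elsewhere, and patch the two with a cutoff function. Since connections form an affine space modelled on $\Omega^1(W;\mathfrak t)$, this patching produces a connection $\widetilde\Theta$ on $\widetilde P$ with $\widetilde\Theta|_{\partial W}=\Theta_Y$, which by construction restricts to $\Theta_1$ on $X_1$ and $\Theta_2$ on $-X_2$. Orientations are consistent by construction, because $-X_2$ carries the reversed orientation, so $\partial W=X_1\cup_\Sigma(-X_2)$ as oriented manifolds.
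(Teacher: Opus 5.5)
Your proposal is correct and follows essentially the same route as the paper. You glue the two fillings into $(Y,P_Y)$, use the Atiyah--Hirzebruch spectral sequence together with $H_3(\mathcal K(\Lambda,2);\mathbb Z)=0$ to get $\Omega^{SO}_3(B\mathbb T)=0$, and extend the connection from a collar of $\partial W$ by affine patching. You are more careful than the paper about smoothness of the glued connection across $\Sigma$, but there is a small inconsistency. Replacing $\Theta_1,\Theta_2$ by product-form connections near $\Sigma$ is not a gauge transformation, so your $\widetilde\Theta$ restricts to the modified connections rather than literally to $\Theta_1,\Theta_2$ as you claim at the end. It is cleaner either to build the product-form (collared) condition into the definition of a filling, or to allow piecewise-smooth connections, which is enough for the Chern--Weil integrals.
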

\begin{proof}
The existence of a comparison filling is a standard consequence of oriented bordism
with maps to \(B\mathbb T\): after gluing the two boundary fillings one obtains a class
in \(\Omega^{SO}_3(B\mathbb T)\), and since \(B\mathbb T=\mathcal K(\Lambda,2)\)\footnote{For \(\mathbb T=\mathfrak t/\Lambda\), the classifying space \(B\mathbb T\) is an Eilenberg--MacLane space \(\mathcal K(\Lambda,2)\).} has no odd
homology, the Atiyah--Hirzebruch spectral sequence implies
\(\Omega^{SO}_3(B\mathbb T)=0\); see \cite[\S 2.1]{AtiyahHirzebruch1961}.
Thus there exists a compact oriented \(4\)-manifold \(W\) with
\(\partial W=X_1\cup_\Sigma(-X_2)\) and a principal \(\mathbb T\)-bundle
\(\widetilde P\to W\) extending the glued boundary bundle. After choosing a collar of
\(\partial W\), the glued boundary connection extends over a neighborhood of
\(\partial W\), and since the space of connections on a fixed principal
\(\mathbb T\)-bundle is affine, a partition-of-unity argument extends it to a connection
\(\widetilde\Theta\) on all of \(\widetilde P\).
\end{proof}

\begin{definition}
\label{def:toral-boundary-line-filling}
Let $\mathscr F(Q,\eta)$ denote the set of fillings of $(Q,\eta)$.
Consider $[(X,P,\Theta,\varphi),z],$ for $z\in\mathbb C.$ Then, $[(X_1,P_1,\Theta_1,\varphi_1),z_1]
\sim
[(X_2,P_2,\Theta_2,\varphi_2),z_2]$
if and only if
\[
z_2
=
z_1\,
\exp\!\left(
\frac{\pi i}{(2\pi)^2}
\int_W K(F_{\widetilde\Theta}\wedge F_{\widetilde\Theta})
\right),
\]
where $W$ is any compact oriented $4$--manifold with $\partial W
=
X_1\cup_\Sigma (-X_2),$ equipped with a principal $\mathbb T$--bundle $\widetilde P\to W$ and a connection
$\widetilde\Theta$ restricting to the glued boundary field determined by
$(X_1,P_1,\Theta_1,\varphi_1)$ and $(X_2,P_2,\Theta_2,\varphi_2)$.
Define
\[
L_{(Q,\eta)}
:=
\bigl(\mathscr F(Q,\eta)\times \mathbb C\bigr)\big/\sim.
\]
\end{definition}

\begin{prop}
\label{prop:well-defined-boundary-line}
The relation of Definition~\ref{def:toral-boundary-line-filling} is well defined.
For each boundary field $(Q,\eta)$, the quotient $L_{(Q,\eta)}$ is a
one-dimensional Hermitian complex vector space.
\end{prop}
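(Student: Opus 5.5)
The plan is to reduce everything to a single claim. For a closed oriented $4$--manifold $V$ with principal $\mathbb T$--bundle $R\to V$ and connection $\Xi$, we have
\[
\exp\!\Bigl(\tfrac{\pi i}{(2\pi)^2}\int_V K(F_\Xi\wedge F_\Xi)\Bigr)=1 .
\]
By Chern--Weil, $\frac{i}{2\pi}F_\Xi$ (with the usual sign convention for $\mathfrak t$--valued curvature) represents the image of $c(R)\in H^2(V;\Lambda)$ in $H^2(V;\mathfrak t)$. Hence
\[
\frac{1}{(2\pi)^2}\int_V K(F_\Xi\wedge F_\Xi)=-\langle K(c(R)\cup c(R)),[V]\rangle .
\]
The right-hand side is the value of the integral quadratic form $x\mapsto K(x\cup x)$ on $H^2(V;\Lambda)$. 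Choosing a basis $e_j$ of $\Lambda$ and writing $c=\sum c_j e_j$ with $c_j\in H^2(V;\mathbb Z)$, this value equals $\sum_j K_{jj}\,c_j^2+2\sum_{j<k}K_{jk}\,c_jc_k$. Because $K$ is even, each $K_{jj}$ is even, so the whole expression is an even integer. Its product with $\pi i$ is therefore in $2\pi i\mathbb Z$, which proves the claim. This is the one place where evenness of $K$ is used, and it is what allows the normalization $\pi i$ instead of $2\pi i$.

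Next, independence of the choice of $W$. Suppose $(W,\widetilde P,\widetilde\Theta)$ and $(W',\widetilde P',\widetilde\Theta')$ both bound $X_1\cup_\Sigma(-X_2)$ with the same glued data. Glue them into the closed manifold $V=W\cup(-W')$, with bundle and connection glued along the common boundary (after smoothing in collars, where the connections can be taken to be pulled back). Additivity of the Chern--Weil integral gives
\[
\int_V=\int_W-\int_{W'} ,
\]
so the claim shows that the two phases agree. The earlier lemma guarantees that some comparison $W$ always exists, so the phase $\Phi(f_1,f_2)\in U(1)$ attached to any two fillings is well defined.

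For the equivalence relation, reflexivity follows by taking $W=X\times[0,1]$ with the pulled-back connection. There $F\wedge F$ vanishes, since $F$ has no $dt$ component and is a $2$--form on a $3$--manifold, so the phase is $1$. Symmetry follows from orientation reversal: using $-W$ sends $\Phi\mapsto\Phi^{-1}$. For transitivity, glue $W_{12}$ and $W_{23}$ along $X_2$ to get a comparison manifold for $(f_1,f_3)$, and conclude $\Phi(f_1,f_3)=\Phi(f_1,f_2)\Phi(f_2,f_3)$. Strictly speaking, the glued manifold has boundary $X_1\cup X_3$ only after a corner-straightening, so I would phrase this through the closed-manifold claim: the composite $W_{12}\cup W_{23}\cup(-W_{13})$ is closed after rounding corners. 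I expect this corner and collar bookkeeping to be the main technical obstacle, and I would handle it by fixing product collars in all data.

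Finally, the linear structure. The cocycle identity $\Phi(f_1,f_3)=\Phi(f_1,f_2)\Phi(f_2,f_3)$ with $\Phi(f,f)=1$ means each fiber over a fixed filling $f$ maps bijectively to the quotient. Concretely, $z\mapsto[f,z]$ is a bijection $\mathbb C\to L_{(Q,\eta)}$. Transporting the vector space structure along this bijection is independent of $f$, because changing $f$ multiplies by a fixed scalar $\Phi$, which is linear. The same argument applies to the Hermitian form $\langle[f,z],[f,w]\rangle=\bar z w$, because $|\Phi|=1$. Hence $L_{(Q,\eta)}$ is a one-dimensional Hermitian line, and the set $\mathscr F(Q,\eta)$ is nonempty by the existence of fillings provided by the Lemma on boundary triviality together with $\Omega_2^{SO}(B\mathbb T)$--type considerations.
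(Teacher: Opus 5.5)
Your proposal is correct and follows the paper's argument: you glue the two comparison manifolds $W$ and $-W'$ into a closed $4$-manifold, and by Chern--Weil the ratio of phases is $\pi i$ times $K(c\cup c)[M]$, which is an even integer because $K$ is even, so the phase is $1$ (your sign convention for $c$ differs from the paper's, but this is immaterial). Beyond the paper, you spell out reflexivity, symmetry, the cocycle identity behind transitivity, the transport of the linear and Hermitian structure, and nonemptiness of $\mathscr F(Q,\eta)$; for the last, the direct reason is that $Q$ is trivializable, so you can take a handlebody with the trivial bundle and extend $\eta$ through a collar, and the boundary-triviality lemma plays no role there.
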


\begin{proof}
Suppose $(W,\widetilde P,\widetilde\Theta)$ and $(W',\widetilde P',\widetilde\Theta')$
are two choices used to compare the same two fillings.
Gluing $W$ to $-W'$ along their common boundary yields a closed oriented
$4$--manifold $M$ with a principal $\mathbb T$--bundle $\widehat P\to M$ and a
connection $\widehat\Theta$.
The ratio of the two comparison phases is
\[
\exp\!\left(
\frac{\pi i}{(2\pi)^2}
\int_M K(F_{\widehat\Theta}\wedge F_{\widehat\Theta})
\right).
\]
By Chern--Weil theory\footnote{We use the Chern--Weil convention \(c(\widehat P)=[F_\Theta/(2\pi)]\in H^2(X;\Lambda)\).},
\[
\frac{1}{(2\pi)^2}
\int_M K(F_{\widehat\Theta}\wedge F_{\widehat\Theta})
=
K(c(\widehat P),c(\widehat P))[M].
\]
Choose a $\Lambda$--basis so that
\[
c(\widehat P)=(c_1,\dots,c_n),\qquad K=(K_{ab})_{1\le a,b\le n}.
\]
Then
\[
K(c(\widehat P),c(\widehat P))[M]
=
\sum_a K_{aa}\,\langle c_a\smile c_a,[M]\rangle
+
2\sum_{a<b}K_{ab}\,\langle c_a\smile c_b,[M]\rangle.
\]
Because $K$ is even and integral, each diagonal coefficient $K_{aa}$ is even,
and the off-diagonal terms already carry a factor $2$. Hence this integer is even.
Therefore
\[
\exp\!\left(
\frac{\pi i}{(2\pi)^2}
\int_M K(F_{\widehat\Theta}\wedge F_{\widehat\Theta})
\right)=1.
\]
So the relation is independent of the choice of $W$ and is transitive. The quotient is a one-dimensional complex vector space because any two fillings
are related by multiplication by a unique scalar.
The norm $\bigl\|[(X,P,\Theta,\varphi),z]\bigr\|:=|z|$ is well defined because all transition factors lie in $U(1)$.
\end{proof}

Thus each boundary field \((Q,\eta)\) determines a Hermitian line \(L_{(Q,\eta)}\). The role of the comparison \(4\)--manifold \(W\) is precisely to guarantee that changing the filling changes only the phase of a vector in
that line, not the line itself.

\begin{prop}
\label{prop:gauge-action-boundary-line}
Let $u\in\mathcal G_\Sigma:=\Aut(Q)\cong \Map(\Sigma,\mathbb T)$.
For a filling $(X,P,\Theta,\varphi)$ define
\[
u\cdot (X,P,\Theta,\varphi):=(X,P,\Theta,u\circ\varphi).
\]
This induces an isometric linear map
\[
u_*:L_{(Q,\eta)}\longrightarrow L_{(Q,u\cdot \eta)}.
\]
\end{prop}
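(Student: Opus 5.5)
The plan is to define $u_*$ on representatives by
\[
u_*\bigl[(X,P,\Theta,\varphi),z\bigr]:=\bigl[(X,P,\Theta,u\circ\varphi),z\bigr],
\]
and then check three things in order: that the image is a filling of $(Q,u\cdot\eta)$, that the map respects the relation $\sim$ of Definition~\ref{def:toral-boundary-line-filling}, and that it is linear and isometric. For the first point, $u\circ\varphi:P|_\Sigma\to Q$ is still a bundle isomorphism. Since $\mathbb T$ is Abelian, $(u\circ\varphi)_*(\Theta|_\Sigma)=u_*\eta=\eta+u^{-1}du=:u\cdot\eta$, so $(X,P,\Theta,u\circ\varphi)\in\mathscr F(Q,u\cdot\eta)$.

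The main step is compatibility with $\sim$. Take two fillings $(X_i,P_i,\Theta_i,\varphi_i)$, $i=1,2$, of $(Q,\eta)$, compared by some $(W,\widetilde P,\widetilde\Theta)$. The glued boundary bundle over $X_1\cup_\Sigma(-X_2)$ is $P_1\cup_{\varphi_2^{-1}\circ\varphi_1}P_2$. The transition map used in the gluing only depends on
\[
(u\circ\varphi_2)^{-1}\circ(u\circ\varphi_1)=\varphi_2^{-1}\circ\varphi_1,
\]
so the modified fillings produce literally the same glued bundle with the same glued connection. Therefore the same $(W,\widetilde P,\widetilde\Theta)$ serves as a comparison manifold for the modified pair, and the phase $\exp\bigl(\tfrac{\pi i}{(2\pi)^2}\int_W K(F_{\widetilde\Theta}\wedge F_{\widetilde\Theta})\bigr)$ is unchanged. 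Proposition~\ref{prop:well-defined-boundary-line} guarantees that this phase does not depend on the choice of $W$, so $[(\cdot),z_1]\sim[(\cdot),z_2]$ is preserved and $u_*$ is well defined.

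I expect the only subtle point to be this identification of glued boundary data. One has to make sure the gluing convention in the definition of the comparison manifold really sees only $\varphi_2^{-1}\circ\varphi_1$ on $P_1|_\Sigma$, and that the glued connection is smooth across $\Sigma$ in the collar. Both follow because $\varphi_{i*}(\Theta_i|_\Sigma)$ agree, and composing with $u$ preserves that agreement. With this in hand, linearity is immediate, since $u_*$ acts as the identity on the scalar coordinate $z$. Isometry follows because the norm $|z|$ is untouched. Finally, $u_*$ is bijective with inverse $(u^{-1})_*$, and $(uv)_*=u_*v_*$ holds at the level of representatives.
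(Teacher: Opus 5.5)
Your proposal is correct and follows the same route as the paper. The paper's one-line proof states that applying the same boundary gauge transformation to both fillings preserves the comparison $4$-manifold and its phase, and you make this explicit through the identity $(u\circ\varphi_2)^{-1}\circ(u\circ\varphi_1)=\varphi_2^{-1}\circ\varphi_1$; your further checks (that $(X,P,\Theta,u\circ\varphi)$ is a filling of $(Q,u\cdot\eta)$, and that linearity and isometry hold because $z$ is untouched) supply details the paper leaves implicit.
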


\begin{proof}
If two fillings are equivalent, then applying the same boundary gauge
transformation to both preserves the comparison $4$--manifold and the
comparison phase. Hence the map is well defined.
\end{proof}

\begin{definition}
\label{def:cylinder-transport}
Let $\eta_t$, $t\in[0,1]$, be a smooth path of connections on the fixed trivializable
bundle $Q\to \Sigma$.
Let $C=[0,1]\times \Sigma$ and let $\Xi$ be any connection on $\pi_\Sigma^*Q\to C$
restricting to $\eta_0$ and $\eta_1$ at the two boundary components.
For $[(X,P,\Theta,\varphi),z]\in L_{(Q,\eta_0)}$ define
\[
\operatorname{PT}_{\eta_0\to\eta_1}
\bigl([(X,P,\Theta,\varphi),z]\bigr)
:=
[(X\cup_\Sigma C,\;P\cup_Q \pi_\Sigma^*Q,\;\Theta\cup\Xi,\;\varphi_1),z],
\]
where $\varphi_1$ is the induced identification at $t=1$.
\end{definition}

\begin{prop}
\label{prop:boundary-line-connection-curvature}
The transport of Definition~\ref{def:cylinder-transport} is independent of the
choice of $\Xi$ and defines a natural unitary connection on the bundle of
lines $L_{(Q,\eta)}$ over $\mathcal A_Q$.
In any local trivialization of $Q$, the corresponding connection $1$--form is
\[
\alpha_\eta(\dot\eta)
=
-\pi i\int_\Sigma K(\eta\wedge \dot\eta),
\]
and its curvature is
\[
F_\alpha(\dot\eta_1,\dot\eta_2)
=
-2\pi i\int_\Sigma K(\dot\eta_1\wedge \dot\eta_2).
\]
\end{prop}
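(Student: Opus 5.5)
The proof has four parts: independence of the choice of \(\Xi\), linearity and unitarity of the transport, the local connection \(1\)--form, and its curvature. In each part the key tool is the comparison \(4\)--manifold of Definition~\ref{def:toral-boundary-line-filling}. In every case I will choose this manifold explicitly, and the well-definedness statement Proposition~\ref{prop:well-defined-boundary-line} lets me do so.

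\emph{Independence of \(\Xi\).} Let \(\Xi,\Xi'\) be two connections on \(\pi_\Sigma^*Q\to C\) with the same boundary values. To compare the two extended fillings \(X\cup_\Sigma C\), I take
\[
W=(X\times[0,1])\cup(C\times[0,1]),
\]
suitably smoothed, with boundary \((X\cup C)\cup(-(X\cup C))\). On it I put a connection that is constant in the \(X\)-direction and interpolates between \(\Xi\) and \(\Xi'\), say \(\Xi_s=(1-s)\Xi+s\Xi'\). The \(X\times[0,1]\) part is a product connection, so it contributes nothing. The remaining contribution reduces to an integral over \(C\times[0,1]\). Working in a global trivialization of \(\pi_\Sigma^*Q\) (which exists because \(Q\) is trivializable) and using Stokes, \(\int K(F\wedge F)=\int d\,K(A\wedge dA)\) becomes a boundary Chern--Simons difference
\[
\int_C K(\Xi\wedge d\Xi)-\int_C K(\Xi'\wedge d\Xi'),
\]
plus terms on \(\partial C\times[0,1]\). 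The latter vanish because the connection there is \(s\)-independent.

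I expect the obstacle to be showing that this Chern--Simons difference gives trivial phase. My first attempt is a Stokes argument on \(C\) using \(\Xi-\Xi'=\beta\) with \(\beta|_{\partial C}=0\):
\[
\int_C K(\Xi'\wedge d\Xi')-K(\Xi\wedge d\Xi)=\int_C\bigl(2K(\beta\wedge F_\Xi)+K(\beta\wedge d\beta)\bigr).
\]
This is not zero for an arbitrary \(\Xi\). The way out is to read "independent of \(\Xi\)" as independent among the choices compatible with the given path, i.e.\ \(\Xi\) with no \(dt\)-component up to gauge. Alternatively, I would define the transport by the canonical choice \(\Xi=\eta_t\) with no \(dt\)-component, and show that any two such choices agree by a homotopy-rel-endpoints argument. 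For this canonical choice \(F_\Xi=dt\wedge\dot\eta_t+d_\Sigma\eta_t\), and \(d_\Sigma\eta_t=0\) modulo closed contributions in the abelian case. Both readings still need to be checked.

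\emph{Linearity, unitarity and the \(1\)--form.} Linearity and unitarity are immediate, since the scalar \(z\) is carried unchanged and the comparison phases lie in \(U(1)\). For the connection \(1\)--form, I fix a reference filling and trivialize \(L_{(Q,\eta)}\) along the affine space \(\mathcal A_Q\) by a straight-line path from a base point \(\eta_0\). I then compute the phase acquired by transporting along \(\eta_t=\eta+t\epsilon\dot\eta\) and comparing with the trivializing section. Using \(\Xi=\eta_t\) on the cylinder, the relevant exponent is
\[
\frac{\pi i}{(2\pi)^2}\int K(F\wedge F),
\]
which to first order in \(\epsilon\) reduces via Stokes to \(\pm\pi i\int_\Sigma K(\eta\wedge\dot\eta)\) (the Chern--Simons functional's variation). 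Matching the sign with the orientation convention \(\partial W=X_1\cup(-X_2)\) gives \(\alpha_\eta(\dot\eta)=-\pi i\int_\Sigma K(\eta\wedge\dot\eta)\).

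\emph{Curvature.} Since \(\mathcal A_Q\) is affine, constant vector fields commute, so
\[
F_\alpha(\dot\eta_1,\dot\eta_2)=\dot\eta_1\bigl(\alpha(\dot\eta_2)\bigr)-\dot\eta_2\bigl(\alpha(\dot\eta_1)\bigr)=-\pi i\int_\Sigma\bigl(K(\dot\eta_1\wedge\dot\eta_2)-K(\dot\eta_2\wedge\dot\eta_1)\bigr).
\]
By symmetry of \(K\) and antisymmetry of the wedge product this equals \(-2\pi i\int_\Sigma K(\dot\eta_1\wedge\dot\eta_2)\). This is \(-2\pi i\) times \(\omega_{\Sigma,K}\), consistent with Proposition~\ref{prop:toral-phase-space}.
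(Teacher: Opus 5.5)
Your objection to the first assertion is correct, and the paper's own proof does not overcome it. The paper compares two cylinder connections $\Xi,\Xi'$ by gluing ``the cylinders with opposite orientation to obtain a closed $4$--manifold'' and then invoking the evenness argument of Proposition~\ref{prop:well-defined-boundary-line}. But $C\cup_{\partial C}(-C)\cong S^1\times\Sigma$ is a closed $3$--manifold. The comparison phase of the two transported fillings is its Chern--Simons invariant, since the $X$-contributions cancel. Up to an orientation sign in the exponent, this phase is $\exp\bigl(\tfrac{\pi i}{(2\pi)^2}\int_C\bigl(2K(\beta\wedge F_\Xi)+K(\beta\wedge d\beta)\bigr)\bigr)$ with $\beta=\Xi'-\Xi$, which is not $1$ in general.

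Indeed, the literal claim contradicts the rest of the proposition. Definition~\ref{def:cylinder-transport} constrains $\Xi$ only at $t=0,1$. Independence of $\Xi$ would therefore make $\operatorname{PT}_{\eta_0\to\eta_1}$ path-independent, hence the connection flat, contradicting the nonzero curvature formula.

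The genuine gap in your proposal is that you then leave the repaired statement unverified (``both readings still need to be checked''). Moreover, your second reading fails if ``homotopy rel endpoints'' means comparing different paths from $\eta_0$ to $\eta_1$: those transports differ by exactly the holonomy that your curvature computation later produces.

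The missing step is short with your own formula. Let $\Xi_0$ be $\eta_t$ viewed as a connection on $C$ with no $dt$-component. Allow $\Xi=\Xi_0+\omega$, where $\omega$ is a closed $\mathfrak t$-valued $1$-form with $\omega|_{\partial C}=0$; this covers $\omega=u^*\vartheta$ for gauge transformations of $\pi_\Sigma^*Q$ trivial on $\partial C$. Then $K(\omega\wedge d\omega)=0$ and, in a global trivialization, $\int_C K(\omega\wedge d\Xi_0)=-\int_{\partial C}K(\omega\wedge\Xi_0)=0$, so the phase is $1$.

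It is this Stokes argument, not flatness, that disposes of $d_\Sigma\eta_t$. Your remark that $d_\Sigma\eta_t=0$ is wrong, since $\eta_t$ is an arbitrary point of $\mathcal A_Q$, and that term really does pair with the $dt$-component of $\omega$. Two further checks complete the argument:
\begin{itemize}
\item For $\omega=0$ the cylinder term is the line integral $-\int_0^1\!\int_\Sigma K(\eta_t\wedge\dot\eta_t)\,dt$ (times the normalization), hence reparametrization invariant.
\item $\operatorname{PT}$ respects the equivalence relation, because attaching $C\times[0,1]$ to a comparison manifold $W$, with the connection pulled back from $C$, does not change $\int K(F\wedge F)$.
\end{itemize}

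Your treatment of the $1$-form and curvature follows the paper's route: straight-line cylinder, first-order expansion, and constant vector fields on the affine space. Your curvature algebra is right, but three points need care:
\begin{itemize}
\item The stated $\alpha$ refers to the trivialization of the line induced by the trivialization of $Q$. This coincides with your radial gauge only when the base point is the product connection; otherwise $\alpha$ shifts by the exact form $d\bigl(\pi i\int_\Sigma K(\eta_0\wedge\eta)\bigr)$.
\item The literal first-order expansion of $\frac{\pi i}{(2\pi)^2}\int K(F\wedge F)$ carries a factor $(2\pi)^{-2}$, so the stated normalization holds for $\eta/2\pi$.
\item The overall signs depend on whether $\alpha$ is the logarithmic derivative of the transport ($\nabla=d-\alpha$) or its negative. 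You should fix this convention explicitly rather than ``match'' the sign.
\end{itemize}
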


\begin{proof}
If $\Xi$ and $\Xi'$ are two cylinder connections with the same endpoint values,
glue the cylinders with opposite orientation to obtain a closed $4$--manifold.
By the argument of Proposition~\ref{prop:well-defined-boundary-line},
the resulting comparison phase is $1$. Hence the parallel transport is independent
of the choice of cylinder connection.

To compute the local connection form, fix a local trivialization of $Q$.
Then $\mathcal A_Q$ is an affine space modeled on $\Omega^1(\Sigma;\mathfrak t)$.
For a tangent vector $\dot\eta$, choose the straight-line cylinder connection $\Xi=\eta+t\dot\eta$
on $[0,\varepsilon]\times \Sigma$. Expanding the induced transport to first order
in $\varepsilon$ gives the formula for $\alpha(\dot\eta)$.
Since the tangent vectors on the affine space may be taken constant, we obtain
\[
d\alpha(\dot\eta_1,\dot\eta_2)
=
\dot\eta_1\!\left(\alpha(\dot\eta_2)\right)
-
\dot\eta_2\!\left(\alpha(\dot\eta_1)\right)
=
-2\pi i\int_\Sigma K(\dot\eta_1\wedge \dot\eta_2).
\]
This is the curvature.
\end{proof}

This is the key prequantization statement: the connection on the boundary line has curvature equal to \(-2\pi i\) times the symplectic form. In other words, the Chern--Simons line construction produces exactly the prequantum geometry
required by geometric quantization.

\begin{prop}
\label{prop:canonical-toral-prequantum}
Let $\mathcal A_Q^{\mathrm{flat}}\subset \mathcal A_Q$ be the flat locus.
Then the gauge action of Proposition~\ref{prop:gauge-action-boundary-line}
preserves the connection and descends the boundary line to a Hermitian line bundle
\[
\mathcal L_{\Sigma,K}\longrightarrow
\mathcal M_\Sigma(\mathbb T)=H^1(\Sigma;\mathfrak t)/H^1(\Sigma;\Lambda)
\]
with unitary connection $\nabla_{\Sigma,K}$ satisfying $F_{\nabla_{\Sigma,K}}=-2\pi i\,\omega_{\Sigma,K}.$ Hence $\mathcal L_{\Sigma,K}$ is the canonical prequantum line bundle of the
symplectic torus $\bigl(\mathcal M_\Sigma(\mathbb T),\omega_{\Sigma,K}\bigr)$.
\end{prop}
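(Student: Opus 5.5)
The descent has four parts: gauge invariance of the connection, reduction of the flat locus to the cohomology torus, the residual lattice action, and the curvature computation.

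\emph{Gauge invariance.} First I will show that the isometries $u_*$ of Proposition~\ref{prop:gauge-action-boundary-line} intertwine the cylinder transport of Definition~\ref{def:cylinder-transport}. For a path $\eta_t$ with cylinder connection $\Xi$, the path $u\cdot\eta_t$ admits the cylinder connection $(\pi_\Sigma^*u)\cdot\Xi$. Gluing the cylinder onto $(X,P,\Theta,u\circ\varphi)$ gives a filling isomorphic to the one obtained by gluing $\Xi$ onto $(X,P,\Theta,\varphi)$ and then applying $u$. By the independence of $\Xi$ in Proposition~\ref{prop:boundary-line-connection-curvature}, we get $u_*\circ\operatorname{PT}_{\eta_0\to\eta_1}=\operatorname{PT}_{u\eta_0\to u\eta_1}\circ u_*$. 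Hence $\mathcal G_\Sigma$ acts on the line bundle $L\to\mathcal A_Q$ by connection-preserving isometries covering its action on $\mathcal A_Q$. The flat locus is invariant, since $F_{u\cdot\eta}=F_\eta$ for abelian $\mathbb T$.

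\emph{Reduction to the identity component.} Fix the trivialization of $Q$, so that $\mathcal A_Q^{\mathrm{flat}}=\Omega^1_{\mathrm{cl}}(\Sigma;\mathfrak t)$. The identity component $\mathcal G_\Sigma^0=\{e^{f}: f\in\Omega^0(\Sigma;\mathfrak t)\}$ acts by $\eta\mapsto\eta+df$, and its quotient is $H^1(\Sigma;\mathfrak t)$. I must check that $\mathcal G^0_\Sigma$ acts on fibers compatibly with transport along gauge orbits, i.e.\ with trivial holonomy on the flat locus. On the orbit, take the straight-line cylinder $\Xi=\eta+t\,df$. The phase is $\exp(-\pi i\int_\Sigma K(\eta\wedge df))$, which equals $1$ by Stokes' theorem because $d\eta=0$. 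Likewise the curvature $-2\pi i\int_\Sigma K(df_1\wedge\dot\eta)$ vanishes on the orbit directions. So the restricted bundle descends to a line bundle with connection over $H^1(\Sigma;\mathfrak t)$. The curvature descends to $-2\pi i\,\omega_{\Sigma,K}$, because the formula in Proposition~\ref{prop:boundary-line-connection-curvature} depends only on cohomology classes on closed forms.

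\emph{Lattice action.} The component group $\pi_0\mathcal G_\Sigma\cong H^1(\Sigma;\Lambda)$ then acts on this bundle over $H^1(\Sigma;\mathfrak t)$ by connection-preserving isometries lifting the translations. The quotient is a Hermitian line bundle $\mathcal L_{\Sigma,K}$ with connection $\nabla_{\Sigma,K}$ on $\mathcal M_\Sigma(\mathbb T)$, and its curvature is still $-2\pi i\,\omega_{\Sigma,K}$ because curvature is local.

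\emph{Main obstacle.} The hard step is the cocycle, i.e.\ genuine action, property of the $H^1(\Sigma;\Lambda)$ lift. A large gauge transformation $u$ with nontrivial class need not act trivially on the fiber over a fixed point. What must hold is $u_*\circ v_*=(uv)_*$ and that the identity component acts trivially after descent. The first is immediate from the definition $u\cdot\varphi=u\circ\varphi$, which is a strict group action. For the second, I would compare $(X,P,\Theta,e^f\circ\varphi)$ with the transported filling via the mapping-cylinder $4$-manifold $[0,1]\times X$ carrying the gauge transformation. Its Chern--Weil integral reduces to $\int_\Sigma K(\eta\wedge df)=0$, and the evenness of $K$ kills the remaining integral ambiguity, exactly as in Proposition~\ref{prop:well-defined-boundary-line}.

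Finally I will note that $[\omega_{\Sigma,K}]$ is integral by Proposition~\ref{prop:toral-phase-space}, consistent with the existence of the prequantum bundle. The bundle is canonical because it is built from fillings with no auxiliary choices.
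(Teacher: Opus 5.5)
Your proposal is correct and follows the same equivariant-descent route as the paper: a $\mathcal G_\Sigma$-equivariant Hermitian line bundle over $\mathcal A_Q^{\mathrm{flat}}$ whose curvature is pulled back from $\omega_{\Sigma,K}$, which then descends to the quotient $\mathcal M_\Sigma(\mathbb T)$. Your version is more complete than the paper's one-line argument. You split off $\mathcal G_\Sigma^0$ and use the $[0,1]\times X$ Chern--Weil comparison to show that $e^{f}$ acts by parallel transport along its orbit, because the obstruction $\int_\Sigma K(F_\eta\, f)$ vanishes on the flat locus, and that constant gauge transformations act trivially. This is the vertical compatibility condition that ``curvature is a pullback'' does not guarantee by itself, and which the paper leaves implicit.
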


\begin{proof}
Gauge equivariance of the filling model gives a gauge-equivariant Hermitian bundle
over $\mathcal A_Q^{\mathrm{flat}}$.
By Proposition~\ref{prop:boundary-line-connection-curvature}, the curvature is the
pullback of the $2$--form
\[
\omega_{\Sigma,K}(\dot\eta_1,\dot\eta_2)
=
\int_\Sigma K(\dot\eta_1\wedge \dot\eta_2)
\]
on the quotient $\mathcal A_Q^{\mathrm{flat}}/\mathcal G_\Sigma
\cong \mathcal M_\Sigma(\mathbb T)$.
Hence the bundle and connection descend, with the stated curvature.
\end{proof}

We have now passed from the classical boundary data to a canonical Hermitian
line bundle \(\mathcal L_{\Sigma,K}\to \mathcal M_\Sigma(\mathbb T)\) with unitary
connection \(\nabla_{\Sigma,K}\) satisfying
\(F_{\nabla_{\Sigma,K}}=-2\pi i\,\omega_{\Sigma,K}\). This bundle
\(\mathcal L_{\Sigma,K}\) is the basic input for the quantum theory.
All subsequent quantization spaces, BKS operators, and Chern--Simons sections are
formed using $\mathcal L_{\Sigma,K}$.

\subsection{Real Polarizations and Bohr--Sommerfeld Leaves}

To quantize the symplectic torus \(\mathcal M_\Sigma(\mathbb T)\), we choose a
rational Lagrangian subspace \(L\subset H^1(\Sigma;\mathbb R)\). It determines
the translation-invariant real polarization
\(\mathcal P_L\). The
resulting leaves are compact subtori, and only some of them satisfy the
Bohr--Sommerfeld condition for the prequantum line bundle \(\mathcal L_{\Sigma,K}\). As we will show soon in Proposition~\ref{prop:toral-BS}, geometric quantization naturally produces the finite group
\[
G_K=\Lambda^*/K\Lambda,
\]
whose order is exactly the number of Bohr--Sommerfeld leaves. Thus \(G_K\) is the basic discrete invariant of our model. Let $L\subset H^1(\Sigma;\mathbb R)$ be a rational Lagrangian for the
cup-product symplectic form. Define
\[
\mathcal P_L:=L\otimes \mathfrak t\subset H^1(\Sigma;\mathfrak t).
\]
This determines a translation-invariant real polarization of
$\mathcal M_\Sigma(\mathbb T)$. Let
\[
K^\sharp:\Lambda\longrightarrow \Lambda^*:=\Hom(\Lambda,\mathbb Z),
\qquad
K^\sharp(x)(y)=K(x,y).
\]
We also write $K\Lambda:=K^\sharp(\Lambda)\subset\Lambda^*$.

\begin{prop}
\label{prop:toral-canonical-leafwise-half-density}
For each leaf $\ell$ of the polarization $\mathcal P_L$, the restriction of the half-density bundle $|\det \mathcal P_L^*|^{1/2}\big|_\ell$ carries a canonical flat connection. Moreover, the normalized
translation-invariant density of total volume $1$ on the compact torus $\ell$
admits a canonical square root
\[
\kappa_\ell^{1/2}\in
\Gamma_{\mathrm{flat}}\!\left(\ell;\,|\det \mathcal P_L^*|^{1/2}\right),
\]
which is a covariantly constant nowhere-vanishing section.
\end{prop}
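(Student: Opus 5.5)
The plan is to exploit translation invariance throughout. Since $\mathcal P_L=L\otimes\mathfrak t$ is a constant subspace of $H^1(\Sigma;\mathfrak t)$, the distribution it defines on $\mathcal M_\Sigma(\mathbb T)=H^1(\Sigma;\mathfrak t)/H^1(\Sigma;\Lambda)$ is the image of a fixed vector subspace under the translation-invariant parallelization of the torus. Hence the bundle $|\det\mathcal P_L^*|^{1/2}$ is canonically the trivial bundle with fibre $|\det\mathcal P_L^*|^{1/2}$, a fixed one-dimensional real vector space. On each leaf $\ell$ we take the flat connection to be the trivial connection in this trivialization. It is worth noting that this coincides with the Bott partial connection: for $X,Y$ tangent to $\mathcal P_L$, $\nabla_X Y=\mathrm{pr}[X,Y']$, and $[X,Y']=0$ for constant vector fields. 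So the connection is canonical, not merely a choice.

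Next we must check that each leaf is a compact subtorus. Rationality of $L$ means $L\cap H^1(\Sigma;\mathbb Z)$ is a lattice of full rank in $L$. Tensoring with $\Lambda$ shows that $\Lambda_L:=\mathcal P_L\cap H^1(\Sigma;\Lambda)$ is a full lattice in $\mathcal P_L$. Consequently every leaf $\ell=x+\mathcal P_L/\Lambda_L$ is a compact torus, canonically a translate of $\mathcal P_L/\Lambda_L$.

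On the vector space $\mathcal P_L$ there is a unique translation-invariant density $\mu_L$ giving the fundamental domain of $\Lambda_L$ volume $1$. Concretely, $\mu_L=|e^1\wedge\cdots\wedge e^m|$ for any $\mathbb Z$-basis $(e_i)$ of $\Lambda_L$; it is independent of the basis because changes of basis lie in $GL_m(\mathbb Z)$ and have $|\det|=1$. Translated to $\ell$, $\mu_L$ is the normalized density $\kappa_\ell$ of total volume $1$. Being constant in the trivialization, it is covariantly constant.

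For the square root, note that a density is a positive section of $|\det\mathcal P_L^*|$. The fibre $|\det\mathcal P_L^*|^{1/2}$ has a canonical positive half, since it is defined as the line of functions on frames with homogeneity $|\det|^{1/2}$, and these carry a natural orientation by positivity. We therefore set $\kappa_\ell^{1/2}(e_1,\dots,e_m):=\mu_L(e_1,\dots,e_m)^{1/2}$, the unique positive element whose square is $\kappa_\ell$. This section is nowhere vanishing and constant, hence flat.

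The main obstacle is ensuring that no hidden choice enters. The normalization must not depend on a basis of $\Lambda_L$, and the unimodularity argument above handles this. The sign of the square root must also be fixed canonically, which the positivity convention for half-densities handles. Finally, $\Lambda_L$ must have full rank, and this is exactly where rationality of $L$ is used.
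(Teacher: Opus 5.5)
Your proposal is correct and follows essentially the same route as the paper: translation invariance makes the leafwise (Bott) connection on $|\det\mathcal P_L^*|^{1/2}$ flat. Each compact leaf then carries a unique invariant density of total volume $1$, and its positive square root is the required parallel, nowhere-vanishing half-density. You also spell out two points the paper takes for granted: rationality of $L$ makes $\mathcal P_L\cap H^1(\Sigma;\Lambda)$ a full lattice in $\mathcal P_L$, so the leaves are compact tori, and the covolume-one normalization does not depend on the choice of lattice basis.
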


\begin{proof}
The leaves of $\mathcal P_L$ are compact tori. Since $\mathcal P_L$ is
translation-invariant, it is locally spanned by translation-invariant
Hamiltonian vector fields, and the standard leafwise connection on the density
bundle is flat. Each compact torus carries a unique translation-invariant
density of total volume $1$, and this density is parallel for the leafwise flat
connection. Taking its positive square root gives the required canonical
half-density.
\end{proof}

To describe the Bohr--Sommerfeld leaves explicitly, fix a symplectic basis of
$H^1(\Sigma;\mathbb Z)$ adapted to the rational Lagrangian $L$.

\begin{prop}
\label{prop:toral-BS}
Let $\Sigma$ have genus $g$. Then the Bohr--Sommerfeld leaves of the
polarization determined by $L$ form a torsor for $G_K^g=(\Lambda^*/K\Lambda)^g.$
In particular,
\[
\#\,\mathcal{BS}(\Sigma,L)=|G_K|^g=|\det K|^g.
\]
\end{prop}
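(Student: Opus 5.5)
We work in linear coordinates adapted to the polarization and reduce the Bohr--Sommerfeld condition to an integrality condition on the transverse coordinate. Fix a symplectic basis $a_1,\dots,a_g,b_1,\dots,b_g$ of $H^1(\Sigma;\mathbb Z)$ with $L=\operatorname{span}_{\mathbb R}(a_1,\dots,a_g)$; this is possible because $L$ is rational, so $L\cap H^1(\Sigma;\mathbb Z)$ is a primitive Lagrangian sublattice. Every class in $H^1(\Sigma;\mathfrak t)$ can then be written uniquely as
\[
\textstyle\sum_i a_i\otimes x_i+\sum_i b_i\otimes y_i,\qquad x_i,y_i\in\mathfrak t .
\]
In these coordinates
\[
\omega_{\Sigma,K}=\textstyle\sum_i K(dx_i\wedge dy_i),
\]
the lattice is $H^1(\Sigma;\Lambda)=\{x_i,y_i\in\Lambda\}$, and $\mathcal P_L=L\otimes\mathfrak t$ is spanned by the $x$-directions. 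The leaves of $\mathcal P_L$ on $\mathcal M_\Sigma(\mathbb T)$ are the subtori $\ell_y=\{y \text{ fixed}\}\cong(\mathfrak t/\Lambda)^g$, and they are parametrized by $y\in(\mathfrak t/\Lambda)^g$.

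\textbf{Step 1: local Bohr--Sommerfeld condition.} By Proposition~\ref{prop:canonical-toral-prequantum}, the pullback of $\mathcal L_{\Sigma,K}$ to $H^1(\Sigma;\mathfrak t)$ has curvature $-2\pi i\,\omega_{\Sigma,K}$. We may therefore gauge its connection form to
\[
\alpha=-\pi i\textstyle\sum_i\bigl(K(x_i,dy_i)-K(y_i,dx_i)\bigr),
\]
in agreement with Proposition~\ref{prop:boundary-line-connection-curvature}, or equivalently to $2\pi i\sum_i K(y_i,dx_i)$ up to an exact term. Restricted to a leaf, where $dy=0$, the connection is flat with form $2\pi i\sum_i K(y_i,dx_i)$. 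The holonomy around the loop $x_i\mapsto x_i+\lambda$, $\lambda\in\Lambda$, is then $\exp(2\pi i K(y_i,\lambda))$ times a correction coming from the descent data, i.e.\ from the multiplier of the lattice action on the line.

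\textbf{Step 2: trivial leafwise holonomy.} The leaf $\ell_y$ is Bohr--Sommerfeld exactly when all these holonomies are $1$. Ignoring the multiplier, this means $K(y_i,\lambda)\in\mathbb Z$ for all $\lambda\in\Lambda$, i.e.\ $K^\sharp(y_i)\in\Lambda^*$, so $y_i\in (K^\sharp)^{-1}(\Lambda^*)$. Modulo $\Lambda$ this gives $y_i\in K^{-1}\Lambda^*/\Lambda\cong\Lambda^*/K\Lambda=G_K$, the isomorphism being induced by $K^\sharp$.

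\textbf{Step 3: the multiplier, which is the main obstacle.} The descended line bundle is defined by a multiplier $e_\mu(v)=\epsilon(\mu)\exp(\pi i\,\omega(\mu,v))$, where $\epsilon:H^1(\Sigma;\Lambda)\to U(1)$ is a character-like sign. The worry is that $\epsilon$ shifts the solution set. Because $K$ is even, $\tfrac12K(\lambda,\lambda)\in\mathbb Z$, so $\epsilon$ restricted to the $a$-sublattice is a genuine character $\lambda\mapsto e^{2\pi i\langle c,\lambda\rangle}$ for some fixed $c\in\Lambda^*\otimes\mathbb R$. Its only effect is to replace the condition $K^\sharp(y_i)\in\Lambda^*$ by $K^\sharp(y_i)\in\Lambda^*+c_i$. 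This is an affine translate, and it explains why the statement says \emph{torsor} rather than group. The key check is that $\epsilon$ is a character on $L\cap H^1(\Sigma;\Lambda)$. The cocycle condition gives $\epsilon(\mu+\nu)=\epsilon(\mu)\epsilon(\nu)(-1)^{\omega(\mu,\nu)}$, and $\omega(\mu,\nu)=0$ for $\mu,\nu$ in the isotropic sublattice, so $\epsilon$ is multiplicative there. One should also verify that the gauge choice in Step 1 is compatible with this multiplier, since there the connection was pinned down only up to an exact term.

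\textbf{Step 4: torsor structure and count.} Translations $y_i\mapsto y_i+K^{-1}\mu_i$ with $\mu_i\in\Lambda^*$ preserve the condition of Step 3, and they act on it freely and transitively modulo $K\Lambda$. So the Bohr--Sommerfeld set $\mathcal{BS}(\Sigma,L)$ is a $G_K^g$-torsor, and
\[
\#\,\mathcal{BS}(\Sigma,L)=|G_K|^g=|\Lambda^*/K\Lambda|^g=|\det K|^g,
\]
using the Smith normal form of $K$ for the identity $|\Lambda^*/K\Lambda|=|\det K|$.
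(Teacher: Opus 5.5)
Your proof is correct and follows the same route as the paper. Both arguments use adapted symplectic coordinates $(x,y)$, the leafwise flat connection proportional to $K(y_j,dx_j)$, the integrality condition $K^\sharp(y_j)\in\Lambda^*$, the isomorphism $K^{-1}\Lambda^*/\Lambda\cong\Lambda^*/K\Lambda$ induced by $K^\sharp$, and $|\Lambda^*/K\Lambda|=|\det K|$.

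Your Step 3 is a refinement the paper does not make. The paper simply fixes ``the chosen normalization'' in which $x$-translation by $\lambda$ acts on fibers by $\exp(-\pi i\sum_j K(\lambda_j,y_j))$ with no constant character, so the origin leaf is Bohr--Sommerfeld and ``torsor'' only reflects the choice of origin. Your observation that the semicharacter is a genuine character on the isotropic sublattice shows that the count and the torsor structure do not depend on that normalization. Two small corrections: this multiplicativity comes from isotropy, not from evenness of $K$, and the semicharacter is $U(1)$-valued rather than a sign. The gauge-compatibility check you flag is routine, since holonomy is gauge invariant once the multiplier is transformed together with the connection.
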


\begin{proof}
Choose a symplectic basis $(a_1,\dots,a_g,b_1,\dots,b_g)$ of $H^1(\Sigma;\mathbb Z)$ such that $$L=\operatorname{span}_{\mathbb R}\{a_1,\dots,a_g\},\qquad
L'=\operatorname{span}_{\mathbb R}\{b_1,\dots,b_g\}.$$
Since $L$ is rational and Lagrangian, after replacing the basis of $L\cap H^1(\Sigma;\mathbb Z)$ if
necessary, we may choose such a basis adapted to $L$. Using the decomposition
\[
H^1(\Sigma;\mathfrak t)\cong (L\otimes \mathfrak t)\oplus (L'\otimes \mathfrak t),
\]
every point of $\mathcal M_\Sigma(\mathbb T)=H^1(\Sigma;\mathfrak t)/H^1(\Sigma;\Lambda)$ can be represented by coordinates
\[
(x_1,\dots,x_g;y_1,\dots,y_g)\in \mathfrak t^g\times \mathfrak t^g,
\]
where the $x_i$ are coordinates along $L\otimes\mathfrak t$ and the $y_i$ are coordinates along
$L'\otimes\mathfrak t$. The polarization $\mathcal P_L=L\otimes\mathfrak t$ is tangent to the $x$--directions, so its leaves are obtained by
fixing the $y$--coordinates. Hence each leaf is of the form
\[
\ell_y=\{(x,y):x\in \mathfrak t^g\}/(L\otimes\mathfrak t\cap H^1(\Sigma;\Lambda)),
\]
and therefore is a translate of the compact torus
\[
\mathcal P_L/(\mathcal P_L\cap H^1(\Sigma;\Lambda))\cong (\mathfrak t/\Lambda)^g=\mathbb T^g.
\]

We now compute the Bohr--Sommerfeld condition. By the explicit construction of the prequantum line
bundle $L_{\Sigma,K}$, its curvature is
\(-2\pi i\,\omega_{\Sigma,K},\)
where in the above coordinates
\[
\omega_{\Sigma,K}
=
\sum_{j=1}^g K(dx_j\wedge dy_j).
\]
Accordingly, one may choose on the universal cover $H^1(\Sigma;\mathfrak t)\cong \mathfrak t^g\times\mathfrak t^g$, a translation-invariant connection $1$--form for the prequantum bundle of the form
\[
\alpha
=
\pi i\sum_{j=1}^g\bigl(K(x_j,dy_j)-K(y_j,dx_j)\bigr),
\]
whose exterior derivative is $d\alpha=-2\pi i\,\omega_{\Sigma,K}$. Restricting to the leaf $\ell_y$, the
coordinates $y_j$ are constant, so
\[
\alpha|_{\ell_y}=-\pi i\sum_{j=1}^g K(y_j,dx_j).
\]
To make the lattice automorphy factor explicit, let  $\lambda=(\lambda_1,\dots,\lambda_g)\in \Lambda^g$ and denote by \(\tau_\lambda\) the translation $\tau_\lambda(x,y)=(x+\lambda,y)$ of the universal cover \(H^1(\Sigma;\mathfrak t)\cong \mathfrak t^g\times \mathfrak t^g\).
A direct computation gives
\[
\tau_\lambda^*\alpha-\alpha
=
\pi i\sum_{j=1}^g K(\lambda_j,dy_j)
=
d\!\left(\pi i\sum_{j=1}^g K(\lambda_j,y_j)\right).
\]
Thus, with the chosen normalization of \(L_{\Sigma,K}\), translation by \(\lambda\) identifies the
fiber over \((x+\lambda,y)\) with the fiber over \((x,y)\) by multiplication by
\[
\exp\!\left(-\pi i\sum_{j=1}^g K(\lambda_j,y_j)\right).
\]

Now fix a leaf \(\ell_y\). Since the coordinates \(y_j\) are constant along \(\ell_y\), the
restriction of the connection form is
\[
\alpha|_{\ell_y}=-\pi i\sum_{j=1}^g K(y_j,dx_j).
\]
Hence a covariantly constant section on the universal cover of \(\ell_y\) has the form
\[
s_y(x)=
\exp\!\left(\pi i\sum_{j=1}^g K(y_j,x_j)\right)c,
\qquad c\in \mathbb C.
\]
For such a section one has
\[
s_y(x+\lambda)
=
\exp\!\left(\pi i\sum_{j=1}^g K(y_j,\lambda_j)\right)s_y(x).
\]
Therefore \(s_y\) descends to a well-defined flat section on the quotient leaf if and only if
\[
\exp\!\left(\pi i\sum_{j=1}^g K(y_j,\lambda_j)\right)
=
\exp\!\left(-\pi i\sum_{j=1}^g K(\lambda_j,y_j)\right)
\]
for every \(\lambda\in\Lambda^g\). Since \(K\) is symmetric, this is equivalent to
\[
\exp\!\left(2\pi i\sum_{j=1}^g K(y_j,\lambda_j)\right)=1
\qquad\text{for all }\lambda\in\Lambda^g.
\]
Because the \(\lambda_j\) are independent, this holds if and only if
\[
K(y_j,\lambda_j)\in\mathbb Z
\qquad\text{for all }\lambda_j\in\Lambda,\ \ j=1,\dots,g.
\]
Equivalently, $Ky_j\in\Lambda^*\text{ for each }j,$ or, what is the same, $y_j\in K^{-1}\Lambda^*.$ Thus the Bohr--Sommerfeld leaves are precisely those with
\[
y\in (K^{-1}\Lambda^*)^g.
\]
Two parameters $y,y'\in (K^{-1}\Lambda^*)^g$ determine the same leaf in $\mathcal M_\Sigma(\mathbb T)$ if and only if
they differ by an element of $\Lambda^g$, since the $y$--coordinates are defined modulo the lattice in the
transverse factor. Therefore the set of Bohr--Sommerfeld leaves is naturally identified with $(K^{-1}\Lambda^*/\Lambda)^g.$ Multiplication by $K$ induces an isomorphism
\[
K^{-1}\Lambda^*/\Lambda \xrightarrow{\ \cong\ } \Lambda^*/K\Lambda = G_K,
\]
so we obtain $\mathcal{BS}(\Sigma,L)\cong G_K^g.$
This identification depends on the choice of origin leaf, so canonically the Bohr--Sommerfeld set is a
torsor for $G_K^g$. Finally, since $K$ is nondegenerate, $G_K=\Lambda^*/K\Lambda$ is finite of order $|G_K|=|\det K|,$ hence
\[
\#\mathcal{BS}(\Sigma,L)=|G_K|^g=|\det K|^g.
\]
\end{proof}

The finite discriminant group $G_K=\Lambda^*/K\Lambda$ is the arithmetic datum controlling the
Bohr--Sommerfeld set. In genus $g$, the allowed leaves form a $G_K^g$--torsor, so the quantum Hilbert
space has dimension $|G_K|^g=|\det K|^g$.

\subsection{The Toral Quantum Hilbert Space}

Quantization in a real polarization means taking covariantly constant sections along the polarized leaves. Because the leaves of \(\mathcal P_L\) are compact, parallel sections can occur only on Bohr--Sommerfeld leaves, and the half-density
factor \( |\det\mathcal P_L^*|^{1/2}\) provides the correct measure-theoretic normalization.

\begin{definition}
\label{def:toral-hilbert}
Let $(\Sigma,L)$ be a closed oriented surface endowed with a rational
Lagrangian $L\subset H^1(\Sigma;\mathbb R)$. For each
Bohr--Sommerfeld leaf $\ell\in \mathcal{BS}(\Sigma,L)$ define
\[
\mathscr S_\ell:=
\Gamma_{\mathrm{flat}}
\!\left(
\ell;\,
\mathcal L_{\Sigma,K}\otimes |\det\mathcal P_L^*|^{1/2}
\right).
\]
The toral Hilbert space is
\[
\mathcal H_{\mathbb T,K}(\Sigma,L)
:=
\bigoplus_{\ell\in \mathcal{BS}(\Sigma,L)}\mathscr S_\ell.
\]
\end{definition}

\begin{prop}
\label{prop:toral-BS-one-dimensional}
Let $\ell$ be a leaf of $\mathcal P_L$.
Then
\[
\Gamma_{\mathrm{flat}}
\!\left(
\ell;\,
\mathcal L_{\Sigma,K}\otimes |\det\mathcal P_L^*|^{1/2}
\right)
=
0
\]
unless $\ell$ is Bohr--Sommerfeld. If $\ell$ is Bohr--Sommerfeld, then
\[
\dim
\Gamma_{\mathrm{flat}}
\!\left(
\ell;\,
\mathcal L_{\Sigma,K}\otimes |\det\mathcal P_L^*|^{1/2}
\right)
=1.
\]
\end{prop}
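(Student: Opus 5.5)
The plan is to reduce the computation of flat sections of the tensor product to flat sections of $\mathcal L_{\Sigma,K}|_\ell$ alone, using the half-density factor. By Proposition~\ref{prop:toral-canonical-leafwise-half-density}, the bundle $|\det\mathcal P_L^*|^{1/2}|_\ell$ carries a flat connection with a global covariantly constant nowhere-vanishing section $\kappa_\ell^{1/2}$. It therefore trivializes as a flat line bundle on $\ell$. Every section of the tensor product can be written uniquely as $s\otimes\kappa_\ell^{1/2}$, and it is flat exactly when $s$ is a flat section of $\mathcal L_{\Sigma,K}|_\ell$. This gives an isomorphism
\[
\Gamma_{\mathrm{flat}}\!\left(\ell;\mathcal L_{\Sigma,K}\otimes|\det\mathcal P_L^*|^{1/2}\right)\cong\Gamma_{\mathrm{flat}}\!\left(\ell;\mathcal L_{\Sigma,K}|_\ell\right).
\]

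Next I would analyze flat sections of $\mathcal L_{\Sigma,K}|_\ell$, working in the coordinates from the proof of Proposition~\ref{prop:toral-BS}. The leaf $\ell=\ell_y$ is connected; it is a translate of $\mathbb T^g$. A flat section is therefore determined by its value at one point, so the space of flat sections has dimension at most $1$. It has dimension exactly $1$ if and only if the holonomy of the restricted connection around every loop of $\ell$ is trivial.

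To compute this holonomy, pull back to the universal cover $\mathfrak t^g$ of $\ell_y$. There the general flat section is $s_y(x)=\exp\!\bigl(\pi i\sum_j K(y_j,x_j)\bigr)c$, as computed in the proof of Proposition~\ref{prop:toral-BS}. This section descends to $\ell_y$ if and only if it is compatible with the automorphy factor $\exp\!\bigl(-\pi i\sum_j K(\lambda_j,y_j)\bigr)$ for all $\lambda\in\Lambda^g$, which is exactly the condition $\exp\!\bigl(2\pi i\sum_jK(y_j,\lambda_j)\bigr)=1$.

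\emph{Case 1: $\ell$ is not Bohr--Sommerfeld.} Then there is some $\lambda$ for which this condition fails. A descended section would need $c=\mu c$ with $\mu\neq1$, which forces $c=0$. So the space of flat sections is $0$.

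\emph{Case 2: $\ell$ is Bohr--Sommerfeld.} Then every $c\in\mathbb C$ descends, the space is spanned by the single section $s_y$, and it has dimension $1$.

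The main point needing care is the first step: the half-density factor must not contribute extra holonomy that could shift the Bohr--Sommerfeld condition. This is exactly what the flatness and canonical triviality of $\kappa_\ell^{1/2}$ in Proposition~\ref{prop:toral-canonical-leafwise-half-density} guarantee, since the leafwise connection on $|\det\mathcal P_L^*|^{1/2}$ is trivial for a translation-invariant polarization. I would state this explicitly before invoking the holonomy computation.
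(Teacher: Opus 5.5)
Your proposal is correct and follows the paper's argument. You trivialize the half-density factor by the parallel section $\kappa_\ell^{1/2}$ of Proposition~\ref{prop:toral-canonical-leafwise-half-density}, reduce to parallel sections of $\mathcal L_{\Sigma,K}|_\ell$ on the connected leaf, and identify the existence of a nonzero such section with trivial leafwise holonomy, i.e.\ the Bohr--Sommerfeld condition. The only difference is that you verify the holonomy condition explicitly via the automorphy-factor computation from the proof of Proposition~\ref{prop:toral-BS}, whereas the paper simply notes that trivial leafwise holonomy is exactly the Bohr--Sommerfeld condition.
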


\begin{proof}
By Proposition~\ref{prop:toral-canonical-leafwise-half-density}, the
half-density factor has a canonical parallel trivialization on every leaf.
Hence the existence of parallel sections is equivalent to trivial holonomy of
the leafwise flat connection on $\mathcal L_{\Sigma,K}|_\ell$.
This is exactly the Bohr--Sommerfeld condition.
On a Bohr--Sommerfeld leaf the resulting flat line bundle has trivial holonomy,
so its space of parallel sections is one-dimensional.
\end{proof}

Each Bohr--Sommerfeld leaf contributes exactly one quantum state, and non-Bohr--Sommerfeld leaves do not contribute. Thus the Hilbert space \(\mathcal H_{\mathbb T,K}(\Sigma,L)\) is  the direct sum of the one-dimensional contributions from the allowed leaves.

\begin{theorem}
\label{thm:toral-hilbert-dimension}
If $\Sigma$ has genus $g$, then
\[
\dim \mathcal H_{\mathbb T,K}(\Sigma,L)
=
|G_K|^g
=
|\det K|^g.
\]
\end{theorem}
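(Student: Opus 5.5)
The plan is to read the dimension straight off Definition~\ref{def:toral-hilbert}, which writes $\mathcal H_{\mathbb T,K}(\Sigma,L)$ as a direct sum indexed by the Bohr--Sommerfeld set $\mathcal{BS}(\Sigma,L)$. Taking dimensions termwise gives
\[
\dim \mathcal H_{\mathbb T,K}(\Sigma,L)=\sum_{\ell\in\mathcal{BS}(\Sigma,L)}\dim\mathscr S_\ell .
\]
It therefore suffices to know two things: the dimension of each summand, and the cardinality of the index set.

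For the summands, I will invoke Proposition~\ref{prop:toral-BS-one-dimensional}. Each $\ell\in\mathcal{BS}(\Sigma,L)$ is Bohr--Sommerfeld, so $\dim\mathscr S_\ell=1$. The underlying reason is as follows. By Proposition~\ref{prop:toral-canonical-leafwise-half-density}, the half-density factor has the canonical parallel trivialization $\kappa_\ell^{1/2}$. This reduces $\mathscr S_\ell$ to the space of parallel sections of $\mathcal L_{\Sigma,K}|_\ell$. That space is one-dimensional because the leaf $\ell\cong\mathbb T^g$ is connected and the holonomy is trivial. The sum above therefore collapses to $\#\mathcal{BS}(\Sigma,L)$.

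For the index set, I will apply Proposition~\ref{prop:toral-BS}. It identifies $\mathcal{BS}(\Sigma,L)$ with a torsor for $G_K^g$, namely $(K^{-1}\Lambda^*/\Lambda)^g\cong(\Lambda^*/K\Lambda)^g$, so its cardinality is $|G_K|^g$. To finish, I need the identity $|G_K|=|\Lambda^*/K\Lambda|=|\det K|$. This is a Smith normal form argument: choose bases of $\Lambda$ and $\Lambda^*$ in which $K^\sharp$ is diagonal, $\operatorname{diag}(d_1,\dots,d_n)$. Then $G_K\cong\bigoplus_i\mathbb Z/d_i$, and since $K$ is nondegenerate every $d_i\neq0$, so the order is $\prod|d_i|=|\det K|$.

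The main point requiring care is that the leaf sum is finite and the summands are genuinely nonzero. This is where the compactness of the leaves matters: a compact leaf with nontrivial holonomy carries no parallel sections at all, so the direct sum is over exactly the finite Bohr--Sommerfeld set and no distributional contributions arise. I also need the count to be independent of the choice of rational Lagrangian $L$. Proposition~\ref{prop:toral-BS} only uses a symplectic integral basis adapted to $L$, and rationality of $L$ guarantees such a basis exists, so the answer is independent of $L$. This also holds trivially for $g=0$, where the space is one-dimensional.
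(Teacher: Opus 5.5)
Your proposal is correct and follows the paper's proof: you decompose $\mathcal H_{\mathbb T,K}(\Sigma,L)$ over the Bohr--Sommerfeld leaves, use Proposition~\ref{prop:toral-BS-one-dimensional} to see that each summand is one-dimensional, and use Proposition~\ref{prop:toral-BS} to count $|G_K|^g$ leaves. Your Smith normal form argument just supplies the justification for $|\Lambda^*/K\Lambda|=|\det K|$, which the paper asserts without proof at the end of Proposition~\ref{prop:toral-BS}.
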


\begin{proof}
By Proposition~\ref{prop:toral-BS-one-dimensional}, each Bohr--Sommerfeld leaf
contributes a one-dimensional summand. By
Proposition~\ref{prop:toral-BS}, the number of such leaves is
$|G_K|^g$.
\end{proof}

The dimension formula \(\dim \mathcal H_{\mathbb T,K}(\Sigma,L)=|\det K|^g\) is the first important
numerical invariant of the toral theory. It will reappear later in the
normalization of the cylinder and in the gluing formula.

\subsection{BKS Operators for Toral Real Polarizations}
The Hilbert space \(\mathcal H_{\mathbb T,K}(\Sigma,L)\) depends a priori on the choice of rational Lagrangian \(L\subset H^1(\Sigma;\mathbb R)\). To obtain a
TQFT, one must compare these different quantizations canonically. This is the role of the Blattner--Kostant--Sternberg operator
\(F_{L_2L_1}\). Because the moduli space $\mathcal M_\Sigma(\mathbb T)$ is a compact symplectic torus and the
polarizations $\mathcal P_L$ are translation-invariant with compact leaves, the  BKS via real polarization formalism  applies unchanged in the present setting.

\begin{lemma}
\label{lem:toral-maslov-signature}
Let $\mu_\Sigma(L_1,L_2,L_3)$ denote the Maslov--Kashiwara index of the triple
$L_1,L_2,L_3 \subset H^1(\Sigma;\mathbb R)$ for the cup-product symplectic form
$\omega_\Sigma$, and define
\[
\mu_K(L_1,L_2,L_3)
:=
\mu(\mathcal P_{L_1},\mathcal P_{L_2},\mathcal P_{L_3}),
\qquad
\mathcal P_{L_i}=L_i\otimes \mathfrak t
\subset H^1(\Sigma;\mathfrak t).
\]
Then
\[
\mu_K(L_1,L_2,L_3)=\sigma(K)\,\mu_\Sigma(L_1,L_2,L_3),
\]
where $\sigma(K)$ is the signature of the real bilinear form $K$ on $\mathfrak t$.
\end{lemma}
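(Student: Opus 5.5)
The plan is to reduce the statement to a multiplicativity property of the Maslov--Kashiwara index under tensor products of symplectic spaces with quadratic spaces, by means of a diagonalization of $K$. The symplectic form on $H^1(\Sigma;\mathfrak t)\cong H^1(\Sigma;\mathbb R)\otimes\mathfrak t$ is $\omega_{\Sigma,K}=\omega_\Sigma\otimes K$. This is immediate from the definition $\omega_{\Sigma,K}(\alpha,\beta)=\int_\Sigma K(\alpha\wedge\beta)$ evaluated on decomposable tensors $a\otimes s,\ b\otimes t$, which gives $\omega_\Sigma(a,b)K(s,t)$. The Lagrangians in question are $\mathcal P_{L_i}=L_i\otimes\mathfrak t$; they are Lagrangian for $\omega_\Sigma\otimes K$ because $\dim(L_i\otimes\mathfrak t)=gn$ and the form vanishes on decomposable tensors from $L_i$. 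The Maslov--Kashiwara index depends only on the real symplectic vector space and the triple, so the problem is purely linear-algebraic and the lattice $\Lambda$ plays no role.

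\textbf{Step 1: diagonalize $K$.} By Sylvester's law, choose a real basis $e_1,\dots,e_n$ of $\mathfrak t$ with $K(e_a,e_b)=\varepsilon_a\delta_{ab}$, where $\varepsilon_a\in\{\pm1\}$ and $\sum_a\varepsilon_a=\sigma(K)$; nondegeneracy of $K$ ensures no zeros occur. The basis need not be integral, which is harmless since the index is computed over $\mathbb R$. Then
\[
(H^1(\Sigma;\mathfrak t),\omega_{\Sigma,K})\cong\bigoplus_{a=1}^n\bigl(H^1(\Sigma;\mathbb R),\varepsilon_a\omega_\Sigma\bigr)
\]
as an orthogonal symplectic direct sum, and under this isomorphism $\mathcal P_{L_i}=\bigoplus_a L_i\otimes e_a$ corresponds to $\bigoplus_a L_i$.

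\textbf{Step 2: additivity.} Apply additivity of the Maslov--Kashiwara index under symplectic direct sums: if $V=V'\oplus V''$ and each $\Lambda_i=\Lambda_i'\oplus\Lambda_i''$, then $\mu_V(\Lambda_1,\Lambda_2,\Lambda_3)=\mu_{V'}(\Lambda'_1,\Lambda'_2,\Lambda'_3)+\mu_{V''}(\Lambda''_1,\Lambda''_2,\Lambda''_3)$. This follows from Kashiwara's definition as the signature of the quadratic form
\[
Q(x_1,x_2,x_3)=\omega(x_1,x_2)+\omega(x_2,x_3)+\omega(x_3,x_1)
\]
on $\Lambda_1\oplus\Lambda_2\oplus\Lambda_3$, which splits as an orthogonal sum and hence has additive signature. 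The result is
\[
\mu_K=\sum_a\mu_{(H^1,\varepsilon_a\omega_\Sigma)}(L_1,L_2,L_3).
\]

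\textbf{Step 3: sign change.} Show that replacing $\omega$ by $-\omega$ negates the index. The same triple with the reversed form gives the quadratic form $-Q$, whose signature is $-\operatorname{sign}Q$. So each summand equals $\varepsilon_a\mu_\Sigma(L_1,L_2,L_3)$, and summing over $a$ yields $\sigma(K)\mu_\Sigma(L_1,L_2,L_3)$.

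The main obstacle is bookkeeping rather than substance. One must check that the paper's sign and normalization convention for $\mu$ coincides with Kashiwara's signature definition, or at least is additive and odd under $\omega\mapsto-\omega$; any convention of this type works. One must also confirm that the identification in Step 1 carries $\mathcal P_{L_i}$ to the direct sum of copies of $L_i$ rather than to a twisted Lagrangian. The latter holds because each $\mathcal P_{L_i}$ is a tensor product with all of $\mathfrak t$, so it is compatible with any basis of $\mathfrak t$.
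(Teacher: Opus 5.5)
Your proposal is correct and is essentially the paper's own argument. Both proofs diagonalize $K$ over $\mathbb R$ as $K(e_a,e_b)=\varepsilon_a\delta_{ab}$ and split $(H^1(\Sigma;\mathfrak t),\omega_{\Sigma,K})$ and each $\mathcal P_{L_i}$ into one copy of $(H^1(\Sigma;\mathbb R),\varepsilon_a\omega_\Sigma)$ and of $L_i$ for each $a$, so that Kashiwara's quadratic form decomposes as $\bigoplus_a\varepsilon_a Q_\Sigma$ and the signatures sum to $\sigma(K)\,\mu_\Sigma(L_1,L_2,L_3)$; the paper just merges your Steps 2 and 3 into that single orthogonal-sum statement.
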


\begin{proof}
Choose a basis $(e_1,\dots,e_r)$ of $\mathfrak t$ in which the real symmetric form $K$
is diagonal:
\[
K(e_a,e_b)=\varepsilon_a\delta_{ab},
\qquad
\varepsilon_a\in\{+1,-1\}.
\]
Then
\[
H^1(\Sigma;\mathfrak t)
\cong
\bigoplus_{a=1}^r H^1(\Sigma;\mathbb R)e_a,
\qquad
\omega_{\Sigma,K}
=
\bigoplus_{a=1}^r \varepsilon_a\,\omega_\Sigma,
\]
and for each rational Lagrangian $L_i\subset H^1(\Sigma;\mathbb R)$,
\[
\mathcal P_{L_i}
=
L_i\otimes\mathfrak t
\cong
\bigoplus_{a=1}^r L_i e_a.
\]

Let
\[
Q_\Sigma(x_1,x_2,x_3)
=
\omega_\Sigma(x_1,x_2)+\omega_\Sigma(x_2,x_3)+\omega_\Sigma(x_3,x_1)
\]
be the quadratic form whose signature is $\mu_\Sigma(L_1,L_2,L_3)$ see \cite[Secs.~IV, VI]{Manoliu1}.
Under the above decomposition, the quadratic form defining
$\mu_K(L_1,L_2,L_3)$ is the orthogonal direct sum
\[
Q_K
\cong
\bigoplus_{a=1}^r \varepsilon_a\, Q_\Sigma.
\]
Therefore
\[
\mu_K(L_1,L_2,L_3)
=
\operatorname{sgn}(Q_K)
=
\sum_{a=1}^r \varepsilon_a\,\operatorname{sgn}(Q_\Sigma)
=
\sigma(K)\,\mu_\Sigma(L_1,L_2,L_3).
\]
\end{proof}
Lemma \ref{lem:toral-maslov-signature} is the explicit \(K\)-matrix refinement of the
standard BKS Fourier kernel for transverse invariant real polarizations on a compact
symplectic torus, proved in the rank-one case by Manoliu
\cite[Secs.~IV, VI]{Manoliu1}; compare also \cite[Sec.~V]{Manoliu2}.
\begin{prop}
\label{prop:toral-BKS}
Let $\Sigma$ be a closed oriented surface, and let
$L_1,L_2,L_3\subset H^1(\Sigma;\mathbb R)$ be rational Lagrangian subspaces.
Then there exists a canonical unitary
Blattner--Kostant--Sternberg operator
\[
F_{L_2L_1}:
\mathcal H_{\mathbb T,K}(\Sigma,L_1)\longrightarrow
\mathcal H_{\mathbb T,K}(\Sigma,L_2).
\]
Moreover, for any three rational Lagrangians $L_1,L_2,L_3$,
\begin{equation}
F_{L_3L_2}\circ F_{L_2L_1}
=
e^{\frac{\pi i}{4}\mu_K(L_1,L_2,L_3)}\,F_{L_3L_1}
=
e^{\frac{\pi i}{4}\sigma(K)\mu_\Sigma(L_1,L_2,L_3)}\,F_{L_3L_1}.
\end{equation}
\end{prop}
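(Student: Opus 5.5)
We construct $F_{L_2L_1}$ as the standard BKS pairing operator for transverse, or more generally intersecting, translation-invariant real polarizations on the symplectic torus $(\mathcal M_\Sigma(\mathbb T),\omega_{\Sigma,K})$. Then we reduce the composition law to the rank-one computation of Manoliu through Lemma~\ref{lem:toral-maslov-signature}.

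\emph{Step 1 (definition of the pairing).} Take Bohr--Sommerfeld leaves $\ell_1\in\mathcal{BS}(\Sigma,L_1)$ and $\ell_2\in\mathcal{BS}(\Sigma,L_2)$. Their intersection $\ell_1\cap\ell_2$ is a finite union of translates of the compact subtorus $((L_1\cap L_2)\otimes\mathfrak t)/((L_1\cap L_2)\otimes\mathfrak t\cap H^1(\Sigma;\Lambda))$; compactness uses that the $L_i$ are rational. For $s_i\in\mathscr S_{\ell_i}$ we define the pairing
\[
\langle s_2,s_1\rangle=\int_{\ell_1\cap\ell_2}\langle s_2,s_1\rangle_{\mathcal L}\,(\kappa^{1/2}_{\ell_1}\#\kappa^{1/2}_{\ell_2}),
\]
where $\#$ is the canonical half-density on $\mathcal P_{L_1}\cap\mathcal P_{L_2}$ induced by $\omega_{\Sigma,K}$ from the pair of half-densities. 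The canonical half-densities are those of Proposition~\ref{prop:toral-canonical-leafwise-half-density}.

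\emph{Step 2 (the operator and its kernel).} We set $F_{L_2L_1}$ to be $c_{L_2L_1}$ times the adjoint of this pairing, with $c_{L_2L_1}>0$ a normalization constant. Using the adapted coordinates $(x,y)$ and the flat sections $s_y$ from the proof of Proposition~\ref{prop:toral-BS}, we compute the matrix of $F_{L_2L_1}$ in the bases indexed by $G_K^g$. In the transverse case, for example with $L_2=L'$, the entries are $|G_K|^{-g/2}\exp(2\pi i\sum_j K^{-1}(\mu_j,\nu_j))$ with $\mu,\nu\in G_K^g$, so the matrix is a finite Fourier transform on $G_K^g$. It is unitary because the discriminant pairing $G_K\times G_K\to\mathbb Q/\mathbb Z$ induced by $K^{-1}$ is nondegenerate, and this fixes $c_{L_2L_1}$. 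A nontransverse pair reduces to the transverse case on the symplectic reduction $(L_1+L_2)/(L_1\cap L_2)$ tensored with $\mathfrak t$. The operator is canonical because every ingredient is canonical: $\mathcal L_{\Sigma,K}$, the half-densities, and $\omega_{\Sigma,K}$.

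\emph{Step 3 (composition law).} Diagonalize $K$ over $\mathbb R$ as in Lemma~\ref{lem:toral-maslov-signature}, so that the linear-symplectic data splits into $r$ copies of $(H^1(\Sigma;\mathbb R),\pm\omega_\Sigma)$. The projective defect of the BKS pairing depends only on this linear data and the half-density calculus: it is the phase of a composition of Gaussian (metaplectic) integrals on the universal cover. The defect is therefore multiplicative over the orthogonal splitting. For a factor with sign $\varepsilon_a$ it equals $e^{\frac{\pi i}{4}\varepsilon_a\mu_\Sigma}$, by Manoliu's rank-one result \cite[Secs.~IV, VI]{Manoliu1}; reversing the symplectic form conjugates the phase. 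The product of these factors is $e^{\frac{\pi i}{4}\mu_K}$, and Lemma~\ref{lem:toral-maslov-signature} then gives the second equality of the displayed identity.

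\emph{Main obstacle.} The lattice $H^1(\Sigma;\Lambda)$ need not split compatibly with the real diagonalization of $K$, so the torus itself does not decompose. The plan is therefore to prove the phase formula at the level of the quadratic Gaussian integral on $H^1(\Sigma;\mathfrak t)$ first. This is a Fresnel integral whose phase is $e^{\frac{\pi i}{4}\operatorname{sgn}Q_K}$, and it depends only on the real data. Only afterwards do we descend to the torus. The descent is legitimate because the lattice periodization contributes positive normalization constants, fixed in Step 2 by unitarity, and contributes no phase.
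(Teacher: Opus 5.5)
Your architecture differs from the paper's, and the difference exposes a hole. The paper never splits anything. It regards $(\mathcal M_\Sigma(\mathbb T),\omega_{\Sigma,K})$ as a compact integral symplectic torus with translation-invariant rational polarizations $\mathcal P_{L_i}=L_i\otimes\mathfrak t$. It applies Manoliu's compact-torus BKS theorem to the whole $2gn$-dimensional torus, which supplies both unitarity and the defect $e^{\frac{\pi i}{4}\mu(\mathcal P_{L_1},\mathcal P_{L_2},\mathcal P_{L_3})}$. Lemma~\ref{lem:toral-maslov-signature} is used only for the second equality. The lattice-splitting obstacle you identify is created by your Step~3, which invokes the rank-one torus theorem on the factors $(H^1(\Sigma;\mathbb R)e_a,\varepsilon_a\omega_\Sigma)$. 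These factors carry no lattice, so that theorem does not literally apply to them. On the universal cover you do not need the factorization at all: the linear metaplectic cocycle of $(H^1(\Sigma;\mathfrak t),\omega_{\Sigma,K},\mathcal P_{L_i})$ is directly $e^{\frac{\pi i}{4}\operatorname{sgn}Q_K}=e^{\frac{\pi i}{4}\mu_K}$.

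The genuine gap is unitarity, on which your descent also depends.
\begin{itemize}
\item Step~2 proves unitarity only for the standard pair $(L,L')$, where the kernel is the character table of $G_K^g$. General pairs of rational Lagrangians are not $Sp(2g,\mathbb Z)$-equivalent to this one.
\item For example, take $g=1$, $L_1=\mathbb R\langle a\rangle$, $L_2=\mathbb R\langle a+2b\rangle$. Any two Bohr--Sommerfeld leaves then meet in $2^n$ points, and each matrix entry is, up to a positive constant, a sum of $2^n$ phases. Nondegeneracy of the discriminant pairing does not by itself give unitarity here.
\item Yet Step~3 fixes the positive constants ``by unitarity'' for every pair.
\item Similarly, ``periodization contributes no phase'' is not a formality; it is a Gauss-sum reciprocity statement. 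For the genus-one triple $\mathbb R\langle a\rangle,\mathbb R\langle b\rangle,\mathbb R\langle a+b\rangle$ it is essentially Milgram's formula $|G_K|^{-1/2}\sum_{u}q_K(u)=e^{2\pi i\sigma(K)/8}$.
\end{itemize}

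To close this you need two precise statements.
\begin{itemize}
\item First, each torus BKS operator is a positive multiple of the restriction of the linear BKS operator on $H^1(\Sigma;\mathfrak t)$ to the $H^1(\Sigma;\Lambda)$-quasi-periodic distributional states (delta combs over Bohr--Sommerfeld leaves). Then the linear composition identity restricts, and the restricted operators are nonzero.
\item Second, each $\mathcal H_{\mathbb T,K}(\Sigma,L)$ is the irreducible representation, of dimension $|\det K|^g$, of the finite Heisenberg group of lifted translations by $\{v:\omega_{\Sigma,K}(v,H^1(\Sigma;\Lambda))\subset\mathbb Z\}/H^1(\Sigma;\Lambda)$, a group of order $|\det K|^{2g}$. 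The BKS pairing intertwines these actions, so Schur's lemma makes it a multiple of a unitary for \emph{every} pair.
\end{itemize}
Then the positive constant $c$ in $F_{L_3L_2}F_{L_2L_1}=c\,e^{\frac{\pi i}{4}\mu_K}F_{L_3L_1}$ is forced to equal $1$. With these additions your route becomes a self-contained proof of what the paper obtains by citation, and it explains why the defect sees only $\operatorname{sgn}Q_K$.
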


\begin{proof}
Set $\mathcal M_\Sigma(\mathbb T)
:=
H^1(\Sigma;\mathfrak t)/H^1(\Sigma;\Lambda),$
equipped with the translation-invariant symplectic form $\omega_{\Sigma,K}$ constructed above,
and let $L_{\Sigma,K}\to\mathcal M_\Sigma(\mathbb T)$
be the prequantum line bundle with curvature $-2\pi i\,\omega_{\Sigma,K}$.
For a rational Lagrangian subspace $L\subset H^1(\Sigma;\mathbb R)$, let
\[
\mathcal P_L:=L\otimes \mathfrak t
\subset T\mathcal M_\Sigma(\mathbb T)
\]
denote the corresponding translation-invariant real polarization.

Since $H^1(\Sigma;\mathfrak t)$ is a finite-dimensional real vector space and
$H^1(\Sigma;\Lambda)$ is a full lattice in it, the quotient
$\mathcal M_\Sigma(\mathbb T)$ is a compact torus. The form $\omega_{\Sigma,K}$ is constant and
nondegenerate, so $\mathcal M_\Sigma(\mathbb T)$ is a compact symplectic torus. By construction,
$L_{\Sigma,K}$ is a prequantum line bundle for this symplectic torus.

Now fix a rational Lagrangian $L\subset H^1(\Sigma;\mathbb R)$. Because $L$ is Lagrangian,
$\mathcal P_L=L\otimes\mathfrak t$ is an involutive isotropic distribution of half the dimension of
$\mathcal M_\Sigma(\mathbb T)$; hence it is a real polarization. Because $L$ is rational, the leaves of
$\mathcal P_L$ are compact subtori of $\mathcal M_\Sigma(\mathbb T)$, namely translates of
$\mathcal P_L/(\mathcal P_L\cap H^1(\Sigma;\Lambda))$.

By Proposition \ref{prop:toral-canonical-leafwise-half-density}, the leafwise half-density bundle along $\mathcal P_L$ carries a canonical flat
structure and a canonical normalized covariantly constant section on each leaf. Then by Definition \ref{def:toral-hilbert} and Proposition \ref{prop:toral-BS-one-dimensional},
the corresponding quantization via real polarization is
\[
\mathcal H_{\mathbb T,K}(\Sigma,L)
=
\bigoplus_{\ell\in \mathcal{BS}(\Sigma,L)}
\Gamma_{\mathrm{flat}}\!\left(
\ell;\,
L_{\Sigma,K}\otimes |\det \mathcal P_L^*|^{1/2}
\right),
\]
with each Bohr--Sommerfeld leaf contributing a one-dimensional summand. Thus Manoliu’s BKS formalism for invariant real polarizations of compact symplectic tori applies in the present setting; see \cite[ Sections II, IV]{Manoliu2}. The resulting comparison map is a canonical unitary isomorphism, and its projective composition law is given by \cite[Theorem VI.4]{Manoliu2}. Applying that construction to the pair $(\mathcal P_{L_1},\mathcal P_{L_2})$
yields a canonical unitary isomorphism
\[
F_{L_2L_1}\colon
\mathcal H_{\mathbb T,K}(\Sigma,L_1)\xrightarrow{\ \cong\ }
\mathcal H_{\mathbb T,K}(\Sigma,L_2).
\]

For three rational Lagrangians $L_1,L_2,L_3\subset H^1(\Sigma;\mathbb R)$, Manoliu's compact-torus
BKS composition law gives
\[
F_{L_3L_2}\circ F_{L_2L_1}
=
e^{\frac{\pi i}{4}\mu(\mathcal P_{L_1},\mathcal P_{L_2},\mathcal P_{L_3})}
\,F_{L_3L_1}.
\]
By definition, $\mu(\mathcal P_{L_1},\mathcal P_{L_2},\mathcal P_{L_3})
=
\mu_K(L_1,L_2,L_3),$ and by Lemma \ref{lem:toral-maslov-signature},
\[
\mu_K(L_1,L_2,L_3)=\sigma(K)\,\mu_\Sigma(L_1,L_2,L_3).
\]
Therefore
\[
F_{L_3L_2}\circ F_{L_2L_1}
=
e^{\frac{\pi i}{4}\mu_K(L_1,L_2,L_3)}\,F_{L_3L_1}
=
e^{\frac{\pi i}{4}\sigma(K)\mu_\Sigma(L_1,L_2,L_3)}\,F_{L_3L_1},
\]
as claimed.
\end{proof}

The BKS operator \(F_{L_2L_1}\) gives the canonical unitary identification between the quantizations defined by \(L_1\) and \(L_2\). Its failure to be strictly transitive is measured by the Maslov--Kashiwara index
\(\mu(L_1,L_2,L_3)\), and this is exactly the phase that $K$-twisted weighting will later cancel in the extended TQFT formalism.

\begin{lemma}\label{lem:genus-BKS-fourier}
Let $\Sigma$ be a closed oriented surface of genus $g$, and choose a symplectic basis
\[
(a_1,\dots,a_g,b_1,\dots,b_g)
\]
of $H^1(\Sigma;\mathbb Z)$. Set
\[
L=\mathbb R\langle a_1,\dots,a_g\rangle,\qquad
L'=\mathbb R\langle b_1,\dots,b_g\rangle .
\]
Identify $\mathcal{BS}(\Sigma,L)$ and $\mathcal{BS}(\Sigma,L')$ with $G_K^g$
by choosing the Bohr--Sommerfeld leaf through the origin in each polarization as the
basepoint, as in the coordinate description of Proposition~\ref{prop:toral-BS}, and let
\[
\{e_{\mathbf u}\}_{\mathbf u\in G_K^g}\subset \mathcal H_{\mathbb T,K}(\Sigma,L),\qquad
\{e_{\mathbf v}'\}_{\mathbf v\in G_K^g}\subset \mathcal H_{\mathbb T,K}(\Sigma,L')
\]
be the corresponding bases determined by the canonical half-densities of
Proposition~\ref{prop:toral-canonical-leafwise-half-density}. Then, after fixing the
phases of the vectors $e_{\mathbf v}'$, the BKS operator of
Proposition~\ref{prop:toral-BKS} is given by
\[
F_{L',L}(e_{\mathbf u})
=
|G_K|^{-g/2}\sum_{\mathbf v\in G_K^g}\Omega_{K,g}(\mathbf u,\mathbf v)\,e_{\mathbf v}',
\qquad \mathbf u\in G_K^g,
\]
where, for $\mathbf u=(u_1,\dots,u_g)$ and $\mathbf v=(v_1,\dots,v_g)$,
\[
\Omega_{K,g}(\mathbf u,\mathbf v)
:=
\prod_{j=1}^g \Omega_K(u_j,v_j)
=
\exp\!\Bigl(2\pi i\sum_{j=1}^g \tilde u_j^{\top}K^{-1}\tilde v_j\Bigr),
\]
with $u_j=[\tilde u_j]$ and $v_j=[\tilde v_j]$ in $G_K=\Lambda^*/K\Lambda$.
Equivalently, after identifying the two bases by their common labels in $G_K^g$,
\[
F_{L',L}(e_{\mathbf u})
=
|G_K|^{-g/2}\sum_{\mathbf v\in G_K^g}\Omega_{K,g}(\mathbf u,\mathbf v)\,e_{\mathbf v}.
\]
\end{lemma}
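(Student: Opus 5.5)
The plan is to compute the BKS pairing explicitly in the adapted coordinates of Proposition~\ref{prop:toral-BS}, and then reduce to one handle at a time. In coordinates $(x,y)\in\mathfrak t^g\times\mathfrak t^g$ the leaves of $\mathcal P_L$ are the sets $\{y=\text{const}\}$, and the leaves of $\mathcal P_{L'}$ are the sets $\{x=\text{const}\}$. By Proposition~\ref{prop:toral-BS}, the Bohr--Sommerfeld leaves are labelled as follows:
\begin{itemize}
\item $\ell_{\mathbf u}$ has $y_j\in K^{-1}\tilde u_j+\Lambda$;
\item symmetrically, by the same argument with the roles of $x$ and $y$ exchanged, $\ell'_{\mathbf v}$ has $x_j\in K^{-1}\tilde v_j+\Lambda$.
\end{itemize}

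\textbf{Step 1: the kernel as a sum over intersection points.} Because $\omega_{\Sigma,K}$, the polarizations and the connection form $\alpha$ are all split as direct sums over $j=1,\dots,g$, the BKS kernel and the half-densities factor as a tensor product of $g$ copies of the genus-one problem. So I would prove the statement for $g=1$ and take products.

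For transverse invariant polarizations on a compact torus, Manoliu's construction (the one invoked in Proposition~\ref{prop:toral-BKS}) gives a clean description. The matrix coefficient $\langle e'_v, F_{L',L}e_u\rangle$ is a constant times the pointwise pairing of the two flat sections, summed over the finite set $\ell_u\cap\ell'_v$, with the two half-densities paired via $\omega_{\Sigma,K}$. In genus one, $\ell_u\cap\ell'_v$ is the single point $(K^{-1}\tilde v,K^{-1}\tilde u)$ of $\mathbb T\times\mathbb T$.

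\textbf{Step 2: evaluating the pairing at that point.} I would write the flat sections explicitly using the formula from the proof of Proposition~\ref{prop:toral-BS}:
\[
s_u(x)=\exp\bigl(\pi i K(y,x)\bigr),\qquad s'_v(y)=\exp\bigl(-\pi i K(x,y)\bigr),
\]
the second obtained by the analogous computation for the $x$-leaves, normalized at the origin. Their pointwise Hermitian pairing at the intersection point is
\[
\exp\bigl(2\pi i\,K(K^{-1}\tilde u,K^{-1}\tilde v)\bigr)=\exp\bigl(2\pi i\,\tilde u^{\top}K^{-1}\tilde v\bigr).
\]
This expression is well defined modulo the choice of lifts, because $K^{-1}\tilde v$ changes by elements of $\Lambda$ and $\tilde u\in\Lambda^*$ is integral against $\Lambda$. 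Symmetry of $K$ handles the other lift.

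\textbf{Step 3: normalization and phases.} The half-density pairing is the same constant for every intersection point, by translation invariance. Unitarity of $F_{L',L}$, from Proposition~\ref{prop:toral-BKS}, together with orthogonality of the characters $\Omega_K(u,\cdot)$ of $G_K$ (nondegeneracy of the discriminant pairing), forces this constant to have modulus $|G_K|^{-1/2}$. Any residual overall phase, including the Maslov-type factor $e^{\pi i\sigma(K)/4}$, and the $v$-dependent conventions for the sign in the section $s'_v$, are absorbed into the phases of the $e'_{\mathbf v}$. The statement explicitly permits this.

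\textbf{Main obstacle.} The hardest step is Step 2. Manoliu's formula is stated in rank one, and the extension to a vector-valued $K$ requires care that the automorphy factor of $\mathcal L_{\Sigma,K}$ along the $y$-translations gives exactly $\exp(2\pi i\,\tilde u^{\top}K^{-1}\tilde v)$ rather than its conjugate or a half-integer shift. Evenness of $K$ is what makes the factor $\exp(\pi i K(\lambda,\lambda))$ trivial, which prevents such a shift. I would verify this directly with the $\tau_\lambda^*\alpha-\alpha$ computation, for translations in both the $x$ and $y$ directions.
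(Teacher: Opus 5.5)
Your proposal is correct and follows essentially the same route as the paper: you write down the explicit leafwise flat sections coming from $\alpha$, evaluate $\overline{t}\,s$ at the unique intersection point $(x_{\mathbf v},y_{\mathbf u})$ to get $\Omega_{K,g}(\mathbf u,\mathbf v)$, and then use unitarity together with orthogonality of the characters of $G_K^g$ to fix $|c|=|G_K|^{-g/2}$, absorbing the residual phase into the $e'_{\mathbf v}$ (the paper treats all $g$ handles at once rather than factoring into genus-one pieces, which is only a cosmetic difference). One small correction to your ``main obstacle'' paragraph: evenness of $K$ is not used in this lemma, because the pure $x$- and $y$-translation factors obtained from $\tau_\lambda^*\alpha-\alpha$ contain no $\exp(\pi i K(\lambda,\lambda))$ term and well-definedness of $\Omega_K$ on $G_K$ needs only integrality of $K$; evenness is what the twist eigenvalues $q_K$ in Lemma~\ref{lem:genus-T-diagonal} require.
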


\begin{proof}
Using the coordinates of the proof of Proposition~\ref{prop:toral-BS}, we have
\[
\mathcal M_\Sigma(\mathbb T)
\cong H^1(\Sigma;\mathfrak t)/H^1(\Sigma;\Lambda)
\cong (\mathfrak t^g\oplus \mathfrak t^g)/(\Lambda^g\oplus\Lambda^g),
\]
with coordinates $(x,y)=((x_1,\dots,x_g),(y_1,\dots,y_g))$ adapted to the decomposition
\[
H^1(\Sigma;\mathfrak t)=(L\otimes \mathfrak t)\oplus (L'\otimes \mathfrak t).
\]
Thus the $L$-leaves are
\[
\ell_y=\{(x,y):x\in \mathfrak t^g\}/\Lambda^g,
\]
and the $L'$-leaves are
\[
m_x=\{(x,y):y\in \mathfrak  t^g\}/\Lambda^g.
\]

By Proposition~\ref{prop:toral-BS}, one may choose on the universal cover a connection
$1$-form
\[
\alpha
=
\pi i\sum_{j=1}^g\bigl(K(x_j,dy_j)-K(y_j,dx_j)\bigr).
\]
Hence
\[
\alpha|_{\ell_y}
=
-\pi i\sum_{j=1}^g K(y_j,dx_j),
\qquad
\alpha|_{m_x}
=
\pi i\sum_{j=1}^g K(x_j,dy_j).
\]
Therefore the leafwise flat sections are
\[
s_y(x)
=
\exp\!\Bigl(\pi i\sum_{j=1}^g K(y_j,x_j)\Bigr)\,\kappa_{\ell_y}^{1/2},
\qquad
t_x(y)
=
\exp\!\Bigl(-\pi i\sum_{j=1}^g K(x_j,y_j)\Bigr)\,\kappa_{m_x}^{1/2},
\]
where $\kappa_{\ell_y}^{1/2}$ and $\kappa_{m_x}^{1/2}$ are the canonical normalized
parallel half-densities from
Proposition~\ref{prop:toral-canonical-leafwise-half-density}. Now let $\mathbf u=(u_1,\dots,u_g)$ and $\mathbf v=(v_1,\dots,v_g)$ in $G_K^g$, and choose
lifts $\tilde u_j,\tilde v_j\in\Lambda^*$. Write
\[
y_{\mathbf u}:=(K^{-1}\tilde u_1,\dots,K^{-1}\tilde u_g),\qquad
x_{\mathbf v}:=(K^{-1}\tilde v_1,\dots,K^{-1}\tilde v_g).
\]
Then $\ell_{y_{\mathbf u}}$ and $m_{x_{\mathbf v}}$ are Bohr--Sommerfeld leaves by
Proposition~\ref{prop:toral-BS}, and they intersect in the unique point
$(x_{\mathbf v},y_{\mathbf u})$. The standard BKS pairing for transverse invariant real
polarizations on a compact symplectic torus gives
\[
\langle e_{\mathbf v}',F_{L',L}e_{\mathbf u}\rangle
=
c\,\overline{t_{x_{\mathbf v}}(y_{\mathbf u})}\,s_{y_{\mathbf u}}(x_{\mathbf v})
\]
for a constant $c$ independent of $\mathbf u,\mathbf v$. Evaluating at the intersection
point,
\[
\overline{t_{x_{\mathbf v}}(y_{\mathbf u})}\,s_{y_{\mathbf u}}(x_{\mathbf v})
=
\exp\!\Bigl(2\pi i\sum_{j=1}^g \tilde u_j^{\top}K^{-1}\tilde v_j\Bigr)
=
\Omega_{K,g}(\mathbf u,\mathbf v),
\]
since $K$ is symmetric. Thus
\[
F_{L',L}(e_{\mathbf u})
=
c\sum_{\mathbf v\in G_K^g}\Omega_{K,g}(\mathbf u,\mathbf v)\,e_{\mathbf v}'.
\]

It remains to determine $c$. By Proposition~\ref{prop:toral-BKS}, $F_{L',L}$ is unitary,
hence
\[
\delta_{\mathbf u,\mathbf u'}
=
\langle F_{L',L}e_{\mathbf u},F_{L',L}e_{\mathbf u'}\rangle
=
|c|^2\sum_{\mathbf v\in G_K^g}
\overline{\Omega_{K,g}(\mathbf u,\mathbf v)}\,\Omega_{K,g}(\mathbf u',\mathbf v).
\]
Since $\Omega_{K,g}$ is a bicharacter,
\[
\overline{\Omega_{K,g}(\mathbf u,\mathbf v)}\,\Omega_{K,g}(\mathbf u',\mathbf v)
=
\Omega_{K,g}(\mathbf u'-\mathbf u,\mathbf v).
\]
If $\mathbf u'=\mathbf u$, the sum is $|G_K|^g$. If $\mathbf u'\neq \mathbf u$, then
$\mathbf v\mapsto \Omega_{K,g}(\mathbf u'-\mathbf u,\mathbf v)$ is a nontrivial character
of the finite abelian group $G_K^g$, so its sum over $G_K^g$ is $0$. Therefore
\[
\sum_{\mathbf v\in G_K^g}
\overline{\Omega_{K,g}(\mathbf u,\mathbf v)}\,\Omega_{K,g}(\mathbf u',\mathbf v)
=
|G_K|^g\,\delta_{\mathbf u,\mathbf u'}.
\]
It follows that $|c|^2|G_K|^g=1$, hence $|c|=|G_K|^{-g/2}$. After rephasing the basis
$\{e_{\mathbf v}'\}_{\mathbf v\in G_K^g}$, we may take $c=|G_K|^{-g/2}$.
\end{proof}

\begin{lemma}\label{lem:genus-T-diagonal}
Let $\Sigma$ be a closed oriented surface of genus $g$, choose a symplectic basis $(a_1,\dots,a_g,b_1,\dots,b_g)$ of $H^1(\Sigma;\mathbb Z)$, and set $L=\mathbb R\langle a_1,\dots,a_g\rangle.$ For each $1\le j\le g$, let $\tau_j\in \mathrm{Map}^+(\Sigma)$ be the Dehn twist about the
$a_j$-cycle. In the coordinates
\[
(x_1,\dots,x_g,y_1,\dots,y_g)
\]
of Proposition~\ref{prop:toral-BS}, the mapping class $\tau_j$ acts by
\[
\tau_j(x_1,\dots,x_g,y_1,\dots,y_g)
=
(x_1,\dots,x_j+y_j,\dots,x_g,y_1,\dots,y_g).
\]
Since the polarization $L$ has leaves
\[
\ell_y=\{(x,y):x\in \mathfrak  t^g\}/\Lambda^g,
\]
the map $\tau_j$ preserves each leaf $\ell_y$. Hence the induced operator
\[
T_j:\mathcal H_{\mathbb T,K}(\Sigma,L)\to \mathcal H_{\mathbb T,K}(\Sigma,L)
\]
acts diagonally in the Bohr--Sommerfeld basis
$\{e_{\mathbf u}\}_{\mathbf u\in G_K^g}$. More precisely, if
\[
\mathbf u=(u_1,\dots,u_g)\in G_K^g,
\]
then
\[
T_j(e_{\mathbf u})=q_K(u_j)\,e_{\mathbf u}.
\]
In particular, if $\tau_A:=\tau_1\cdots \tau_g$, then
\[
T_A(e_{\mathbf u})
=
\Bigl(\prod_{j=1}^g q_K(u_j)\Bigr)e_{\mathbf u}.
\]
\end{lemma}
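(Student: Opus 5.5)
The plan has three steps. First, verify the action on the torus. Second, determine how the lift of $\tau_j$ acts on the prequantum line bundle. Third, evaluate the induced operator on the basis $e_{\mathbf u}$.

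\emph{Action on the torus.} The Dehn twist about the $a_j$-cycle acts on $H^1(\Sigma;\mathbb Z)$ by a symplectic transvection. In the adapted symplectic basis of Proposition~\ref{prop:toral-BS} it fixes every $a_i$ and every $b_i$ with $i\neq j$. On $b_j$ it acts by $b_j\mapsto b_j+a_j$, up to the sign convention for positive twists. After tensoring with $\mathfrak t$, this gives the linear map $(x,y)\mapsto (x+Ey_j,y)$ stated in the lemma, where $E$ inserts $y_j$ into the $j$-th $x$-slot. The map preserves the lattice $\Lambda^g\oplus\Lambda^g$ and the form $\omega_{\Sigma,K}=\sum_j K(dx_j\wedge dy_j)$. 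It fixes $y$, so it maps each leaf $\ell_y$ to itself, and it permutes neither the Bohr--Sommerfeld leaves nor their labels. Consequently the induced operator on $\mathcal H_{\mathbb T,K}(\Sigma,L)$ preserves each one-dimensional summand $\mathscr S_{\ell}$. By Proposition~\ref{prop:toral-BS-one-dimensional} it must therefore act diagonally, and it remains only to compute the scalars.

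\emph{Lift to the line bundle.} The canonical lift of $\tau_j$ to $\mathcal L_{\Sigma,K}$ comes from the gauge and mapping-class naturality of the filling model in Proposition~\ref{prop:gauge-action-boundary-line}. To compute it in coordinates, pull back the connection form $\alpha=\pi i\sum_i\bigl(K(x_i,dy_i)-K(y_i,dx_i)\bigr)$. Using $dy_j$ in place of $dx_j$ under the substitution, one finds
\[
\tau_j^*\alpha-\alpha=\pi i\bigl(K(y_j,dy_j)-K(y_j,dy_j)\bigr)=0.
\]
Hence $\tau_j$ lifts to the trivial bundle on the universal cover by an identity fibre map, possibly composed with a constant phase. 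The lift must also commute with the lattice automorphy factors $\exp(-\pi i\sum K(\lambda_i,y_i))$ found in Proposition~\ref{prop:toral-BS}. The translations by $x$-lattice vectors are unaffected. For translations in $y$, conjugation by $\tau_j$ turns $y_j\mapsto y_j+\mu$ into a combined shift with $x_j\mapsto x_j+\mu$. Compatibility with the automorphy factors then forces the lift to carry the quadratic correction $\exp\bigl(\pi i K(y_j,y_j)\bigr)$ rather than the identity. This is the step where the evenness of $K$ enters: it makes $\exp(\pi i K(\mu,\mu))=1$ for $\mu\in\Lambda$, so the quadratic factor is consistent with the lattice.

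\emph{Evaluating on the basis.} On the Bohr--Sommerfeld leaf $\ell_{y_{\mathbf u}}$ with $y_j=K^{-1}\tilde u_j$, the flat section is $s_y(x)=\exp\bigl(\pi i\sum_i K(y_i,x_i)\bigr)\kappa^{1/2}$. Pulling it back under $\tau_j$ gives
\[
s_y(x+Ey_j)=\exp\bigl(\pi i K(y_j,y_j)\bigr)\,s_y(x).
\]
The half-density $\kappa^{1/2}$ is invariant, because $\tau_j$ restricted to the leaf is a translation and so preserves the normalized invariant density. Combining this with the lift above, the eigenvalue is $\exp\bigl(\pi i\,\tilde u_j^{\top}K^{-1}\tilde u_j\bigr)=q_K(u_j)$. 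This value is well defined on $G_K$ precisely because $K$ is even: changing $\tilde u_j$ by $K\lambda$ changes the exponent by $2\tilde u_j^{\top}\lambda+\lambda^{\top}K\lambda\in 2\mathbb Z$. The statement about $T_A$ then follows because the $\tau_j$ commute and each acts diagonally.

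The main obstacle is the second step: fixing the correct lift of $\tau_j$ to $\mathcal L_{\Sigma,K}$. The naive computation gives a phase of $1$, and the correct phase is either $q_K(u_j)$ or its square. I would pin it down by insisting that the lift commute with the $\Lambda^g\oplus\Lambda^g$ automorphy, which is the canonical choice forced by the filling model. I would then check it against the rank-one case in \cite{Manoliu2}, where the answer is $e^{\pi i u^2/k}$.
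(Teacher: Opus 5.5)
Your Steps 1 and 3 match the paper's argument. The paper restricts to $\ell_y$ and uses $s_y(x)=\exp\bigl(\pi i\sum_i K(y_i,x_i)\bigr)\kappa_{\ell_y}^{1/2}$. It notes that $\tau_j$ translates $x_j$ by $y_j$ and preserves $\kappa_{\ell_y}^{1/2}$, and reads off $\tau_j^*s_y=\exp(\pi i K(y_j,y_j))\,s_y$. That is $q_K(u_j)$ on the leaf labelled by $\mathbf u$. The lift of $\tau_j$ to $\mathcal L_{\Sigma,K}$ used there is the identity fibre map in the universal-cover trivialization, with no extra factor.

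The gap is your Step 2, which claims that compatibility with the lattice automorphy \emph{forces} the lift to carry the fibre factor $\exp(\pi i K(y_j,y_j))$. That claim is false, and it contradicts your own computation $\tau_j^*\alpha=\alpha$. A lift preserving the connection must be a constant multiple of the identity, but $\exp(\pi i K(y_j,y_j))$ has logarithmic derivative $2\pi i K(y_j,dy_j)\neq 0$. Also, shifting $y_j$ by $\mu\in\Lambda$ multiplies it by $\exp(2\pi i K(y_j,\mu))$, which is not $1$ off the Bohr--Sommerfeld locus. Step 3 also does not follow from Step 2 as written. If that factor were part of the lift, combining it with the pullback factor $\exp(\pi i K(y_j,y_j))$ would give $q_K(u_j)^2$ or $1$, not $q_K(u_j)$. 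Your remark that the answer is ``either $q_K(u_j)$ or its square'' shows the scalar has not actually been determined, and that scalar is the whole content of the lemma.

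The compatibility check you wanted does work, but it gives the opposite conclusion. You quoted only the $x$-translation factor, but the connection form also forces a $y$-translation factor. For $\gamma=(\lambda,\mu)\in\Lambda^g\oplus\Lambda^g$ the automorphy is
\[
e_\gamma(x,y)=c(\gamma)\exp\Bigl(\pi i\sum_i\bigl(K(\mu_i,x_i)-K(\lambda_i,y_i)\bigr)\Bigr).
\]
With the normalization of Proposition~\ref{prop:toral-BS}, namely $c(\lambda,0)=1$, the cocycle condition gives $c(\lambda,\mu)=\chi(\mu)(-1)^{\sum_i K(\lambda_i,\mu_i)}$ for some character $\chi$.

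Write $A$ for the linear action of $\tau_j$, so $A\gamma=(\lambda+E\mu_j,\mu)$. At the point $A(x,y)=(x+Ey_j,y)$, the $y$-part contributes $+K(\mu_j,y_j)$ and the $x$-part contributes $-K(\mu_j,y_j)$, and these cancel. Hence $e_{A\gamma}(Av)=\bigl(c(A\gamma)/c(\gamma)\bigr)e_\gamma(v)=(-1)^{K(\mu_j,\mu_j)}e_\gamma(v)=e_\gamma(v)$, because $K$ is even. So $s\mapsto s\circ A$ maps sections of $\mathcal L_{\Sigma,K}$ to sections. In other words, the identity lift descends, with the constant normalized to $1$ as the paper implicitly does.

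Your Step 3 computation, taken with no extra fibre factor, then gives exactly $q_K(u_j)$. The paper leaves this descent check implicit, so doing it correctly would make your proof more complete than the paper's. As it stands, Step 2 has to be replaced.
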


\begin{proof}
For a leaf $\ell_y$ of the polarization $P_L$, the proof of Proposition~\ref{prop:toral-BS}
gives
\[
\alpha|_{\ell_y}
=
-\pi i\sum_{i=1}^g K(y_i,dx_i),
\]
hence a normalized leafwise flat section
\[
s_y(x)
=
\exp\!\Bigl(\pi i\sum_{i=1}^g K(y_i,x_i)\Bigr)\,\kappa_{\ell_y}^{1/2}.
\]
Since $\tau_j$ preserves the polarization $L$, it preserves each leaf $\ell_y$ and acts on
it by the translation
\[
(x_1,\dots,x_g)\longmapsto (x_1,\dots,x_j+y_j,\dots,x_g).
\]
Moreover, $\tau_j$ preserves the normalized translation-invariant half-density
$\kappa_{\ell_y}^{1/2}$. Therefore
\[
(\tau_j^*s_y)(x)
=
s_y(x_1,\dots,x_j+y_j,\dots,x_g)
=
\exp\!\bigl(\pi i\,K(y_j,y_j)\bigr)\,s_y(x).
\]
Now let $\mathbf u=(u_1,\dots,u_g)\in G_K^g$, choose lifts $\tilde u_j\in \Lambda^*$, and
write
\[
y_{\mathbf u}=(K^{-1}\tilde u_1,\dots,K^{-1}\tilde u_g).
\]
Then $e_{\mathbf u}$ is the basis vector corresponding to the Bohr--Sommerfeld leaf
$\ell_{y_{\mathbf u}}$, and
\[
\exp\!\bigl(\pi i\,K((y_{\mathbf u})_j,(y_{\mathbf u})_j)\bigr)
=
\exp\!\bigl(\pi i\,\tilde u_j^{\top}K^{-1}\tilde u_j\bigr)
=
q_K(u_j).
\]
Hence
\[
T_j(e_{\mathbf u})=q_K(u_j)\,e_{\mathbf u}.
\]
The formula for $T_A$ follows because the twists $\tau_1,\dots,\tau_g$ commute.
\end{proof}

\section{Torsion Sectors and Canonical Boundary States}
\subsection{Moduli of Flat Toral Bundles}

For a compact oriented \(3\)--manifold \(X\), the classical Abelian theory is
naturally decomposed according to the topological type of the underlying \(\mathbb T\)--bundle. In the toral case, the relevant topological data are the
characteristic class in \(H^2(X;\Lambda)\), and flat connections occur exactly on the torsion components. The short exact sequence $0\longrightarrow \Lambda\longrightarrow \mathfrak t\longrightarrow \mathbb T\longrightarrow 0$ induces a long exact sequence in cohomology. Since $\mathfrak t$ is a real
vector space, $H^2(X;\mathfrak t)$ is torsion-free.

\begin{prop}
\label{prop:toral-flat-bundles-rigorous}
Let $X$ be a compact oriented $3$--manifold.

\begin{enumerate}
\item[\textup{(i)}]
Isomorphism classes of principal $\mathbb T$--bundles over $X$ are classified by
$H^2(X;\Lambda)$.

\item[\textup{(ii)}]
A principal $\mathbb T$--bundle admits a flat connection if and only if its
characteristic class lies in $\Tors H^2(X;\Lambda)$.

\item[\textup{(iii)}]
For each $p\in\Tors H^2(X;\Lambda)$, the moduli space
$\mathcal M_{X,p}(\mathbb T)$ of flat connections in that torsion component is an affine torus modeled on $H^1(X;\mathfrak t)/H^1(X;\Lambda)$.
Hence
\[
\mathcal M_X(\mathbb T)
=
\bigsqcup_{p\in \Tors H^2(X;\Lambda)}\mathcal M_{X,p}(\mathbb T).
\]
\end{enumerate}
\end{prop}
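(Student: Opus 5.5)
The plan is to reduce everything to the classifying-space description $B\mathbb T=\mathcal K(\Lambda,2)$ and to the exponential sequence $0\to\Lambda\to\mathfrak t\to\mathbb T\to0$, viewed as a sequence of coefficient sheaves. The proof has three steps, one for each part of the statement.

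For (i), choose a $\Lambda$--basis $\Lambda\cong\mathbb Z^n$, so that $\mathbb T\cong U(1)^n$. Principal $\mathbb T$--bundles are then $n$--tuples of principal $U(1)$--bundles, and these are classified by $H^2(X;\mathbb Z)^n\cong H^2(X;\Lambda)$ via $[X,B\mathbb T]=[X,\mathcal K(\Lambda,2)]$. Equivalently, one can use the sheaf sequence $0\to\Lambda\to\underline{\mathfrak t}\to\underline{\mathbb T}\to0$ of smooth functions. Since $\underline{\mathfrak t}$ is fine, the connecting map $H^1(X;\underline{\mathbb T})\to H^2(X;\Lambda)$ is an isomorphism, and it sends a bundle to its characteristic class $c(P)$.

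For (ii), use the Chern--Weil convention already fixed in the proof of Proposition~\ref{prop:well-defined-boundary-line}: the image of $c(P)$ in $H^2(X;\mathfrak t)$ is $[F_\Theta/2\pi]$. If $P$ admits a flat connection, this image vanishes. Since $H^2(X;\Lambda)\otimes\mathbb R\cong H^2(X;\mathfrak t)$ has kernel exactly the torsion subgroup, $c(P)$ must be torsion. Conversely, flat $\mathbb T$--bundles are classified by $\Hom(\pi_1X,\mathbb T)=H^1(X;\mathbb T)$, where $\mathbb T$ is now the discrete coefficient group. The Bockstein of the constant-coefficient sequence,
\[
H^1(X;\mathfrak t)\to H^1(X;\mathbb T)\xrightarrow{\ \beta\ }H^2(X;\Lambda)\to H^2(X;\mathfrak t),
\]
sends a flat bundle to its characteristic class; this is the compatibility of the discrete and smooth sheaf sequences. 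By exactness, $\operatorname{Im}\beta=\ker\bigl(H^2(X;\Lambda)\to H^2(X;\mathfrak t)\bigr)=\Tors H^2(X;\Lambda)$, so every torsion class is realized by a flat bundle.

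For (iii), gauge classes of flat connections are identified with holonomies, so $\mathcal M_X(\mathbb T)\cong H^1(X;\mathbb T)=\Hom(H_1(X),\mathbb T)$. Partitioning by $\beta$ gives the fibers $\mathcal M_{X,p}=\beta^{-1}(p)$. Each fiber is nonempty by (ii), and it is a torsor for $\ker\beta=\operatorname{Im}\bigl(H^1(X;\mathfrak t)\to H^1(X;\mathbb T)\bigr)\cong H^1(X;\mathfrak t)/\operatorname{Im}H^1(X;\Lambda)$. The image of $H^1(X;\Lambda)$ is a full lattice in $H^1(X;\mathfrak t)$, because $H^1(X;\Lambda)$ is free: by universal coefficients $H^1(X;\Lambda)\cong\Hom(H_1X,\Lambda)$, which has no torsion. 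Hence each fiber is an affine torus modeled on $H^1(X;\mathfrak t)/H^1(X;\Lambda)$. One can also see this geometrically. Fix a flat $\Theta_0$ on $P_p$. Any other flat connection is $\Theta_0+a$ with $a$ a closed $\mathfrak t$--valued $1$--form. Gauge transformations $u\in\Map(X,\mathbb T)$ shift $a$ by $u^{-1}du$, whose periods span exactly the integral classes $H^1(X;\Lambda)$, together with the exact forms coming from the identity component.

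The main obstacle is the identification in (ii) of the Bockstein of a flat bundle with its smooth characteristic class. I would prove it by comparing the two sheaf sequences (locally constant versus smooth coefficients) through the natural inclusion morphism; naturality of connecting homomorphisms then gives the commutativity. A second delicate point is that gauge equivalence uses the full group $\Map(X,\mathbb T)$, including the components indexed by $H^1(X;\Lambda)$. This is exactly what produces the quotient by the lattice rather than by the exact forms alone.
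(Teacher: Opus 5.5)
Your proposal is correct and follows essentially the same route as the paper: $[X,B\mathbb T]=[X,\mathcal K(\Lambda,2)]$ for (i); for (ii) and (iii), the Bockstein of $0\to\Lambda\to\mathfrak t\to\mathbb T\to 0$ together with $\ker\bigl(H^2(X;\Lambda)\to H^2(X;\mathfrak t)\bigr)=\Tors H^2(X;\Lambda)$, with $\mathcal M_{X,p}$ the fibre of the Bockstein over $p$. Your extra steps make explicit several points the paper's proof takes for granted: the comparison of constant and smooth sheaf sequences showing that the Bockstein of a flat bundle is its characteristic class, the Chern--Weil argument for the forward implication, and the freeness of $H^1(X;\Lambda)$, which makes it a full lattice.
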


\begin{proof}
Since $B\mathbb T\simeq \mathcal K(\Lambda,2)$, principal $\mathbb T$--bundles are
classified by $[X,B\mathbb T]\cong H^2(X;\Lambda).$ This proves \textup{(i)}.

Flat $\mathbb T$--bundles are classified by $H^1(X;\mathbb T)$.
The long exact sequence associated to
$0\to\Lambda\to\mathfrak t\to\mathbb T\to 0$
contains
\[
H^1(X;\mathfrak t)\to H^1(X;\mathbb T)\xrightarrow{\delta} H^2(X;\Lambda)\to H^2(X;\mathfrak t).
\]
Since $H^2(X;\mathfrak t)$ is a real vector space, the kernel of
$H^2(X;\Lambda)\to H^2(X;\mathfrak t)$ is exactly
$\Tors H^2(X;\Lambda)$.
Therefore the image of $\delta$ is precisely $\Tors H^2(X;\Lambda)$,
which proves \textup{(ii)}.

Fix $p\in\Tors H^2(X;\Lambda)$. The fiber of $\delta$ over $p$ is an affine
space for $H^1(X;\mathfrak t)/H^1(X;\Lambda)$, which identifies with the moduli of flat connections in that torsion component. This proves \textup{(iii)}.
\end{proof}

The moduli space \(\mathcal M_X(\mathbb T)\) is therefore a disjoint union of affine tori \(\mathcal M_{X,p}(\mathbb T)\), indexed by
\(p\in \Tors H^2(X;\Lambda)\). Later the boundary state
\(Z^{CS}_{\mathbb T,K}(X)\) will be obtained by summing the contribution from each
torsion component.

\subsection{Boundary leaves in each torsion component} Each torsion component \(\mathcal M_{X,p}(\mathbb T)\) maps by restriction to the boundary moduli space \(\mathcal M_{\partial X}(\mathbb T)\). Its image \(\Lambda_{X,p}\) is the classical boundary leaf associated with the component
\(p\). Let $X$ be a compact oriented $3$--manifold with boundary. Recall that
\[
L_X^{\mathbb R}
:=
\operatorname{Im}\!\bigl(H^1(X;\mathbb R)\to H^1(\partial X;\mathbb R)\bigr),
\qquad
L_X=L_X^{\mathbb R}\otimes \mathfrak t.
\]

\begin{definition}
\label{def:torsion-component-boundary-leaf}
For each $p\in \Tors H^2(X;\Lambda)$, define
\[
\Lambda_{X,p}
:=
\operatorname{Im}\!\bigl(
r_{X,p}:\mathcal M_{X,p}(\mathbb T)\to \mathcal M_{\partial X}(\mathbb T)
\bigr).
\]
\end{definition}

\begin{prop}
\label{prop:Componentwise-boundary-leaf-geometry}
For each $p\in\Tors H^2(X;\Lambda)$, the subset $\Lambda_{X,p}$ is a translate
of the Lagrangian torus determined by $L_X$. In particular, $T\Lambda_{X,p}=L_X.$ Hence the total image of boundary restriction is a finite union of parallel
Lagrangian subtori.
\end{prop}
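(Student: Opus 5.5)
The plan is to use the affine structure of each torsion component from Proposition~\ref{prop:toral-flat-bundles-rigorous}(iii) and the naturality of the coboundary sequences under restriction to $\partial X$. Fix $p\in\Tors H^2(X;\Lambda)$ and a base point $A_0\in\mathcal M_{X,p}(\mathbb T)$. Every other point of $\mathcal M_{X,p}(\mathbb T)$ can then be written as $A_0+[a]$ with $[a]\in H^1(X;\mathfrak t)/H^1(X;\Lambda)$. Here ``$+$'' is the action of $H^1(X;\mathbb T)$-differences: two flat structures with the same characteristic class differ by an element in the image of $H^1(X;\mathfrak t)\to H^1(X;\mathbb T)$, which is exactly the fiber description of $\delta$. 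Restriction to the boundary commutes with the long exact sequences of $0\to\Lambda\to\mathfrak t\to\mathbb T\to0$, because these sequences are natural in the space. It follows that $r_{X,p}(A_0+[a])=r_{X,p}(A_0)+[r_X a]$, where $r_{X,p}(A_0)\in H^1(\partial X;\mathbb T)$. This point lies in $\mathcal M_{\partial X}(\mathbb T)$ because the boundary characteristic class is zero: by Lemma~\ref{lem:boundary-triviality-toral}, the restriction of a torsion class to $\partial X$ vanishes.

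From this formula, $\Lambda_{X,p}=r_{X,p}(A_0)+\pi(L_X)$, where $\pi:H^1(\partial X;\mathfrak t)\to\mathcal M_{\partial X}(\mathbb T)$ is the quotient map and $L_X=\operatorname{Im}(r_X)$. The next step is to show that $\pi(L_X)$ is a closed subtorus. Since $r_X$ maps the lattice $H^1(X;\Lambda)$, modulo torsion, into $H^1(\partial X;\Lambda)$, the image $r_X(H^1(X;\Lambda))$ is a lattice spanning $L_X$ over $\mathbb R$. Indeed, $H^1(X;\Lambda)\otimes\mathbb R=H^1(X;\mathfrak t)$ by universal coefficients, since $H^1$ with free coefficients is free. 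Hence $L_X$ is a rational subspace, meaning $L_X\cap H^1(\partial X;\Lambda)$ is a full lattice in $L_X$, so $\pi(L_X)\cong L_X/(L_X\cap H^1(\partial X;\Lambda))$ is a compact subtorus. Its tangent space is $L_X$, and Proposition~\ref{prop:toral-boundary-lagrangian} shows that $L_X$ is Lagrangian for $\omega_{\partial X,K}$. Translation is a symplectomorphism of the torus, since $\omega_{\partial X,K}$ is translation invariant. Therefore $\Lambda_{X,p}$ is a Lagrangian subtorus with $T\Lambda_{X,p}=L_X$.

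Finally, $\Tors H^2(X;\Lambda)$ is finite because $X$ is compact, so the total image $\bigcup_p\Lambda_{X,p}$ is a finite union of translates of the same subtorus $\pi(L_X)$. These translates are parallel Lagrangian subtori; distinct components may coincide, and that is harmless.

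I expect the main obstacle to be the first step: justifying the equivariance $r_{X,p}(A_0+[a])=r_{X,p}(A_0)+[r_Xa]$ in a way that is compatible with the identification of $\mathcal M_{\partial X}(\mathbb T)$ as $H^1(\partial X;\mathfrak t)/H^1(\partial X;\Lambda)$. I would handle this by working directly in $H^1(-;\mathbb T)$ as a group. There restriction is a homomorphism, and the boundary identification is the isomorphism $H^1(\partial X;\mathfrak t)/H^1(\partial X;\Lambda)\cong H^1(\partial X;\mathbb T)$, which holds because $H^2(\partial X;\Lambda)$ is torsion-free. A secondary point is rationality of $L_X$. The argument above gives it through the integral lattice, with no appeal to a choice of basis.
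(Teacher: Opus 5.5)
Your proof is correct and follows the paper's argument: each torsion component is an affine torus over $H^1(X;\mathfrak t)/H^1(X;\Lambda)$, and restriction is affine with linear part $r_X$. Hence $\Lambda_{X,p}$ is a translate of the torus determined by $L_X$, which is Lagrangian by Proposition~\ref{prop:toral-boundary-lagrangian}. You also supply two details the paper leaves implicit: the rationality argument (that $r_X(H^1(X;\Lambda))$ spans $L_X$, so $\pi(L_X)$ is a closed subtorus) and the finiteness of $\Tors H^2(X;\Lambda)$.
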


\begin{proof}
Each moduli component $\mathcal M_{X,p}(\mathbb T)$ is an affine torus modeled
on $H^1(X;\mathfrak t)/H^1(X;\Lambda)$, and restriction of flat connections is
an affine map whose linear part is
\[
r_X:H^1(X;\mathfrak t)\to H^1(\partial X;\mathfrak t).
\]
Therefore the image $\Lambda_{X,p}$ is a translate of the torus corresponding
to $\operatorname{Im}(r_X)=L_X$.
Since $L_X$ is Lagrangian, each $\Lambda_{X,p}$ is a translate of the corresponding Lagrangian subtorus.
\end{proof}

All leaves \(\Lambda_{X,p}\) are parallel translates of the same Lagrangian torus determined by \(L_X\). Thus the topological class \(p\) shifts the
position of the boundary leaf in
\(\mathcal M_{\partial X}(\mathbb T)\), but it does not change its tangent
space \(T\Lambda_{X,p}=L_X\).

\subsection{Classical Toral Chern--Simons Section}

We now pass from the boundary geometry to the classical action. On each torsion component \(p\in\Tors H^2(X;\Lambda)\), the exponentiated toral Chern--Simons
functional determines a distinguished section of the pullback of the boundary prequantum line bundle. This is the toral analogue of Manoliu's Chern--Simons section \cite{Manoliu2}. For the classical theory with boundary one must use the normalized
Chern--Weil pairing
\[
\frac{1}{(2\pi)^2}K(F_{\widetilde\Theta}\wedge F_{\widetilde\Theta}),
\]
so that for a closed $4$--manifold $W$ and a principal $\mathbb T$--bundle
$\widetilde P\to W$ one has
\[
\frac{1}{(2\pi)^2}
\int_W K(F_{\widetilde\Theta}\wedge F_{\widetilde\Theta})
=
K\bigl(c(\widetilde P),c(\widetilde P)\bigr)[W]\in 2\mathbb Z
\]
because $K$ is even and integral.
Accordingly, the exponentiated action on a closed $3$--manifold is defined by
\[
e^{\pi i S_{\mathbb T,K;X,P}(\Theta)}
:=
\exp\!\left(
\frac{\pi i}{(2\pi)^2}
\int_W K(F_{\widetilde\Theta}\wedge F_{\widetilde\Theta})
\right).
\]

The key point is that on each torsion component the bulk Chern--Simons functional defines a boundary
line element whose gauge variation vanishes on flat connections and whose differential matches the
prequantum connection on the boundary line.

\begin{theorem}
\label{thm:Componentwise-toral-CS-section}
Let $X$ be a compact oriented $3$--manifold with boundary and let
$p\in\Tors H^2(X;\Lambda)$.
For any principal $\mathbb T$--bundle $P\to X$ with $c(P)=p$, the classical toral
Chern--Simons theory determines a nowhere-vanishing section
\[
\widehat\sigma_P:\mathcal A_P^{\mathrm{flat}}
\longrightarrow
r^*\mathcal L_{\partial X,K}
\]
which is invariant under the gauge group $\mathcal G(P)$.
It therefore descends to a nowhere-vanishing section
\[
\sigma_{X,p}:\mathcal M_{X,p}(\mathbb T)\longrightarrow
r_{X,p}^{\,*}\mathcal L_{\partial X,K}
\]
such that:

\begin{enumerate}
\item[\textup{(i)}]
$\sigma_{X,p}$ is well defined on gauge-equivalence classes;

\item[\textup{(ii)}]
$\sigma_{X,p}$ is covariantly constant with respect to the pullback
connection on $r_{X,p}^{\,*}\mathcal L_{\partial X,K}$;

\item[\textup{(iii)}]
the leaf $\Lambda_{X,p}$ is Bohr--Sommerfeld for the boundary polarization defined by $L_X^{\mathbb R}$.
\end{enumerate}
\end{theorem}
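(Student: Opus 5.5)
The plan is to build $\widehat\sigma_P$ directly from the filling model of the boundary line in Definition~\ref{def:toral-boundary-line-filling}. For $\Theta\in\mathcal A_P^{\mathrm{flat}}$ with boundary field $(Q,\eta)=(P|_{\partial X},\Theta|_{\partial X})$, the quadruple $(X,P,\Theta,\mathrm{id})$ is itself a filling; here $Q$ is trivializable by Lemma~\ref{lem:boundary-triviality-toral}, since $p$ is torsion. Set
\[
\widehat\sigma_P(\Theta):=[(X,P,\Theta,\mathrm{id}),1]\in L_{(Q,\Theta|_{\partial X})}.
\]
By Proposition~\ref{prop:well-defined-boundary-line} this is well defined and of norm $1$, so it is nowhere vanishing. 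This is the toral version of Manoliu's tautological Chern--Simons vector: the exponentiated action is encoded by the phase comparing this filling with any other one.

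The first step is gauge invariance. For $u\in\mathcal G(P)$, the fillings $(X,P,u^*\Theta,\mathrm{id})$ and $(X,P,\Theta,u|_{\partial X}\circ\mathrm{id})$ are isomorphic via $u$, so Proposition~\ref{prop:gauge-action-boundary-line} gives $\widehat\sigma_P(u^*\Theta)=(u|_{\partial X})_*\widehat\sigma_P(\Theta)$. Since the boundary line bundle is constructed as a $\mathcal G_{\partial X}$-equivariant bundle and then descended (Proposition~\ref{prop:canonical-toral-prequantum}), this equivariance is exactly the statement that $\widehat\sigma_P$ defines a section over $\mathcal M_{X,p}(\mathbb T)=\mathcal A^{\mathrm{flat}}_P/\mathcal G(P)$ of $r_{X,p}^*\mathcal L_{\partial X,K}$, which proves (i).

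One subtlety needs checking. A gauge transformation $u$ that is nontrivial in $[X,\mathbb T]=H^1(X;\Lambda)$ might a priori introduce a phase, namely $K(c,\,\cdot)$ evaluated on the mapping torus. Gluing the two cylinders produces a closed $4$-manifold $S^1\times X$ glued up, with bundle twisted by $u$. By the evenness argument of Proposition~\ref{prop:well-defined-boundary-line}, its Chern--Weil integral lies in $2\mathbb Z$. Moreover, the $F\wedge F$ term vanishes identically for a flat $\Theta$ on the product region, so no phase appears.

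For (ii), take a smooth path $\Theta_t$ of flat connections on $P$. The transport in Definition~\ref{def:cylinder-transport} glues $C=[0,1]\times\partial X$ carrying some $\Xi$ onto $X$. We compare $X\cup C$, with $\Theta_0\cup\Xi$, against $X$ with $\Theta_1$. The comparison $4$-manifold is $W=[0,1]\times X$ with the connection $\widetilde\Theta=\Theta_t$, corners smoothed, and its boundary is exactly $(X\cup C)\cup -X$ once $\Xi$ is chosen to be the restriction of $\Theta_t$ to $[0,1]\times\partial X$; this choice is legitimate by the $\Xi$-independence in Proposition~\ref{prop:boundary-line-connection-curvature}. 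The phase is $\exp\bigl(\frac{\pi i}{(2\pi)^2}\int_W K(F\wedge F)\bigr)$. With $F_{\widetilde\Theta}=dt\wedge\dot\Theta_t+F_{\Theta_t}=dt\wedge\dot\Theta_t$, the integrand $K(F\wedge F)$ vanishes because $dt\wedge dt=0$. Hence $\mathrm{PT}(\widehat\sigma_P(\Theta_0))=\widehat\sigma_P(\Theta_1)$, i.e.\ the section is parallel for the pullback connection.

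The main obstacle is making this $4$-manifold-with-corners comparison honest, in particular matching orientations and collars. As a cross-check, I would also verify the infinitesimal version: in the local form $\alpha_\eta(\dot\eta)=-\pi i\int_{\partial X}K(\eta\wedge\dot\eta)$ of Proposition~\ref{prop:boundary-line-connection-curvature}, the variation $\delta CS=\int_X K(2F\wedge\dot\Theta)+\int_{\partial X}K(\Theta\wedge\dot\Theta)$ shows that on the flat locus the derivative of the action is precisely the boundary term, which cancels $\alpha$.

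Finally, (iii) follows from (ii). By Proposition~\ref{prop:Componentwise-boundary-leaf-geometry}, $r_{X,p}\colon\mathcal M_{X,p}\to\Lambda_{X,p}$ is a surjective affine map of tori onto a leaf of $\mathcal P_{L_X}$. Any loop $\gamma$ in $\Lambda_{X,p}$ is homotopic to the image of a loop in $\mathcal M_{X,p}$: write $\gamma$'s lift as a $\mathcal P_{L_X}$-vector $r_X(a)+\lambda$ with $\lambda$ in the lattice, so that $\gamma$ is the image of the path $t\mapsto ta$, and $\lambda\in H^1(\partial X;\Lambda)\cap L_X$. Holonomy of $\mathcal L_{\partial X,K}$ along $\gamma$ is then computed by transporting the nonvanishing parallel section $\sigma_{X,p}$, giving trivial holonomy provided the endpoints agree. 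The delicate point is $\lambda$ not lying in $r_X(H^1(X;\Lambda))$, i.e.\ finite-index issues. Here I would instead use a gauge transformation of the boundary extending over $X$ up to finite index, together with the even-phase argument above, or else argue directly with the explicit holonomy $\exp(2\pi i\,\omega_{\partial X,K}(y,\lambda))$, noting that $\omega_{\partial X,K}(r_Xa,\lambda)\in\mathbb Z$ by the integration-by-parts identity from Proposition~\ref{prop:toral-boundary-lagrangian}. So the Bohr--Sommerfeld condition of Proposition~\ref{prop:toral-BS} holds.
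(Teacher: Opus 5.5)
Your arguments for (i) and (ii) are sound, and somewhat cleaner than the paper's. The paper removes the gauge phase by the interpolation $\Theta+t\,u^*\vartheta$ plus flatness, and proves horizontality by an infinitesimal log-derivative computation. Your isomorphism-of-fillings argument instead gives directly the equivariance $\widehat\sigma_P(u\cdot\Theta)=(u|_{\partial X})_*\widehat\sigma_P(\Theta)$ that descent actually needs. Your comparison manifold $[0,1]\times X$ with $F_{\widetilde\Theta}=dt\wedge\dot\Theta_t$ then gives parallelism in finite form.

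The genuine gap is in (iii), exactly at the point you flag, and neither of your remedies closes it. Parallelism of $\sigma_{X,p}$ on $\mathcal M_{X,p}(\mathbb T)$ only kills the holonomy of $\mathcal L_{\partial X,K}|_{\Lambda_{X,p}}$ along the sublattice $r_X(H^1(X;\Lambda))\subset L_X\cap H^1(\partial X;\Lambda)$. When this sublattice has index $d>1$, the map $r_{X,p}$ has disconnected fibres and $\sigma_{X,p}$ need not descend to the leaf.
\begin{itemize}
\item Remedy (a) yields only that the residual holonomy is a $d$-th root of unity. The evenness argument concerns closed $4$-manifolds and says nothing about which root occurs.
\item Remedy (b) evaluates the wrong pairing. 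Points of $\Lambda_{X,p}$ are $y_p+r_X(a)$, with $y_p$ the boundary value of a flat connection of class $p$, and in general $y_p\notin L_X+H^1(\partial X;\Lambda)$. The holonomy along $\lambda$ is $\exp(\pm 2\pi i\,\omega_{\partial X,K}(y_p,\lambda))$. The term $\omega_{\partial X,K}(r_Xa,\lambda)$ vanishes by isotropy and is irrelevant. Stokes gives $\omega_{\partial X,K}(y_p,\lambda)\in\mathbb Z$ only for $\lambda\in r_X(H^1(X;\Lambda))$. For general $\lambda$ the pairing lies in $\tfrac1d\mathbb Z$; it is a $K$-weighted linking of $p$ with the torsion class $\delta\lambda\in H^2(X,\partial X;\Lambda)$.
\end{itemize}

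The gap cannot be filled, because (iii) is false as stated. The paper's own proof hides the same step when it calls $\sigma_{X,p}$ a parallel section of $\mathcal L_{\partial X,K}|_{\Lambda_{X,p}}$. For a counterexample, take $X=N$, the twisted $I$-bundle over the Klein bottle, with $\Lambda=\mathbb Z^2$ and the paper's own $K=\begin{pmatrix}2&1\\1&2\end{pmatrix}$.
\begin{itemize}
\item Here $\pi_1(\partial N)=\langle a,\beta=b^2\rangle$ and $H_1(N)=\mathbb Z/2\langle a\rangle\oplus\mathbb Z\langle b\rangle$, with $\beta\mapsto 2b$.
\item In the dual basis, $L_N^{\mathbb R}=\mathbb R\beta^*$ while $r_N(H^1(N;\mathbb Z))=2\mathbb Z\beta^*$. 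So $r_{N,p}$ is the $4$-fold cover $\rho(b)\mapsto\rho(b)^2$ of $U(1)^2$.
\item Let $p$ be the class of a flat connection with $\rho(a)=(-1,1)$. Its leaf is $\Lambda_{N,p}=\{\theta_a=(\tfrac12,0)\}$, where $\theta_a\in\mathfrak t$ is the $a^*$-coordinate, i.e.\ the transverse coordinate $y$ of Proposition~\ref{prop:toral-BS} for $L=\mathbb R\beta^*$.
\item Since $K(\tfrac12,0)^{\top}=(1,\tfrac12)^{\top}\notin\Lambda^*$, the leaf is not Bohr--Sommerfeld; the holonomy along $\beta^*\otimes e_2$ is $-1$.
\end{itemize}

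What is true is that $\Lambda_{X,p}$ is Bohr--Sommerfeld precisely when the holonomy character on $(L_X\cap H^1(\partial X;\Lambda))/r_X(H^1(X;\Lambda))$ is trivial, e.g.\ when the index is $1$. Otherwise the values of $\sigma_{X,p}$ on the sheets of $r_{X,p}$ over a point sum to zero. Such a component therefore contributes nothing, rather than a Bohr--Sommerfeld state.
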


\begin{proof}
Let $\Theta\in \mathcal A^{\mathrm{flat}}_P$. Since $(P|_{\partial X},\Theta|_{\partial X})$
is a boundary field in the sense of Definition \ref{def:filling-boundary-field}, the bulk datum $(X,P,\Theta,\mathrm{id})$
defines an element
\[
\widehat{\sigma}_P(\Theta):=[(X,P,\Theta,\mathrm{id}),1]\in
L(P|_{\partial X},\Theta|_{\partial X}).
\]
This gives a nowhere-vanishing section over $\mathcal A^{\mathrm{flat}}_P$. We first prove gauge invariance. Let $u\in \mathcal G(P)$. Since $\mathbb T$ is Abelian, the
gauge-transformed connection is
\[
u\cdot \Theta=\Theta+u^*\vartheta,
\]
where $\vartheta$ is the Maurer--Cartan form on $\mathbb T$. To compare
$\widehat{\sigma}_P(\Theta)$ and $\widehat{\sigma}_P(u\cdot\Theta)$, glue the two fillings
$(X,P,\Theta,\mathrm{id})$ and $(X,P,u\cdot\Theta,\mathrm{id})$ along the boundary using the
boundary gauge transformation $u|_{\partial X}$. By Definition \ref{def:toral-boundary-line-filling}, their ratio is the phase
\[
\exp\!\left(
\frac{\pi i}{(2\pi)^2}\int_{[0,1]\times X} K(F_{\widetilde\Theta}\wedge F_{\widetilde\Theta})
\right),
\]
where $\widetilde\Theta$ is any connection on $[0,1]\times P\to [0,1]\times X$ restricting to
$\Theta$ and $u\cdot\Theta$ at the two ends. Choose the straight-line interpolation $\widetilde\Theta=\Theta+t\,u^*\vartheta.$
Because $\mathbb T$ is Abelian, $d\vartheta=0$, hence $F_{\widetilde\Theta}=F_\Theta+dt\wedge u^*\vartheta.$ Since $\Theta$ is flat, $F_\Theta=0$, and therefore
\[
K(F_{\widetilde\Theta}\wedge F_{\widetilde\Theta})
=
2\,dt\wedge K(F_\Theta\wedge u^*\vartheta)=0.
\]
Equivalently, before imposing flatness one finds
\[
\frac{1}{(2\pi)^2}\int_{[0,1]\times X} K(F_{\widetilde\Theta}\wedge F_{\widetilde\Theta})
=
\frac{1}{(2\pi)^2}\int_X K(F_\Theta\wedge u^*\vartheta),
\]
so on the flat locus the exponent vanishes. Hence $\widehat{\sigma}_P(u\cdot\Theta)=\widehat{\sigma}_P(\Theta).$ Thus $\widehat{\sigma}_P$ is $\mathcal G(P)$--invariant and descends to a section
\[
\sigma_{X,p}\colon \mathcal M_{X,p}(\mathbb T)\to r_{X,p}^*L_{\partial X,K}.
\]
This proves (i).

We next prove horizontality. Let $\Theta_t$ be a smooth path in
$\mathcal A^{\mathrm{flat}}_P$. Using the cylinder construction of Definition \ref{def:cylinder-transport} with the
connection
\[
\Xi=\Theta_t+dt\,0
\]
on $[0,1]\times P$, the logarithmic derivative of the corresponding boundary line element is
computed exactly as in Proposition \ref{prop:boundary-line-connection-curvature}:
\[
\frac{d}{dt}\log \widehat{\sigma}_P(\Theta_t)
=
-\pi i\int_{\partial X} K(\Theta_t\wedge \dot\Theta_t)
+\frac{\pi i}{(2\pi)^2}\int_X K(F_{\Theta_t}\wedge \dot\Theta_t).
\]
Since each $\Theta_t$ is flat, the bulk term vanishes, so
\[
\frac{d}{dt}\log \widehat{\sigma}_P(\Theta_t)
=
-\pi i\int_{\partial X} K(\Theta_t\wedge \dot\Theta_t).
\]
On the other hand, Proposition \ref{prop:boundary-line-connection-curvature} shows that the pullback connection on
$r^*L_{\partial X,K}$ has connection $1$--form
\[
\alpha_{\Theta_t}(\dot\Theta_t)
=
-\pi i\int_{\partial X} K(\Theta_t\wedge \dot\Theta_t).
\]
Therefore the covariant derivative of $\widehat{\sigma}_P$ along the path $\Theta_t$ is zero.
Hence $\widehat{\sigma}_P$ is horizontal, and the descended section $\sigma_{X,p}$ is horizontal
as well. This proves (ii).

Finally, $\Lambda_{X,p}\subset \mathcal M_{\partial X}(\mathbb T)$ is the image of the component
$\mathcal M_{X,p}(\mathbb T)$ under boundary restriction, so $\sigma_{X,p}$ is a nowhere-vanishing
parallel section of the restriction of $L_{\partial X,K}$ to $\Lambda_{X,p}$. A flat Hermitian line
bundle admitting a nowhere-vanishing parallel section has trivial holonomy. Therefore $L_{\partial X,K}|_{\Lambda_{X,p}}$ has trivial leafwise holonomy. By definition, this is exactly the Bohr--Sommerfeld condition for
the leaf $\Lambda_{X,p}$ with respect to the polarization determined by
$L_X^{\mathbb R}$. Thus $\Lambda_{X,p}$ is Bohr--Sommerfeld, proving (iii).
\end{proof}

The section \(\sigma_{X,p}\) is the classical contribution of the bulk manifold \(X\) in the torsion sector \(p\). The crucial point is that it is horizontal, hence constant with respect to the pullback connection along the leaf \(\Lambda_{X,p}\). This is exactly why \(\Lambda_{X,p}\) is Bohr--Sommerfeld.

\subsection{Torsion half-densities on torsion components}

 For a finite-dimensional real vector space $V$, write
\[
\Det(V):=\Lambda^{\mathrm{top}}V,
\qquad
|\Det(V)|:=\text{the density line of }V.
\]
For a finite cochain complex $C^\bullet$, write
\[
\Det(C^\bullet):=\bigotimes_i \Det(C^i)^{(-1)^i},
\qquad
\Det H^\bullet(C^\bullet):=\bigotimes_i \Det(H^i(C^\bullet))^{(-1)^i}.
\]

A parallel section \(\sigma_{X,p}\) of the prequantum line is not yet a vector in the quantized Hilbert space. To obtain an actual boundary state one must also supply the appropriate half-density along the leaf \(\Lambda_{X,p}\). This
half-density is extracted from Reidemeister torsion.
The tangent space of a boundary leaf is determined by the image of the restriction map, so the long exact
sequence of the pair $(X,\partial X)$ induces a canonical determinant-line identification for the leaf.

\begin{prop}
\label{prop:toral-determinant-line}
For each $p \in \operatorname{Tors} H^2(X;\Lambda)$ there is a canonical isomorphism
\[
\bigl|\det(T_e^* \Lambda_{X,p})\bigr|^{1/2}
\cong
\bigl|\det(H^1(X;\mathfrak t)^*)\bigr|^{1/2}
\otimes
\bigl|\det(H^1(X,\partial X;\mathfrak t))\bigr|^{1/2}.
\]
\end{prop}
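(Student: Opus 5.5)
The plan is to compute $T_e\Lambda_{X,p}$ from the earlier results, cut the long exact sequence of the pair $(X,\partial X)$ down to a short exact sequence, and then apply the standard determinant-line isomorphism for short exact sequences.

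\textbf{Step 1: the tangent space.} By Proposition~\ref{prop:Componentwise-boundary-leaf-geometry}, every leaf $\Lambda_{X,p}$ is a translate of the torus determined by $L_X$. Hence, canonically and independently of $p$ and of the base point $e$,
\[
T_e\Lambda_{X,p}=L_X=\operatorname{Im}(r_X).
\]
So it suffices to build a canonical isomorphism
\[
|\det(L_X^*)|^{1/2}\cong |\det(H^1(X;\mathfrak t)^*)|^{1/2}\otimes|\det(H^1(X,\partial X;\mathfrak t))|^{1/2}.
\]

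\textbf{Step 2: the short exact sequence.} The segment
\[
H^1(X,\partial X;\mathfrak t)\xrightarrow{j^*}H^1(X;\mathfrak t)\xrightarrow{r_X}H^1(\partial X;\mathfrak t)
\]
of the long exact sequence, already used in Proposition~\ref{prop:toral-boundary-lagrangian}, gives
\[
0\to \operatorname{Im}(j^*)\to H^1(X;\mathfrak t)\to L_X\to 0 .
\]
Here $\operatorname{Im}(j^*)\cong H^1(X,\partial X;\mathfrak t)/\operatorname{Im}(H^0(\partial X;\mathfrak t)\to H^1(X,\partial X;\mathfrak t))$.

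\textbf{Step 3: determinant lines.} The short exact sequence yields the canonical isomorphism
\[
\det H^1(X;\mathfrak t)\cong \det(\operatorname{Im} j^*)\otimes \det L_X .
\]
Dualizing and taking absolute values and square roots gives
\[
|\det L_X^*|^{1/2}\cong |\det H^1(X;\mathfrak t)^*|^{1/2}\otimes|\det(\operatorname{Im} j^*)|^{1/2}.
\]

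\textbf{Step 4: the obstacle.} The main difficulty is replacing $\operatorname{Im}(j^*)$ by all of $H^1(X,\partial X;\mathfrak t)$. These differ by the cokernel of $H^0(X;\mathfrak t)\to H^0(\partial X;\mathfrak t)$.

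I would treat this by using the full acyclic long exact sequence from $H^0(X,\partial X)$ up to $H^1(X)$, cut off at $L_X$. Its determinant gives an isomorphism involving the densities of $H^0(X;\mathfrak t)$ and $H^0(\partial X;\mathfrak t)$. Both of these are canonically trivialized: constants are $\mathfrak t$-valued, and $\mathfrak t$ carries the integral lattice $\Lambda$, which fixes a canonical density with covolume $1$. Using these trivializations removes the $H^0$ factors.

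When $\partial X$ is connected and $X$ is connected, $H^0(X)\to H^0(\partial X)$ is an isomorphism. Then $j^*$ is injective and no correction is needed, so that case should be checked first.

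A sign or orientation ambiguity in the determinant isomorphism is harmless, because only densities are involved. Naturality holds because all maps are induced by the cohomology sequence and the lattice $\Lambda$.
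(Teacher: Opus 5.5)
Your proposal is correct and is essentially the paper's proof. The paper applies the determinant functor to the exact sequence $0\to H^0(X,\partial X;\mathfrak t)\to H^0(X;\mathfrak t)\to H^0(\partial X;\mathfrak t)\to H^1(X,\partial X;\mathfrak t)\to H^1(X;\mathfrak t)\to T_e\Lambda_{X,p}\to 0$ and removes the $H^0$ density factors using the covolume-one density determined by $\Lambda$, which is exactly your Step 4; your short-exact-sequence detour in Steps 2--3 is subsumed by it. One small addition: the factor $|\det H^0(X,\partial X;\mathfrak t)|$ (nonzero when $X$ has closed components) also appears and must be trivialized in the same way as the other two $H^0$ terms.
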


\begin{proof}
By Proposition~\ref{prop:Componentwise-boundary-leaf-geometry}, the leaf $\Lambda_{X,p}$ is a translate of the
Lagrangian torus determined by $L_X$, so its tangent space at any point is canonically identified with
$L_X$. In particular,
\[
T_e\Lambda_{X,p}\cong L_X,
\]
where $e\in \Lambda_{X,p}$ is any chosen basepoint. Since $L_X=\operatorname{Im}(r_X)$ by
definition, we may compute $T_e\Lambda_{X,p}$ from the long exact sequence of the pair
$(X,\partial X)$. Because $X$ is a compact oriented $3$--manifold with boundary, the relevant portion of the long exact
sequence is
\[
\begin{aligned}
0 \longrightarrow H^0(X,\partial X;\mathfrak t)
\longrightarrow H^0(X;\mathfrak t)
\longrightarrow H^0(\partial X;\mathfrak t)
\longrightarrow H^1(X,\partial X;\mathfrak t) \\
\longrightarrow H^1(X;\mathfrak t)
\xrightarrow{\,r_X\,}
H^1(\partial X;\mathfrak t).
\end{aligned}
\]

and exactness identifies
\[
\operatorname{Im}(r_X)=L_X\cong T_e\Lambda_{X,p}.
\]
Therefore we obtain a finite exact sequence
\[
\begin{aligned}
0 \longrightarrow H^0(X,\partial X;\mathfrak t)
\longrightarrow H^0(X;\mathfrak t)
\longrightarrow H^0(\partial X;\mathfrak t) \\
\longrightarrow H^1(X,\partial X;\mathfrak t)
\longrightarrow H^1(X;\mathfrak t)
\longrightarrow T_e\Lambda_{X,p}
\longrightarrow 0.
\end{aligned}
\]

Apply the determinant functor to this exact sequence. For any exact sequence
\[
0\to V_0\to V_1\to \cdots \to V_n\to 0
\]
of finite-dimensional real vector spaces, there is a canonical isomorphism
\[
\det(V_0)\otimes \det(V_2)\otimes \cdots
\cong
\det(V_1)\otimes \det(V_3)\otimes \cdots.
\]
Using this for the above exact sequence gives a canonical identification
\begin{equation}\label{eq:det-line-from-les}
\begin{aligned}
\det\bigl(T_e\Lambda_{X,p}\bigr)^*
&\cong
\det\bigl(H^1(X;\mathfrak t)\bigr)^*
\otimes
\det\bigl(H^1(X,\partial X;\mathfrak t)\bigr) \\
&\quad \otimes
\det\bigl(H^0(\partial X;\mathfrak t)\bigr)^*
\otimes
\det\bigl(H^0(X;\mathfrak t)\bigr)
\otimes
\det\bigl(H^0(X,\partial X;\mathfrak t)\bigr)^*.
\end{aligned}
\end{equation}
It remains to simplify the \(H^{0}\)-terms. Since \(\mathbb T= \mathfrak t/\Lambda\), the lattice \(\Lambda\) determines a canonical translation-invariant density on \(\mathfrak t\), normalized so that \(\Lambda\) has covolume \(1\). Hence each of the spaces \(H^{0}(X,\partial X;\mathfrak t)\), \(H^{0}(X;\mathfrak t)\), and \(H^{0}(\partial X;\mathfrak t)\), being canonically a finite direct sum of copies of \(\mathfrak t\), carries a canonically trivial density line. It follows that the \(H^{0}\)-factors in \eqref{eq:det-line-from-les} are canonically trivial after passing to density lines, and therefore
\[
\bigl|\det(T_e\Lambda_{X,p})^*\bigr|
\cong
\bigl|\det(H^1(X;\mathfrak t))^*\bigr|
\otimes
\bigl|\det(H^1(X,\partial X;\mathfrak t))\bigr|.
\]
Finally, since for any finite-dimensional real vector space $V$ one has $|\det(V^*)|\cong |\det(V)|$,
taking positive square roots yields
\[
\bigl|\det(T_e^*\Lambda_{X,p})\bigr|^{1/2}
\cong
\bigl|\det(H^1(X;\mathfrak t)^*)\bigr|^{1/2}
\otimes
\bigl|\det(H^1(X,\partial X;\mathfrak t))\bigr|^{1/2},
\]
as claimed.
\end{proof}

\begin{definition}
\label{def:torsion-density}
Let $C^\bullet(X;\mathfrak t)$ be any cellular cochain complex computing the cohomology of $X$
with coefficients in the trivial local system $\mathfrak t$. Its Reidemeister torsion determines a positive
density
\[
T_X(\mathfrak t)\in \bigl|\det H^\bullet(X;\mathfrak t)\bigr|.
\]
If $\partial X\neq \varnothing$, then by Poincar\'e--Lefschetz duality and the canonical determinant-line
identification, we regard this torsion density as an element
\begin{equation}\label{eq:torsion-boundary-identification}
T_X(\mathfrak t)\in
\bigl|\det(H^1(X;\mathfrak t)^*)\bigr|
\otimes
\bigl|\det(H^1(X,\partial X;\mathfrak t)^*)\bigr|.
\end{equation}
\end{definition}

When $\partial X\neq \varnothing$, the point is that $T_X(\mathfrak t)$ naturally couples the bulk
cohomology $H^1(X;\mathfrak t)$ with the relative cohomology $H^1(X,\partial X;\mathfrak t)$, and
Proposition~\ref{prop:toral-determinant-line} transfers this information to the density line of the boundary
leaf.

\begin{definition}
\label{def:Componentwise-half-density}
Fix $p\in \operatorname{Tors}H^2(X;\Lambda)$. Choose any nonzero element
\[
w\in \bigl|\det(H^1(X,\partial X;\mathfrak t)^*)\bigr|.
\]
Since $H^1(X,\partial X;\mathbb T)$ is a compact torus with tangent space
$H^1(X,\partial X;\mathfrak t)$ at the identity, the density $w$ determines a unique translation-invariant
density
\[
\overline{w}\in \Gamma\!\left(
H^1(X,\partial X;\mathbb T),\,
\bigl|\det T^* H^1(X,\partial X;\mathbb T)\bigr|
\right).
\]
Define
\begin{equation}
\label{eq:mu-tilde-definition}
\widetilde{\mu}_{X,p}
:=
\left(\int_{H^1(X,\partial X;\mathbb T)} \overline{w}\right)\,
T_X(\mathfrak t)^{1/2}\otimes w^{-1}.
\end{equation}

Using Proposition~\ref{prop:toral-determinant-line}, we regard $\widetilde{\mu}_{X,p}\in \bigl|\det(T_e^*\Lambda_{X,p})\bigr|^{1/2}.$ The corresponding translation-invariant half-density on $\Lambda_{X,p}$ is denoted
\[
\mu_{X,p}\in
\Gamma\!\left(
\Lambda_{X,p},\,
\bigl|\det(T^*\Lambda_{X,p})\bigr|^{1/2}
\right),
\]
and is obtained by transporting $\widetilde{\mu}_{X,p}$ from the identity tangent space to every point
of the leaf.
\end{definition}

\begin{prop}
\label{prop:well-defined-componentwise-half-density}
The half-density $\mu_{X,p}$ of Definition~\ref{def:Componentwise-half-density} is independent of the auxiliary
choice of $w$. Moreover, it is translation invariant and nowhere vanishing.
\end{prop}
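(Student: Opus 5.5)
The plan is to show the dependence on $w$ cancels by a scaling computation, and then to read off translation invariance and nonvanishing from the construction. Any other nonzero choice is $w'=c\,w$ with $c>0$, since $|\det(H^1(X,\partial X;\mathfrak t)^*)|$ is a one-dimensional density line whose nonzero elements are positive multiples of one another.

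First, the translation-invariant density $\overline{w}$ on the compact torus $H^1(X,\partial X;\mathbb T)$ is obtained by transporting $w$ from the identity tangent space. This construction is linear in $w$, so $\overline{w'}=c\,\overline{w}$ and
\[
\int_{H^1(X,\partial X;\mathbb T)}\overline{w'}=c\int_{H^1(X,\partial X;\mathbb T)}\overline{w}.
\]
Second, in the density line the inverse satisfies $(w')^{-1}=c^{-1}w^{-1}$. Substituting both into \eqref{eq:mu-tilde-definition}, the factors $c$ and $c^{-1}$ cancel, so $\widetilde\mu_{X,p}$ is unchanged.

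The isomorphism of Proposition~\ref{prop:toral-determinant-line} does not involve $w$; it uses only the long exact sequence and the lattice normalization of the $H^0$-terms. Hence the element of $|\det(T_e^*\Lambda_{X,p})|^{1/2}$ it determines is independent of $w$. The torsion $T_X(\mathfrak t)$, together with its identification \eqref{eq:torsion-boundary-identification}, is likewise independent of $w$.

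For translation invariance, I would use Proposition~\ref{prop:Componentwise-boundary-leaf-geometry}. It shows that $\Lambda_{X,p}$ is a translate of a subtorus, so all tangent spaces of the leaf are canonically identified with $L_X$ by translation. Since $\mu_{X,p}$ is defined by transporting $\widetilde\mu_{X,p}$ through these identifications, it is translation invariant by construction. One point to check is that the result does not depend on the basepoint $e$. This holds because translations of a torus compose, and the identification $T_e\Lambda_{X,p}\cong L_X$ does not depend on $e$.

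For nonvanishing, each factor in \eqref{eq:mu-tilde-definition} is nonzero:
\begin{itemize}
\item $T_X(\mathfrak t)$ is a positive density, so $T_X(\mathfrak t)^{1/2}\neq 0$;
\item $w^{-1}\neq0$;
\item the volume $\int\overline w$ is strictly positive, since $\overline{w}$ is a nonzero invariant density on a compact torus.
\end{itemize}
The canonical isomorphism preserves nonvanishing, so $\widetilde\mu_{X,p}\neq0$, and its translates are nonzero at every point.

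The main subtlety is making sure the pairing $T_X(\mathfrak t)^{1/2}\otimes w^{-1}$ lands in the correct line. One must check that $w^{-1}$ lives in $|\det H^1(X,\partial X;\mathfrak t)|$, so that it contracts against the relative factor in \eqref{eq:torsion-boundary-identification}. Only after this is settled does the scaling argument above apply, since it relies on that contraction being linear in $w^{-1}$.
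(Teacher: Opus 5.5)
Your proposal is correct and follows the paper's proof essentially step for step. The rescaling $w\mapsto cw$ multiplies $\int\overline{w}$ by $c$ and $w^{-1}$ by $c^{-1}$, translation invariance holds by construction, and nonvanishing follows because each factor is nonzero. Your extra checks are refinements the paper leaves implicit: that $\int\overline{w}\neq 0$, that the result does not depend on the basepoint, and that $T_X(\mathfrak t)^{1/2}\otimes w^{-1}$ lands in the target line of Proposition~\ref{prop:toral-determinant-line}.
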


\begin{proof}
Let $w\in \bigl|\det(H^1(X,\partial X;\mathfrak t)^*)\bigr|$ be a nonzero density, and let $c>0$. Replacing $w$ by $cw$ replaces the associated
translation-invariant density $\overline{w}$ on $H^1(X,\partial X;\mathbb T)$ by $c\,\overline{w}$.
Therefore
\[
\int_{H^1(X,\partial X;\mathbb T)} c\,\overline{w}
=
c\int_{H^1(X,\partial X;\mathbb T)} \overline{w}.
\]
On the other hand, $(cw)^{-1}=c^{-1}w^{-1}$
in the dual density line. Hence the defining tensor
\[
\left(\int_{H^1(X,\partial X;\mathbb T)} \overline{w}\right)\,
T_X(\mathfrak t)^{1/2}\otimes w^{-1}
\]
is unchanged under the replacement $w\mapsto cw$. This proves that
$\widetilde{\mu}_{X,p}$, and therefore $\mu_{X,p}$, is independent of the auxiliary choice of $w$. By construction, $\mu_{X,p}$ is obtained by transporting a fixed nonzero element of
$\bigl|\det(T_e^*\Lambda_{X,p})\bigr|^{1/2}$ along the torus $\Lambda_{X,p}$. Therefore it is
translation invariant. Finally, the torsion density $T_X(\mathfrak t)$ is positive, hence its square root
$T_X(\mathfrak t)^{1/2}$ is nonzero, and $w^{-1}$ is nonzero because $w\neq 0$. Thus
$\widetilde{\mu}_{X,p}\neq 0$, so the resulting translation-invariant half-density $\mu_{X,p}$ is nowhere vanishing on $\Lambda_{X,p}$.

\end{proof}

\begin{remark}
The half-density $\mu_{X,p}$ is defined entirely from cohomological determinant-line data and Reidemeister
torsion, so its functorial properties follow from the corresponding functoriality of those two inputs.
\end{remark}

\begin{prop}
\label{prop:Componentwise-toral-torsion-half-density}
For each $p\in\Tors H^2(X;\Lambda)$, the half-density $\mu_{X,p}$ of
Definition~\ref{def:Componentwise-half-density} satisfies:

\begin{enumerate}
\item[\textup{(i)}]
$\mu_{X,p}$ is natural under orientation-preserving diffeomorphisms of $X$;

\item[\textup{(ii)}]
for disjoint unions,
\[
\mu_{X_1\sqcup X_2,(p_1,p_2)}
=
\mu_{X_1,p_1}\boxtimes \mu_{X_2,p_2};
\]

\item[\textup{(iii)}]
if $t_\xi:\Lambda_{X,p}\to\Lambda_{X,p'}$ is translation by a vector between
parallel leaves, then
\[
\mu_{X,p'}=(t_\xi)_*\mu_{X,p};
\]

\item[\textup{(iv)}]
if $X$ is closed, the same construction yields the translation-invariant
density on $\mathcal M_{X,p}(\mathbb T)$ induced by the square root of
Ray--Singer/Reidemeister torsion.
\end{enumerate}
\end{prop}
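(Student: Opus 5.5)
The plan is to derive each of the four properties directly from Definition~\ref{def:Componentwise-half-density}. The two inputs there are the Reidemeister torsion density $T_X(\mathfrak t)$ and the determinant-line identification of Proposition~\ref{prop:toral-determinant-line}. The normalizing factor $\int\overline w\cdot w^{-1}$ is independent of $w$ by Proposition~\ref{prop:well-defined-componentwise-half-density}. So in every case it suffices to check that each ingredient behaves as required, and then to choose $w$ compatibly on both sides of the identity being proved.

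For (i), let $f\colon X\to X'$ be an orientation-preserving diffeomorphism. It induces a cellular (after subdivision) isomorphism of cochain complexes with $\mathfrak t$ coefficients. Reidemeister torsion is a combinatorial invariant, so $f^*T_{X'}(\mathfrak t)=T_X(\mathfrak t)$ under the induced map on $\det H^\bullet$. Moreover, $f$ commutes with the long exact sequence of the pair and with restriction to the boundary. Hence the isomorphism of Proposition~\ref{prop:toral-determinant-line} is natural, and it carries $\Lambda_{X,p}$ to $\Lambda_{X',f_*p}$. Choosing $w'=(f^{-1})^*w$ shows that $f$ maps $\widetilde\mu_{X,p}$ to $\widetilde\mu_{X',f_*p}$.

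For (ii), the cohomology, the relative cohomology and the long exact sequence of $X_1\sqcup X_2$ all split as direct sums. The determinant functor turns direct sums into tensor products, compatibly with the exact-sequence isomorphism. Torsion is multiplicative under direct sums of complexes, so $T_{X_1\sqcup X_2}=T_{X_1}\otimes T_{X_2}$. Taking $w=w_1\otimes w_2$, the torus $H^1(X,\partial X;\mathbb T)$ is a product and the integral factorizes, which gives the product formula.

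For (iii), observe that $\widetilde\mu_{X,p}$ as constructed does not depend on $p$ at all. Only the identification $T_e\Lambda_{X,p}\cong L_X$ enters, and this holds for all $p$ by Proposition~\ref{prop:Componentwise-boundary-leaf-geometry}. Both $\mu_{X,p}$ and $\mu_{X,p'}$ are translation-invariant extensions of the same element of $|\det L_X^*|^{1/2}$. A translation $t_\xi$ has identity differential on $L_X$, so it pushes one forward to the other.

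For (iv), when $\partial X=\varnothing$ the relative and absolute groups coincide and there is no boundary leaf. The construction then reduces to regarding $T_X(\mathfrak t)^{1/2}$ via Poincaré duality as a density on $H^1(X;\mathfrak t)=T\mathcal M_{X,p}$, the $H^0$ and $H^3$ factors being trivialized by the lattice normalization. Extending it translation-invariantly to the affine torus of Proposition~\ref{prop:toral-flat-bundles-rigorous}(iii) gives the stated density. By Cheeger--Müller, Reidemeister torsion equals Ray--Singer torsion here.

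The main obstacle I expect is (iv), together with the signs and duality conventions in (i). In (iv) one must check that the $w$-normalization collapses consistently in the closed case and that the Poincaré-duality identification of Definition~\ref{def:torsion-density} matches the one used for closed manifolds. I would handle this by working at the level of density lines, where all signs disappear.
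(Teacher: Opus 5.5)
Your proposal is correct and follows essentially the same route as the paper: naturality of Reidemeister torsion and of the long-exact-sequence determinant identification for (i); multiplicativity of torsion with $w=w_1\otimes w_2$ and a factorized Haar integral for (ii); the observation that $\widetilde\mu_{X,p}$ depends only on the cohomology of $X$ and not on $p$ for (iii); and, for (iv), dropping the relative terms and reading $T_X(\mathfrak t)^{1/2}$ as a translation-invariant density on $H^1(X;\mathfrak t)$ transported over $\mathcal M_{X,p}(\mathbb T)$ (your bookkeeping, in which $T_X(\mathfrak t)$ lies in $|\det H^1(X;\mathfrak t)^*|^{\otimes 2}$ after Poincar\'e duality and the lattice trivialization of $H^0,H^3$ so that its square root is a density, is slightly cleaner than the paper's wording, and the Cheeger--M\"uller citation is a harmless addition). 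The caveat you flag in (iv) is genuine but does not separate you from the paper: literally setting $H^1(X,\partial X;\mathfrak t)=H^1(X;\mathfrak t)$ in Definition~\ref{def:Componentwise-half-density} would collapse $\widetilde\mu_{X,p}$ to a positive scalar (a total torsion volume) rather than a density, and the paper sidesteps this exactly as you do, by working ``without relative terms''.
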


\begin{proof}
\noindent
\emph{(i) Naturality.}
Let $f\colon X\to X'$ be an orientation-preserving diffeomorphism. Then $f$ induces canonical
isomorphisms
\[
f^*\colon H^1(X';\mathfrak t)\xrightarrow{\ \cong\ } H^1(X;\mathfrak t),
\qquad
f^*\colon H^1(X',\partial X';\mathfrak t)\xrightarrow{\ \cong\ } H^1(X,\partial X;\mathfrak t),
\]
and likewise on the corresponding determinant and density lines. It also identifies the torsion class
$p'\in \operatorname{Tors}H^2(X';\Lambda)$ with the corresponding class
$p=f^*p'\in \operatorname{Tors}H^2(X;\Lambda)$.

By the naturality of Reidemeister torsion under cellular subdivision and pullback by diffeomorphism, the
torsion density $T_{X'}(\mathfrak t)$ is carried to $T_X(\mathfrak t)$ under the induced determinant-line
isomorphism. Likewise, the canonical determinant-line identification of
Proposition~\ref{prop:toral-determinant-line} is functorial with respect to these induced maps. Therefore
the element
\[
\widetilde{\mu}_{X',p'}
\in \bigl|\det(T_e^*\Lambda_{X',p'})\bigr|^{1/2}
\]
of \eqref{eq:mu-tilde-definition} is transported to the corresponding element
\[
\widetilde{\mu}_{X,p}
\in \bigl|\det(T_e^*\Lambda_{X,p})\bigr|^{1/2}.
\]
Since $\mu_{X,p}$ and $\mu_{X',p'}$ are obtained by translation of these basepoint half-densities along
their respective leaves, it follows that $f$ identifies $\mu_{X',p'}$ with $\mu_{X,p}$. This proves
naturality.

\smallskip
\noindent
\emph{(ii) Disjoint unions.}
Let $X=X_1\sqcup X_2$, and let $p_i\in \operatorname{Tors}H^2(X_i;\Lambda)$ for $i=1,2$. Then
\[
H^1(X;\mathfrak t)\cong H^1(X_1;\mathfrak t)\oplus H^1(X_2;\mathfrak t),\,\,
H^1(X,\partial X;\mathfrak t)\cong
H^1(X_1,\partial X_1;\mathfrak t)\oplus H^1(X_2,\partial X_2;\mathfrak t),
\]
and similarly for the boundary leaves:
\[
\Lambda_{X,(p_1,p_2)}
=
\Lambda_{X_1,p_1}\times \Lambda_{X_2,p_2}.
\]
Under direct sum, determinant lines tensor-multiply canonically, and Reidemeister torsion is multiplicative.
Hence $T_X(\mathfrak t) = T_{X_1}(\mathfrak t)\otimes T_{X_2}(\mathfrak t)$
under the canonical identification of determinant lines. Likewise, if
\[
w_i\in \bigl|\det(H^1(X_i,\partial X_i;\mathfrak t)^*)\bigr|
\qquad (i=1,2)
\]
are nonzero densities, then
\[
w_1\otimes w_2
\in
\bigl|\det(H^1(X,\partial X;\mathfrak t)^*)\bigr|
\]
is the corresponding density for the disjoint union, and the induced Haar densities satisfy
\[
\overline{w_1\otimes w_2}
=
\overline{w}_1\boxtimes \overline{w}_2.
\]
Therefore
\[
\int_{H^1(X,\partial X;\mathbb T)} \overline{w_1\otimes w_2}
=
\left(\int_{H^1(X_1,\partial X_1;\mathbb T)} \overline{w}_1\right)
\left(\int_{H^1(X_2,\partial X_2;\mathbb T)} \overline{w}_2\right).
\]
Substituting these identities into \eqref{eq:mu-tilde-definition}, we obtain
\[
\widetilde{\mu}_{X,(p_1,p_2)}
=
\widetilde{\mu}_{X_1,p_1}\otimes \widetilde{\mu}_{X_2,p_2}.
\]
After extending these basepoint values to the corresponding translation-invariant half-densities on the leaves, this gives
\[
\mu_{X_1\sqcup X_2,(p_1,p_2)}
=
\mu_{X_1,p_1}\boxtimes \mu_{X_2,p_2},
\]
as claimed.

\smallskip
\noindent
\emph{(iii) Translation between parallel leaves.}
Fix two torsion classes $p,p'\in \operatorname{Tors}H^2(X;\Lambda)$. The leaves $\Lambda_{X,p}$ and
$\Lambda_{X,p'}$ are parallel translates of the same underlying Lagrangian torus, and
Proposition~\ref{prop:toral-determinant-line} identifies their tangent density lines with the same abstract
one-dimensional half-density line
\[
\bigl|\det(T_e^*\Lambda_{X,p})\bigr|^{1/2}
\cong
\bigl|\det(H^1(X;\mathfrak t)^*)\bigr|^{1/2}
\otimes
\bigl|\det(H^1(X,\partial X;\mathfrak t))\bigr|^{1/2}.
\]
In particular, the defining element $\widetilde{\mu}_{X,p}$ depends only on the cohomological data of
$X$ and not on the position of the leaf inside the ambient moduli space. Therefore the translation
$t_\xi\colon \Lambda_{X,p}\to \Lambda_{X,p'}$ carries the translation-invariant half-density determined by
$\widetilde{\mu}_{X,p}$ to the one determined by the same abstract tangent datum on $\Lambda_{X,p'}$.
Hence
\[
\mu_{X,p'}=(t_\xi)_*\mu_{X,p}.
\]

\smallskip
\noindent
\emph{(iv) Closed case.}
Assume now that $\partial X=\varnothing$. Then there is no boundary leaf; instead one works on the full
flat moduli torus $\mathcal M_{X,p}(\mathbb T).$ The same determinant-line formalism applies, but without relative terms. Reidemeister torsion now gives a
positive translation-invariant density on $\mathcal M_{X,p}(\mathbb T),$ or equivalently a positive element of the density line $\bigl|\det(H^1(X;\mathfrak t)^*)\bigr|.$
Taking its positive square root yields a translation-invariant half-density, and hence a translation-invariant
density on $\mathcal M_{X,p}(\mathbb T)$ induced by the square root of
Reidemeister\/--Ray\/--Singer torsion. This is precisely the closed analogue of the boundary-leaf
construction above.
\end{proof}

The half-density \(\mu_{X,p}\) makes the torsion contribution of the bulk manifold \(X\) into a geometric object living on the same leaf \(\Lambda_{X,p}\) as the Chern--Simons section \(\sigma_{X,p}\). Together with the Chern--Simons section \(\sigma_{X,p}\), the half-density \(\mu_{X,p}\) provides the two ingredients needed to form the boundary state \(Z^{CS}_{\mathbb T,K}(X)\).

\subsection{Canonical Toral Boundary Vector}
To normalize the torsion half-densities uniformly across torsion components, we compare each $\mu_{X,p}$
with the unique invariant half-density of unit total volume on the corresponding boundary leaf.
\begin{prop}
\label{prop:normalized-invariant-half-density}
For each boundary leaf $\Lambda_{X,p}$ in a torsion component, let
\[
d_{X,p}\in \Gamma\!\left(
\Lambda_{X,p},\,
\bigl|\det(T^*\Lambda_{X,p})\bigr|^{1/2}
\right)
\]
be the unique translation-invariant half-density such that $d_{X,p}^2$ has total volume $1$ on
$\Lambda_{X,p}$. Then one may write
\[
\mu_{X,p}=a_X\,d_{X,p},
\]
where $a_X>0$ is independent of $p$.
\end{prop}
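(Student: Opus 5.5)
The argument has two parts. First I show that $d_{X,p}$ exists and is unique. Then I show that the ratio $\mu_{X,p}/d_{X,p}$ is a positive constant that does not depend on $p$.

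\emph{Existence and uniqueness of $d_{X,p}$.} By Proposition~\ref{prop:Componentwise-boundary-leaf-geometry}, each leaf $\Lambda_{X,p}$ is a translate of the compact Lagrangian torus $\Lambda_X^0:=L_X/(L_X\cap H^1(\partial X;\Lambda))$. Its tangent space at every point is therefore canonically $L_X$. The translation-invariant half-densities on $\Lambda_{X,p}$ are then in bijection with the one-dimensional positive cone in $|\det L_X^*|^{1/2}$. Squaring such a half-density gives an invariant density, and its total volume scales by $c^2$ when the half-density is scaled by $c>0$. Fixing total volume $1$ thus singles out a unique positive $d_{X,p}$, exactly as for the leafwise half-densities $\kappa_\ell^{1/2}$ in Proposition~\ref{prop:toral-canonical-leafwise-half-density}.

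\emph{Proportionality on each leaf.} By Proposition~\ref{prop:well-defined-componentwise-half-density}, $\mu_{X,p}$ is translation invariant and nowhere vanishing. It is also positive, being built from $T_X(\mathfrak t)^{1/2}$, $w^{-1}$ and a positive volume. Two positive invariant half-densities on the same torus differ by a positive constant, so $\mu_{X,p}=a_{X,p}\,d_{X,p}$ for some $a_{X,p}>0$. Explicitly,
\[
a_{X,p}^2=\int_{\Lambda_{X,p}}\mu_{X,p}^2 .
\]

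\emph{Independence of $p$.} This is the main point, and I will deduce it from Proposition~\ref{prop:Componentwise-toral-torsion-half-density}(iii). Take $p,p'$ and let $t_\xi\colon\Lambda_{X,p}\to\Lambda_{X,p'}$ be the translation between these parallel leaves. It is a diffeomorphism of tori covering the identity on tangent spaces, so it preserves total volume of invariant densities. Hence $(t_\xi)_*d_{X,p}$ is invariant with volume $1$, and uniqueness gives $(t_\xi)_*d_{X,p}=d_{X,p'}$. Part (iii) gives $(t_\xi)_*\mu_{X,p}=\mu_{X,p'}$. Combining the two,
\[
a_{X,p'}\,d_{X,p'}=\mu_{X,p'}=(t_\xi)_*\mu_{X,p}=a_{X,p}\,d_{X,p'},
\]
so $a_{X,p}=a_{X,p'}=:a_X$.

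The only subtle step is confirming that $\widetilde\mu_{X,p}$ in \eqref{eq:mu-tilde-definition} really does not depend on $p$. I will check this directly from the definition rather than rely only on (iii): none of $\int\overline w$, $T_X(\mathfrak t)$ or $w$ involves $p$. Moreover, the determinant identification of Proposition~\ref{prop:toral-determinant-line} passes through $T_e\Lambda_{X,p}\cong L_X$, which is the same for all $p$. The fact that translation preserves total volume rests only on the leaves being translates by an element of the ambient torus, and that comes from Proposition~\ref{prop:Componentwise-boundary-leaf-geometry}. This gives
\[
a_X^2=\int_{\Lambda_X^0}\widetilde\mu_{X}^{\,2},
\]
a quantity independent of $p$.
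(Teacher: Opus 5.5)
Your proposal is correct and follows essentially the same route as the paper. You get existence and uniqueness of $d_{X,p}$ by rescaling an invariant half-density, write $\mu_{X,p}=a_{X,p}\,d_{X,p}$ with $a_{X,p}>0$, and deduce independence of $p$ from two facts: translation $t_\xi$ carries $d_{X,p}$ to $d_{X,p'}$, and part (iii) of Proposition~\ref{prop:Componentwise-toral-torsion-half-density} gives $\mu_{X,p'}=(t_\xi)_*\mu_{X,p}$. Your additional direct check that $\widetilde\mu_{X,p}$ involves no $p$-dependent data, together with the formula $a_{X,p}^2=\int_{\Lambda_{X,p}}\mu_{X,p}^2$, simply makes explicit what the paper's proof of (iii) already uses.
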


\begin{proof}
Fix $p\in \operatorname{Tors}H^2(X;\Lambda)$. By Proposition~\ref{prop:Componentwise-toral-torsion-half-density},
the leaf $\Lambda_{X,p}$ is a compact torus and the half-density $\mu_{X,p}$ is translation invariant and
nowhere vanishing. Since $\Lambda_{X,p}$ is a compact connected Abelian Lie group, every translation-invariant
density or half-density on it is determined uniquely by its value at a single point.

Let
\[
\mathscr D_{X,p}^{1/2}
:=
\Gamma\!\left(
\Lambda_{X,p},\,
\bigl|\det(T^*\Lambda_{X,p})\bigr|^{1/2}
\right)^{\mathrm{inv}}
\]
denote the one-dimensional cone of translation-invariant half-densities on $\Lambda_{X,p}$. If
$\eta\in \mathscr D_{X,p}^{1/2}$ is nonzero, then $\eta^2$ is a translation-invariant density, hence a positive
constant multiple of Haar density on the torus. Therefore
\[
\int_{\Lambda_{X,p}} \eta^2>0.
\]
Rescaling $\eta$ by the positive factor
\[
\left(\int_{\Lambda_{X,p}} \eta^2\right)^{-1/2}
\]
produces a translation-invariant half-density whose square has total volume $1$. Uniqueness is immediate:
if $\eta_1,\eta_2\in \mathscr D_{X,p}^{1/2}$ both satisfy
\[
\int_{\Lambda_{X,p}} \eta_1^2
=
\int_{\Lambda_{X,p}} \eta_2^2
=1,
\]
then \(\eta_2=c\,\eta_1\) for some \(c>0\), and hence
\[
1=\int_{\Lambda_{X,p}} \eta_2^2
=
c^2\int_{\Lambda_{X,p}} \eta_1^2
=
c^2,
\]
so \(c=1\). This proves the existence and uniqueness of \(d_{X,p}\). Since \(\mu_{X,p}\) is itself translation invariant and nowhere vanishing, there exists a unique scalar
\(a_{X,p}>0\) such that
\[
\mu_{X,p}=a_{X,p}\,d_{X,p}.
\]
It remains to show that \(a_{X,p}\) is independent of \(p\). For any two torsion classes \(p,p'\in \operatorname{Tors}H^2(X;\Lambda)\), the corresponding leaves
\(\Lambda_{X,p}\) and \(\Lambda_{X,p'}\) are parallel translates of the same underlying torus in the boundary
moduli space. Let $t_\xi\colon \Lambda_{X,p}\to{\ \cong\ }\Lambda_{X,p'}$ be such a translation. By Proposition~\ref{prop:Componentwise-toral-torsion-half-density},
\[
\mu_{X,p'}=(t_\xi)_*\mu_{X,p}.
\]
Moreover, the normalization defining \(d_{X,p}\) is translation invariant: because \(t_\xi\) is a group
translation, it preserves translation-invariant densities and preserves total volume. Hence $(t_\xi)_*d_{X,p}$ is again a translation-invariant half-density on \(\Lambda_{X,p'}\) whose square has total volume \(1\).
By uniqueness of the normalized invariant half-density on \(\Lambda_{X,p'}\), it follows that $(t_\xi)_*d_{X,p}=d_{X,p'}.$ Applying \(t_\xi\) to the identity \(\mu_{X,p}=a_{X,p}d_{X,p}\), we obtain
\[
\mu_{X,p'}
=
(t_\xi)_*\mu_{X,p}
=
a_{X,p}\,(t_\xi)_*d_{X,p}
=
a_{X,p}\,d_{X,p'}.
\]
By uniqueness of the scalar relating \(\mu_{X,p'}\) to \(d_{X,p'}\), this implies $a_{X,p'}=a_{X,p}.$ Therefore the scalar depends only on \(X\), and we may denote it simply by \(a_X>0\).

\end{proof}

\begin{definition}
\label{def:toral-normalization-exponent}
For a compact oriented $3$--manifold $X$ with boundary define
\[
m_X
:=
\frac14\Bigl(
\dim H^1(X;\mathbb R)
+\dim H^1(X,\partial X;\mathbb R)
-\dim H^0(X;\mathbb R)
-\dim H^0(X,\partial X;\mathbb R)
\Bigr).
\]
If $X$ is closed and connected, this reduces to
\[
m_X=\frac12\bigl(b_1(X)-1\bigr).
\]
\end{definition}

We now combine the two leafwise objects constructed above: the parallel section \(\sigma_{X,p}\) of the prequantum line and the torsion half-density \(\mu_{X,p}\). Summing over all torsion components \(p\in \Tors H^2(X;\Lambda)\) produces the state associated with the bordism \(X\).

\begin{prop}
\label{def:canonical-toral-boundary-vector}
Let $X$ be a compact oriented $3$--manifold with boundary.
Define
\[
Z^{CS}_{\mathbb T,K}(X)
:=
\frac{|\det K|^{\,m_X}}{\#\,\Tors H^2(X;\Lambda)}
\sum_{p\in \Tors H^2(X;\Lambda)}
\sigma_{X,p}\otimes \mu_{X,p}.
\]
Then
\[
Z^{CS}_{\mathbb T,K}(X)\in
\mathcal H_{\mathbb T,K}(\partial X,L_X^{\mathbb R}).
\]
\end{prop}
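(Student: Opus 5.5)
The plan is to show that each summand $\sigma_{X,p}\otimes\mu_{X,p}$ is a vector in one of the one-dimensional summands $\mathscr S_\ell$ of Definition~\ref{def:toral-hilbert}, taken for the surface $\partial X$ and the rational Lagrangian $L=L_X^{\mathbb R}$. The full claim then follows by linearity, since the prefactor $|\det K|^{m_X}/\#\Tors H^2(X;\Lambda)$ is a finite scalar.

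\emph{Step 1: the polarization is admissible.} First I will check that $L_X^{\mathbb R}$ is an admissible rational Lagrangian for $\partial X$. Lemma~\ref{RT/CS boundary Lagrangian correspondence} and Proposition~\ref{prop:toral-boundary-lagrangian} give that it is Lagrangian. Rationality holds because $L_X^{\mathbb R}$ is the real span of the image of $H^1(X;\mathbb Z)\to H^1(\partial X;\mathbb Z)$: restriction commutes with change of coefficients, and $H^1(X;\mathbb R)=H^1(X;\mathbb Z)\otimes\mathbb R$. Consequently $\mathcal P_{L_X^{\mathbb R}}=L_X$, and its leaves are the translates of the compact torus $L_X/(L_X\cap H^1(\partial X;\Lambda))$. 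If $\partial X$ is disconnected, I apply the same argument componentwise through the product decomposition of $\mathcal M_{\partial X}(\mathbb T)$.

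\emph{Step 2: each $\Lambda_{X,p}$ is a Bohr--Sommerfeld leaf.} Proposition~\ref{prop:Componentwise-boundary-leaf-geometry} shows that $\Lambda_{X,p}$ is a translate of this torus with tangent space $L_X$. Since the leaves of $\mathcal P_{L_X^{\mathbb R}}$ are exactly the connected translates of that subtorus, $\Lambda_{X,p}$ is a full leaf, and by Theorem~\ref{thm:Componentwise-toral-CS-section}(iii) it is Bohr--Sommerfeld.

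There is one subtlety. $\sigma_{X,p}$ is defined on $\mathcal M_{X,p}(\mathbb T)$ as a section of the pullback $r_{X,p}^*\mathcal L_{\partial X,K}$, and it has to descend to a section over $\Lambda_{X,p}$. To see this, note that the fibres of $r_{X,p}$ are connected cosets of $\ker r_X$ modulo the lattice. By Theorem~\ref{thm:Componentwise-toral-CS-section}(ii), $\sigma_{X,p}$ is parallel for the pulled-back connection, and that connection is trivial along fibres because the pullback bundle is constant along them. Hence $\sigma_{X,p}$ is constant on each fibre and defines a parallel section of $\mathcal L_{\partial X,K}|_{\Lambda_{X,p}}$. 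This descent step is the one I expect to be the main obstacle. If the fibres of $r_{X,p}$ turn out to be disconnected, I will use gauge invariance (part (i) of the same theorem) together with the filling description of $L_{(Q,\eta)}$ to show that the values on different fibre components agree: both are represented by the same filling class $[(X,P,\Theta,\mathrm{id}),1]$ over the same boundary field.

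\emph{Step 3: the half-density factor.} By Propositions~\ref{prop:well-defined-componentwise-half-density} and~\ref{prop:normalized-invariant-half-density}, $\mu_{X,p}=a_X d_{X,p}$. Here $d_{X,p}$ is the translation-invariant half-density of unit volume, which is exactly the canonical parallel half-density $\kappa_\ell^{1/2}$ of Proposition~\ref{prop:toral-canonical-leafwise-half-density} on $\ell=\Lambda_{X,p}$. It is therefore flat for the leafwise connection on $|\det\mathcal P_L^*|^{1/2}$.

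Combining the steps, $\sigma_{X,p}\otimes\mu_{X,p}$ is a flat section of $\mathcal L_{\partial X,K}\otimes|\det\mathcal P_L^*|^{1/2}$ over the Bohr--Sommerfeld leaf $\Lambda_{X,p}$, so it lies in $\mathscr S_{\Lambda_{X,p}}$. Different $p$ may give the same leaf, in which case their contributions add within that summand. Summing over the finite set $\Tors H^2(X;\Lambda)$ and multiplying by the scalar prefactor gives $Z^{CS}_{\mathbb T,K}(X)\in\mathcal H_{\mathbb T,K}(\partial X,L_X^{\mathbb R})$.
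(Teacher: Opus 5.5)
Your outline is the paper's own: Theorem~\ref{thm:Componentwise-toral-CS-section}(iii) gives the Bohr--Sommerfeld property and parallelism of $\sigma_{X,p}$, and Proposition~\ref{prop:Componentwise-toral-torsion-half-density} handles $\mu_{X,p}$. However, the descent step you single out is handled incorrectly.

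The fibres of $r_{X,p}$ are cosets of the kernel of the torus homomorphism induced by $r_X$. That kernel has component group $\bigl(L_X\cap H^1(\partial X;\Lambda)\bigr)/r_X\bigl(H^1(X;\Lambda)\bigr)$, which is nonzero whenever $\operatorname{coker}\bigl(H_1(\partial X)\to H_1(X)/\Tors\bigr)$ has torsion. For the twisted $I$-bundle $N$ over the Klein bottle, $H_1(N)=\mathbb Z/2\langle a\rangle\oplus\mathbb Z\langle b\rangle$, while $H_1(\partial N)$ has basis $a,b^2$. So on $b$-holonomies $r_{N,p}$ is $h\mapsto 2h$, and its fibres are cosets of $\tfrac12\Lambda/\Lambda$.

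Your fallback does not repair this. Points in different components of a fibre are distinct points of $\mathcal M_{X,p}(\mathbb T)$, not related by $\mathcal G(P)$, so part (i) says nothing about them. Their values $[(X,P,\Theta_1,\mathrm{id}),1]$ and $[(X,P,\Theta_2,u),1]$ are \emph{different} fillings of one boundary field. They are compared by the exponentiated Chern--Simons invariant of the glued flat connection on $X\cup_{\partial X}(-X)$. Equivalently, by part (ii), their ratio is the holonomy of $\mathcal L_{\partial X,K}$ around the image in $\Lambda_{X,p}$ of a path joining them. So descent is equivalent to trivial holonomy along all of $L_X\cap H^1(\partial X;\Lambda)$. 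Citing part (iii) does not supply this: its proof infers the Bohr--Sommerfeld property from the section upstairs, which only kills holonomy along the sublattice $r_X(H^1(X;\Lambda))$.

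In fact the step fails. In the example, the torsion components are labelled by $\epsilon=\mathrm{hol}(a)\in\tfrac12\Lambda/\Lambda$, and the leaf is $\Lambda_{N,\epsilon}=\{\mathrm{hol}(a)=\epsilon\}$. By the criterion of Proposition~\ref{prop:toral-BS}, with $\epsilon$ as transverse coordinate, this leaf is Bohr--Sommerfeld iff $K\epsilon\in\Lambda^*$. For $K=\left(\begin{smallmatrix}2&1\\1&2\end{smallmatrix}\right)$ and $\epsilon=(\tfrac12,0)$ one gets $K\epsilon=(1,\tfrac12)$. Three of the four leaves fail, $\sigma_{N,p}$ differs by $-1$ between some components of a fibre, and $\sigma_{N,p}\otimes\mu_{N,p}$ lies in no $\mathscr S_\ell$. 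This does not depend on conventions: holonomies along the same lattice loop on the leaves $\epsilon=0$ and $\epsilon=(\tfrac12,0)$ differ by $e^{2\pi i K(\epsilon,(0,1))}=-1$, so they cannot both be Bohr--Sommerfeld. The paper's one-line proof avoids this only by quoting (iii), which has the same lacuna.

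A correct version must do one of two things. Either assume $r_X(H^1(X;\Lambda))$ is saturated in $H^1(\partial X;\Lambda)$, in which case your connected-fibre argument works as written. Or replace $\sigma_{X,p}$ by its pushforward along $r_{X,p}$ (the sum over fibre components), which vanishes exactly when $\Lambda_{X,p}$ is not Bohr--Sommerfeld.

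Separately, $L_X$ need not split over the components of $\partial X$; for the cylinder it is the diagonal. So drop the ``componentwise'' remark in Step 1, since your global argument already covers that case.
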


\begin{proof}
By Theorem~\ref{thm:Componentwise-toral-CS-section}, each $\Lambda_{X,p}$ is a
Bohr--Sommerfeld leaf and $\sigma_{X,p}$ is a parallel section of the
restriction of the prequantum line to that leaf.
By Proposition~\ref{prop:Componentwise-toral-torsion-half-density},
$\mu_{X,p}$ is a parallel half-density on the same leaf.
Therefore each tensor
$\sigma_{X,p}\otimes \mu_{X,p}$ is an element of the Bohr--Sommerfeld summand of
$\mathcal H_{\mathbb T,K}(\partial X,L_X^{\mathbb R})$ corresponding to
$\Lambda_{X,p}$.
\end{proof}

The state \(Z^{CS}_{\mathbb T,K}(X)\) is therefore a finite sum of leafwise quantum
states, one for each torsion component \(p\). This is the toral analogue of the
rank--one boundary vector, now with the finite set of components indexed by
\(\Tors H^2(X;\Lambda)\).
The boundary state inherits its functorial properties from the corresponding functoriality of the
Chern--Simons sections, the torsion half-densities, and the cohomological normalization factor.

\begin{prop}
\label{prop:toral-boundary-vector-functoriality}
The vector $Z^{CS}_{\mathbb T,K}(X)$ is natural under orientation-preserving
diffeomorphisms and multiplicative under disjoint union.
\end{prop}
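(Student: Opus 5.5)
The plan is to treat $Z^{CS}_{\mathbb T,K}(X)$ as a product of three ingredients and check naturality and multiplicativity for each one separately. The ingredients are the scalar prefactor $|\det K|^{m_X}/\#\Tors H^2(X;\Lambda)$, the Chern--Simons sections $\sigma_{X,p}$ of Theorem~\ref{thm:Componentwise-toral-CS-section}, and the torsion half-densities $\mu_{X,p}$ of Definition~\ref{def:Componentwise-half-density}. Once each ingredient is shown to be compatible with diffeomorphisms and disjoint unions, the statement follows by summing over torsion classes.

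For naturality, let $f\colon X\to X'$ be an orientation-preserving diffeomorphism. It induces the following compatible data:
\begin{enumerate}
\item[(a)] a bijection $f^*\colon \Tors H^2(X';\Lambda)\to\Tors H^2(X;\Lambda)$;
\item[(b)] identifications $\mathcal M_{X',p'}(\mathbb T)\cong \mathcal M_{X,f^*p'}(\mathbb T)$;
\item[(c)] a symplectomorphism $\mathcal M_{\partial X'}(\mathbb T)\cong\mathcal M_{\partial X}(\mathbb T)$ carrying $L_{X'}$ to $L_X$;
\item[(d)] a lift of that symplectomorphism to the prequantum line bundles.
\end{enumerate}
The lift in (d) is built by pushing fillings forward through $f|_{\partial X}$ in Definition~\ref{def:toral-boundary-line-filling}. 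Pushing fillings forward preserves the comparison $4$-manifolds, so it preserves the comparison phases, and hence the lift is well defined by Proposition~\ref{prop:well-defined-boundary-line}.

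Under this lift, $\widehat\sigma_{P'}(\Theta')=[(X',P',\Theta',\mathrm{id}),1]$ goes to $[(X,f^*P',f^*\Theta',\mathrm{id}),1]=\widehat\sigma_{f^*P'}(f^*\Theta')$. Therefore $\sigma_{X',p'}$ corresponds to $\sigma_{X,f^*p'}$. For the half-densities, $\mu_{X',p'}$ corresponds to $\mu_{X,f^*p'}$ by Proposition~\ref{prop:Componentwise-toral-torsion-half-density}(i). The prefactor is preserved because $m_X$ and $\#\Tors H^2$ are built from Betti numbers and torsion orders, which are invariant under $f$. Finally, the induced map on Hilbert spaces sends each Bohr--Sommerfeld summand to the corresponding one, so summing over $p$ gives the naturality claim.

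For disjoint unions $X=X_1\sqcup X_2$, each piece of data splits as a product:
\begin{itemize}
\item $\Tors H^2(X;\Lambda)=\Tors H^2(X_1;\Lambda)\times\Tors H^2(X_2;\Lambda)$, so $\#\Tors$ is multiplicative;
\item $m_X=m_{X_1}+m_{X_2}$, since all the dimensions in Definition~\ref{def:toral-normalization-exponent} are additive;
\item $\mathcal M_{\partial X}(\mathbb T)$, with its symplectic form and polarization $L_X$, is the product of the two factors.
\end{itemize}
The prequantum bundle satisfies $\mathcal L_{\partial X,K}\cong\mathcal L_{\partial X_1,K}\boxtimes\mathcal L_{\partial X_2,K}$, using disjoint union of fillings. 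Comparison $4$-manifolds for a disjoint union can be taken to be disjoint unions, so the comparison phases multiply. Under this isomorphism, $\sigma_{X,(p_1,p_2)}=\sigma_{X_1,p_1}\boxtimes\sigma_{X_2,p_2}$. Combined with Proposition~\ref{prop:Componentwise-toral-torsion-half-density}(ii), this gives $\mathcal H(\partial X,L_X)\cong\mathcal H(\partial X_1,L_{X_1})\otimes\mathcal H(\partial X_2,L_{X_2})$. The double sum over $(p_1,p_2)$ then factors, and $Z(X)=Z(X_1)\otimes Z(X_2)$.

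The main obstacle is the step about the line bundles: showing that the isomorphism $\mathcal L_{\partial X}\cong\boxtimes\mathcal L_{\partial X_i}$, and the diffeomorphism lift, respect the connections and descend to the gauge quotients. I would handle this by checking that the cylinder transport of Definition~\ref{def:cylinder-transport} commutes with disjoint union and pushforward, which holds because it is defined by gluing fillings. The local formula of Proposition~\ref{prop:boundary-line-connection-curvature} is additive over components, which gives the required compatibility with the connections.
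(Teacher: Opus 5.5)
Your proposal is correct and follows essentially the same route as the paper. Both proofs decompose $Z^{CS}_{\mathbb T,K}(X)$ into the prefactor $|\det K|^{m_X}/\#\Tors H^2(X;\Lambda)$, the sections $\sigma_{X,p}$, and the half-densities $\mu_{X,p}$, check naturality and disjoint-union factorization for each, and cite Proposition~\ref{prop:Componentwise-toral-torsion-half-density}(i),(ii) for the half-densities; your explicit filling-model argument for $\sigma_{X,p}$ (pushing fillings forward along $f$, disjoint-union fillings with multiplicative comparison phases, compatibility with cylinder transport) supplies detail the paper only asserts as ``naturality'' and ``additivity'' of the classical Chern--Simons construction.
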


\begin{proof}
We first prove naturality. Let $f\colon X \xrightarrow{\ \cong\ } X'$ be an orientation-preserving diffeomorphism. Then $f$ induces canonical isomorphisms on cohomology,
hence in particular
\[
f^*\colon \operatorname{Tors} H^2(X';\Lambda)\xrightarrow{\ \cong\ }
\operatorname{Tors} H^2(X;\Lambda),
\]
and also an identification of boundary polarizations
\[
L_{X'}^{\mathbb R}\xrightarrow{\ \cong\ } L_X^{\mathbb R}.
\]
Accordingly, $f|_{\partial X}$ induces a canonical unitary identification
\[
\mathcal H_{\mathbb T,K}(\partial X',L_{X'}^{\mathbb R})
\xrightarrow{\ \cong\ }
\mathcal H_{\mathbb T,K}(\partial X,L_X^{\mathbb R}).
\]

For each torsion class $p' \in \operatorname{Tors} H^2(X';\Lambda)$, let
\[
p=f^*p'\in \operatorname{Tors} H^2(X;\Lambda).
\]
By the naturality of the classical Chern--Simons section from Theorem~\ref{thm:Componentwise-toral-CS-section} and the
naturality of the torsion half-density from Proposition~\ref{prop:Componentwise-toral-torsion-half-density}, the boundary
leafwise state $\sigma_{X',p'}\otimes \mu_{X',p'}$
is identified with $\sigma_{X,p}\otimes \mu_{X,p}.$ Moreover, the exponent
\[
m_X=\frac14\Bigl(
\dim H^1(X;\mathbb R)+\dim H^1(X,\partial X;\mathbb R)
-\dim H^0(X;\mathbb R)-\dim H^0(X,\partial X;\mathbb R)
\Bigr)
\]
is a diffeomorphism invariant, and the cardinality $\#\operatorname{Tors}H^2(X;\Lambda)$ is preserved by $f^*$. Therefore every term and every normalization factor in Proposition~\ref{def:canonical-toral-boundary-vector}
is preserved, so under the induced identification of Hilbert spaces one has
\[
Z^{CS}_{\mathbb T,K}(X')=Z^{CS}_{\mathbb T,K}(X).
\]
This proves naturality. We now prove multiplicativity under disjoint union. Let $X=X_1\sqcup X_2.$ Then
\[
\partial X=\partial X_1\sqcup \partial X_2,
\qquad
\operatorname{Tors}H^2(X;\Lambda)
\cong
\operatorname{Tors}H^2(X_1;\Lambda)\oplus \operatorname{Tors}H^2(X_2;\Lambda),
\]
and the boundary polarization splits as $L_X^{\mathbb R}=L_{X_1}^{\mathbb R}\oplus L_{X_2}^{\mathbb R}.$ Hence the corresponding Hilbert space identifies canonically as
\[
\mathcal H_{\mathbb T,K}(\partial X,L_X^{\mathbb R})
\cong
\mathcal H_{\mathbb T,K}(\partial X_1,L_{X_1}^{\mathbb R})
\otimes
\mathcal H_{\mathbb T,K}(\partial X_2,L_{X_2}^{\mathbb R}).
\]
Next, the cohomological quantity $m_X$ is additive under disjoint union: $m_X=m_{X_1}+m_{X_2},$ because all Betti numbers appearing in Definition~\ref{def:toral-normalization-exponent} are additive. Also,
\[
\#\operatorname{Tors}H^2(X;\Lambda)
=
\#\operatorname{Tors}H^2(X_1;\Lambda)\cdot
\#\operatorname{Tors}H^2(X_2;\Lambda).
\]
For torsion classes $p_i\in \operatorname{Tors}H^2(X_i;\Lambda)$, Proposition~\ref{prop:Componentwise-toral-torsion-half-density}
gives
\[
\mu_{X_1\sqcup X_2,(p_1,p_2)}
=
\mu_{X_1,p_1}\boxtimes \mu_{X_2,p_2},
\]
and the additivity of the classical Chern--Simons construction gives the corresponding factorization
\[
\sigma_{X_1\sqcup X_2,(p_1,p_2)}
=
\sigma_{X_1,p_1}\boxtimes \sigma_{X_2,p_2}.
\]

Therefore each summand in Proposition~\ref{def:canonical-toral-boundary-vector} factorizes:
\[
\sigma_{X_1\sqcup X_2,(p_1,p_2)}\otimes
\mu_{X_1\sqcup X_2,(p_1,p_2)}
=
(\sigma_{X_1,p_1}\otimes \mu_{X_1,p_1})
\boxtimes
(\sigma_{X_2,p_2}\otimes \mu_{X_2,p_2}).
\]

Combining the factorization of summands with the additivity of $m_X$ and the multiplicativity of the
torsion count, we obtain
\[
Z^{CS}_{\mathbb T,K}(X_1\sqcup X_2)
=
Z^{CS}_{\mathbb T,K}(X_1)\otimes Z^{CS}_{\mathbb T,K}(X_2)
\]
under the canonical identification of Hilbert spaces. This proves multiplicativity under disjoint union.
\end{proof}

\section{\texorpdfstring{Extended Toral Chern–Simons TQFT}{ Extended Toral Chern–Simons TQFT}}
\subsection{Cylinder and Gluing}

The remaining task is to verify the TQFT composition law. For the cylinder \(\Sigma\times I\), one should recover the identity operator on the Hilbert
space \(\mathcal H_{\mathbb T,K}(\Sigma,L)\). For a general cutting \(X^{\mathrm{cut}}\rightsquigarrow X\), one must show that contraction along the new boundary components reproduces the state of the glued manifold. Let $X^{\mathrm{cut}}$ be obtained by cutting $X$ along a closed oriented surface $\Sigma$.
Then
\[
\partial X^{\mathrm{cut}}
=
\partial X\sqcup(-\Sigma)\sqcup \Sigma.
\]
Let $D\subset \mathcal M_{-\Sigma}(\mathbb T)\times \mathcal M_\Sigma(\mathbb T)$ be the diagonal and define
\[
C:=\mathcal M_{\partial X}(\mathbb T)\times D
\subset
\mathcal M_{\partial X^{\mathrm{cut}}}(\mathbb T).
\]

\begin{prop}
\label{prop:toral-cylinder-kernel}
For a closed oriented surface $\Sigma$ and a rational Lagrangian
$L\subset H^1(\Sigma;\mathbb R)$, the vector associated to the cylinder
$\Sigma\times I$ corresponds, under the BKS pairing, to the identity
operator on $\mathcal H_{\mathbb T,K}(\Sigma,L)$. More precisely,
\[
Z^{CS}_{\mathbb T,K}(\Sigma\times I)
=
|\det K|^{\frac14\dim H^1(\Sigma;\mathbb R)}\,\mathrm{Id}.
\]
\end{prop}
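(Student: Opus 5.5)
We apply Proposition~\ref{def:canonical-toral-boundary-vector} directly to $X=\Sigma\times I$, whose boundary is $(-\Sigma)\sqcup\Sigma$. The proof has three parts: compute the cohomological data of $X$, identify the state with a diagonal sum over Bohr--Sommerfeld leaves, and match that sum with the identity operator.

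\emph{Cohomological data.} Since $X\simeq\Sigma$, we have $H^2(X;\Lambda)\cong\Lambda$, so $\Tors H^2(X;\Lambda)=0$ and only the sector $p=0$ contributes, with prefactor $1/\#\Tors=1$. The restriction map $H^1(X;\mathfrak t)\to H^1(-\Sigma;\mathfrak t)\oplus H^1(\Sigma;\mathfrak t)$ is the diagonal. Hence $L_X^{\mathbb R}$ is the diagonal Lagrangian $D^{\mathbb R}$, and $\Lambda_{X,0}$ is exactly the diagonal torus $D$. For the exponent, write $N=\dim H^1(\Sigma;\mathbb R)=2g$. Then $b_0(X)=1$, $b_1(X)=N$ and $H^0(X,\partial X)=0$. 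From the long exact sequence of the pair, $\dim H^1(X,\partial X;\mathbb R)=1$ (Lefschetz dual to $H_2(X)=\mathbb R$). This gives $m_X=\tfrac14(N+1-1-0)=\tfrac14 N$, which is exactly the exponent in the statement.

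\emph{The state and its kernel.} With $p=0$, the state is
\[
Z^{CS}_{\mathbb T,K}(\Sigma\times I)=|\det K|^{N/4}\,\sigma_{X,0}\otimes\mu_{X,0},
\]
a single parallel section over the diagonal leaf in the polarization $D^{\mathbb R}$. To compare it with operators on $\mathcal H_{\mathbb T,K}(\Sigma,L)$, we apply the BKS operator $F_{(L\oplus L),D}$ of Proposition~\ref{prop:toral-BKS} to move to the product polarization. There $\mathcal H(-\Sigma\sqcup\Sigma,L\oplus L)\cong\mathcal H(\Sigma,L)^*\otimes\mathcal H(\Sigma,L)$, using the conjugate line bundle on $-\Sigma$. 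The BKS pairing of the diagonal leaf $D$ with a product leaf $\ell_{y}\times\ell_{y'}$ vanishes unless $D$ meets it. In the coordinates of Proposition~\ref{prop:toral-BS} this forces $y=y'$. When $y=y'$ the intersection is the torus $\ell_y$ itself, not a point. On it the section $\sigma_{X,0}$ is the cylinder transport of Definition~\ref{def:cylinder-transport} with constant connection, so it equals the canonical pairing $\overline{s_y}\otimes s_y$. In particular its phase is independent of $y$.

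\emph{Normalization (main obstacle).} It remains to show that the resulting coefficient of $e_y^*\otimes e_y$ is independent of $y$ and that it equals $1$ once the prefactor $|\det K|^{N/4}$ is accounted for. The paper's statement keeps $|\det K|^{N/4}$ as an explicit scalar in front of $\mathrm{Id}$, so we must show that the remaining factor $\mu_{X,0}$, paired via BKS, gives exactly the identity. Independence of $y$ follows from translation invariance, using Proposition~\ref{prop:Componentwise-toral-torsion-half-density}(iii) and Proposition~\ref{prop:normalized-invariant-half-density}. For the value, we plan two steps:
\begin{enumerate}
\item Compute the torsion of $\Sigma\times I$. It is trivial in the lattice-normalized densities (the product with an interval has unit torsion), so $\mu_{X,0}=d_{X,0}$ and $a_X=1$.
\item Evaluate the BKS integral over $\ell_y\subset D$ of the product of unit-volume half-densities. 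By Proposition~\ref{prop:toral-canonical-leafwise-half-density} the covolumes cancel, giving coefficient $1$.
\end{enumerate}
If the bookkeeping instead produces a volume factor such as $|G_K|^{\pm g/2}$, we will fix it by composing with the Fourier form of Lemma~\ref{lem:genus-BKS-fourier} in a symplectic basis, where the constant is determined by unitarity as in that lemma's proof. We then conclude that the state is $|\det K|^{N/4}\sum_y e_y^*\otimes e_y=|\det K|^{N/4}\,\mathrm{Id}$.
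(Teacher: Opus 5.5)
Your cohomological computation ($\Tors H^2(X;\Lambda)=0$, $L_X^{\mathbb R}$ the diagonal, $\Lambda_{X,0}=D$, $m_{\Sigma\times I}=N/4$) is correct, and it is essentially all the paper's proof contains. The paper notes that flat fields restrict onto $D$, asserts that the quantized diagonal is the identity kernel under the BKS pairing, and reads off $|\det K|^{m_{\Sigma\times I}}$ as the normalization.

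The gap is your normalization step~(2), and it is false. A norm count shows this directly. $\sigma_{X,0}\otimes d_{X,0}$ is a unit vector, being a unit section times a unit-volume half-density. $F_{(L\oplus L),D}$ is unitary by Proposition~\ref{prop:toral-BKS}. The identity kernel $\sum_y e_y^*\otimes e_y$ has norm $|\det K|^{g/2}$. So if the image is $c\sum_y e_y^*\otimes e_y$, then $|c|^2|\det K|^g=1$, which gives $|c|=|\det K|^{-g/2}$, not $1$.

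The local mechanism is the following. At this clean but non-transverse intersection, set $I=P_D\cap P_{L\oplus L}$. The half-density pairing divides by $|\det\omega|^{1/2}$ of the pairing between $P_D/I\cong L'\otimes\mathfrak t$ and $P_{L\oplus L}/I\cong L\otimes\mathfrak t$ induced by $(-\omega_{\Sigma,K})\oplus\omega_{\Sigma,K}$. In lattice bases this pairing is $\pm K$ on each of $g$ blocks. It is the same factor as the $|G_K|^{-g/2}$ in Lemma~\ref{lem:genus-BKS-fourier}. Proposition~\ref{prop:toral-canonical-leafwise-half-density} normalizes each leaf separately and never sees this factor, so ``the covolumes cancel'' does not follow.

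Your fallback cannot repair this. Unitarity is exactly what forces the factor, and the Fourier kernel for the transverse pair $(L,L')$ is a different operator. A scalar computed for a specific vector cannot be renormalized away. Carried out correctly, your route gives $Z^{CS}_{\mathbb T,K}(\Sigma\times I)=a_{\Sigma\times I}\,|\det K|^{N/4}\,|\det K|^{-N/4}\,\mathrm{Id}=a_{\Sigma\times I}\,\mathrm{Id}$. Here $a_{\Sigma\times I}$ from Proposition~\ref{prop:normalized-invariant-half-density} is built only from $\Lambda$-normalized torsion and determinant lines, so it cannot depend on $K$.

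Your step~(1), $a_{\Sigma\times I}=1$, is plausible: the integral long exact sequence of $(\Sigma\times I,\partial)$ is unimodular and all integral (relative) cohomology is torsion-free. It still has to be checked against Definition~\ref{def:Componentwise-half-density}.

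So the prefactor $|\det K|^{m_X}$ is precisely what cancels the BKS volume factor. Your argument proves the first sentence of the proposition, that the cylinder is the identity kernel. However, for $g\ge 1$ and $|\det K|>1$, the ``more precisely'' formula read literally as the operator $|\det K|^{N/4}\,\mathrm{Id}$ is incompatible with the norm count above. The only reading under which your proof closes is the one implicit in the paper's two-line proof: $|\det K|^{N/4}$ is the normalization multiplying $\sigma_{X,0}\otimes\mu_{X,0}$, and the normalized vector is the identity. State the target that way rather than trying to preserve an extra factor in front of $\mathrm{Id}$.
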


\begin{proof}
The image of flat bulk fields in
$\mathcal M_{-\Sigma}(\mathbb T)\times \mathcal M_\Sigma(\mathbb T)$
is the diagonal $D$. Therefore the quantization of $D$, that is, the
vector associated to the cylinder $\Sigma\times I$, corresponds under
the BKS pairing to the identity operator on
$\mathcal H_{\mathbb T,K}(\Sigma,L)$.
Since
\[
m_{\Sigma\times I}
=
\frac14\dim H^1(\Sigma;\mathbb R),
\]
the normalization factor from
Proposition~\ref{def:canonical-toral-boundary-vector}
is exactly $|\det K|^{\frac14\dim H^1(\Sigma;\mathbb R)}$.
\end{proof}

The cylinder state corresponds, under the BKS pairing, to the identity operator, exactly as required in a TQFT. The extra factor
\(|\det K|^{\frac14\dim H^1(\Sigma;\mathbb R)}\) is the normalization dictated by the dimension of the boundary Hilbert space.

\begin{prop}
\label{prop:dimension-identities-gluing}
With notation as above, the following identities hold:

\begin{enumerate}
\item[\textup{(i)}]
The Mayer--Vietoris sequence for $(X,X^{\mathrm{cut}},\Sigma)$ implies
\[
\begin{aligned}
\dim H^1(X^{\mathrm{cut}};\mathbb R)
&-\dim H^1(X^{\mathrm{cut}},\partial X^{\mathrm{cut}};\mathbb R) \\
&-\dim H^0(X^{\mathrm{cut}};\mathbb R)
+\dim H^0(X^{\mathrm{cut}},\partial X^{\mathrm{cut}};\mathbb R) \\
&=
\dim H^1(X;\mathbb R)
-\dim H^1(X,\partial X;\mathbb R) \\
&\quad
-\dim H^0(X;\mathbb R)
+\dim H^0(X,\partial X;\mathbb R) \\
&\quad
+\dim H^1(\Sigma;\mathbb R)
-2\dim H^0(\Sigma;\mathbb R).
\end{aligned}
\]
\item[\textup{(ii)}]
For each $p_{\mathrm{cut}}\in \Tors H^2(X^{\mathrm{cut}};\Lambda)$,
\[
\begin{aligned}
\dim(\Lambda_{X^{\mathrm{cut}},p_{\mathrm{cut}}}\cap C)
&=
\dim H^1(X;\mathbb R)
-\dim H^1(X^{\mathrm{cut}},\partial X^{\mathrm{cut}};\mathbb R) \\
&\quad
+\dim H^0(\partial X;\mathbb R)
+\dim H^0(\Sigma;\mathbb R)
-\dim H^0(X;\mathbb R).
\end{aligned}
\]
\item[\textup{(iii)}]
One has
\[
\begin{aligned}
\frac12\dim H^1(\partial X;\mathbb R)
&=
\dim H^1(X;\mathbb R)
-\dim H^1(X,\partial X;\mathbb R) \\
&\quad
+\dim H^0(\partial X;\mathbb R)
-\dim H^0(X;\mathbb R)
+\dim H^0(X,\partial X;\mathbb R).
\end{aligned}
\]
\end{enumerate}
\end{prop}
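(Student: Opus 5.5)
The three identities are linear algebra on long exact sequences. The plan is to prove \textup{(iii)} first, deduce \textup{(i)} from \textup{(iii)} applied to both $X$ and $X^{\mathrm{cut}}$, and prove \textup{(ii)} by a Mayer--Vietoris computation of the tangent space of the intersection. Throughout I write $h^i(\cdot)=\dim H^i(\cdot;\mathbb R)$. All statements are over $\mathbb R$; tensoring with $\mathfrak t$ multiplies every dimension by $\dim\mathfrak t$, so nothing is lost.

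\emph{Step 1: identity \textup{(iii)}.} Take the exact sequence used in the proof of Proposition~\ref{prop:toral-determinant-line}, with real coefficients:
\[
0\to H^0(X,\partial X)\to H^0(X)\to H^0(\partial X)\to H^1(X,\partial X)\to H^1(X)\to L_X^{\mathbb R}\to 0 .
\]
Its alternating sum of dimensions vanishes, so
\[
\dim L_X^{\mathbb R}=h^1(X)-h^1(X,\partial X)+h^0(\partial X)-h^0(X)+h^0(X,\partial X).
\]
By Lemma~\ref{RT/CS boundary Lagrangian correspondence}, $L_X^{\mathbb R}$ is Lagrangian, hence $\dim L_X^{\mathbb R}=\tfrac12 h^1(\partial X)$. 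This is \textup{(iii)}.

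\emph{Step 2: identity \textup{(i)}.} Rewrite \textup{(iii)} as
\[
h^1-h^1_{\mathrm{rel}}-h^0+h^0_{\mathrm{rel}}=\tfrac12 h^1(\partial)-h^0(\partial).
\]
Apply this to $X$ and to $X^{\mathrm{cut}}$, and use $\partial X^{\mathrm{cut}}=\partial X\sqcup(-\Sigma)\sqcup\Sigma$. Then $h^1(\partial X^{\mathrm{cut}})=h^1(\partial X)+2h^1(\Sigma)$ and $h^0(\partial X^{\mathrm{cut}})=h^0(\partial X)+2h^0(\Sigma)$. Subtracting the two instances gives exactly the extra term $h^1(\Sigma)-2h^0(\Sigma)$. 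One could instead chase the Mayer--Vietoris sequence, but this route avoids it.

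\emph{Step 3: identity \textup{(ii)}.} By Proposition~\ref{prop:Componentwise-boundary-leaf-geometry}, $\Lambda_{X^{\mathrm{cut}},p_{\mathrm{cut}}}$ is a translate of the torus with tangent space $r_{\mathrm{cut}}(H^1(X^{\mathrm{cut}}))$, and $C$ is a subtorus with tangent space $H^1(\partial X)\oplus\Delta$. So, where the intersection is nonempty, its dimension is that of
\[
V:=r_{\mathrm{cut}}\bigl(H^1(X^{\mathrm{cut}})\bigr)\cap\bigl(H^1(\partial X)\oplus\Delta\bigr).
\]
Nonemptiness should be addressed separately: it holds when the relevant flat fields glue, and the formula is only meaningful then. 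Compute $\dim V$ in two steps.
\begin{enumerate}
\item[(a)] Let $W=r_{\mathrm{cut}}^{-1}(H^1(\partial X)\oplus\Delta)$. These are the classes on $X^{\mathrm{cut}}$ whose restrictions to the two copies of $\Sigma$ agree, i.e.\ the kernel of the difference map in Mayer--Vietoris
\[
0\to H^0(X)\to H^0(X^{\mathrm{cut}})\to H^0(\Sigma)\to H^1(X)\to H^1(X^{\mathrm{cut}})\to H^1(\Sigma).
\]
Hence $W=\operatorname{Im}(H^1(X)\to H^1(X^{\mathrm{cut}}))$, and
\[
\dim W=h^1(X)-h^0(\Sigma)+h^0(X^{\mathrm{cut}})-h^0(X).
\]
\item[(b)] Since $\ker r_{\mathrm{cut}}\subset W$, we have $\dim V=\dim W-\dim\ker r_{\mathrm{cut}}$. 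By the sequence of Step 1 for $X^{\mathrm{cut}}$,
\[
\dim\ker r_{\mathrm{cut}}=h^1(X^{\mathrm{cut}},\partial X^{\mathrm{cut}})-h^0(\partial X)-2h^0(\Sigma)+h^0(X^{\mathrm{cut}})-h^0(X^{\mathrm{cut}},\partial X^{\mathrm{cut}}).
\]
\end{enumerate}
Subtracting, the $h^0(X^{\mathrm{cut}})$ terms cancel and we get
\[
\dim V=h^1(X)-h^1(X^{\mathrm{cut}},\partial X^{\mathrm{cut}})+h^0(\partial X)+h^0(\Sigma)-h^0(X)+h^0(X^{\mathrm{cut}},\partial X^{\mathrm{cut}}).
\]

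\emph{Main obstacle.} The delicate point is the last term $h^0(X^{\mathrm{cut}},\partial X^{\mathrm{cut}})$, which is absent from the stated formula. It counts closed components of $X^{\mathrm{cut}}$. I would argue it vanishes under the standing convention for bordisms with boundary, where every component of $X^{\mathrm{cut}}$ meets $\partial X$ or $\Sigma$; if not, closed components must be split off and treated separately. A second, milder point is justifying that $\dim(\Lambda\cap C)$ equals the linear dimension $\dim V$. I would handle this by noting that a nonempty intersection of two translated subtori is a finite union of translates of the torus with tangent space $V$.
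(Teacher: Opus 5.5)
Your proof is correct. Part (iii) is argued exactly as in the paper: the pair sequence plus the Lagrangian property of $L_X^{\mathbb R}$. The other two parts take a different route.

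\textbf{Part (i).} The paper takes the Euler characteristic of the cutting Mayer--Vietoris sequence, $\chi(X^{\mathrm{cut}})=\chi(X)+\chi(\Sigma)$, and converts the relative Betti numbers by Poincar\'e--Lefschetz duality. You instead apply (iii) to $X$ and to $X^{\mathrm{cut}}$ and subtract, using only $\partial X^{\mathrm{cut}}=\partial X\sqcup(-\Sigma)\sqcup\Sigma$. Via duality, (iii) is equivalent to $\chi(M)=\tfrac12\chi(\partial M)$. So your route shows that (i) is a formal consequence of (iii), with no further exact sequence needed. The paper's route uses the gluing directly and does not depend on the Lagrangian statement.

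\textbf{Part (ii).} The paper reads off the tangent space as the image of $H^1(X)\to H^1(\partial X\sqcup\Sigma)$ in the sequence of the pair $(X,\partial X\sqcup\Sigma)$, then excises to $(X^{\mathrm{cut}},\partial X^{\mathrm{cut}})$. You instead identify $W=\operatorname{Im}(H^1(X)\to H^1(X^{\mathrm{cut}}))$ by Mayer--Vietoris and subtract $\ker r_{\mathrm{cut}}$ using the pair sequence of $X^{\mathrm{cut}}$. These are two bookkeepings of the same space. Indeed, $V$ is the image of $H^1(X)$ under $H^1(X)\to H^1(\partial X\sqcup\Sigma)$ followed by the injection $(a,b)\mapsto(a,b,b)$.

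\textbf{The extra term.} The term $\dim H^0(X^{\mathrm{cut}},\partial X^{\mathrm{cut}})$ you found is not an artifact of your route. The paper's own sequence produces $\dim H^0(X,\partial X\sqcup\Sigma)$, which is the same number, and the stated formula silently drops it. It cannot be dropped in general. Take $X=S^3\sqcup S^3$ cut along a $2$-sphere in the first copy. Then $\Lambda_{X^{\mathrm{cut}},0}\cap C$ is a point, while the stated right-hand side is $0-0+0+1-2=-1$. So your hypothesis that $X^{\mathrm{cut}}$ has no closed components is exactly what (ii) as written requires; (i) and (iii) hold without it.

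\textbf{Nonemptiness.} Your caveat can be made precise. By Mayer--Vietoris with $\mathbb T$ coefficients, the intersection is nonempty exactly when $p_{\mathrm{cut}}$ is the restriction of a torsion class on $X$.

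\textbf{Normalization.} The claim that ``nothing is lost'' is slightly too quick for (ii). With $\mathfrak t$ coefficients, the left side of (ii) is $\dim\mathfrak t$ times your $\dim V$. So (ii) holds only when dimensions are counted in units of $\mathfrak t$. That is how Corollary~\ref{cor:exponent-identity} implicitly uses it; (i) and (iii) are stated with real coefficients and are unaffected.
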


\begin{proof}
Part \textup{(i)} is the Euler-characteristic consequence of the Mayer--Vietoris
sequence
\[
\cdots\to H^i(X;\mathbb R)\to H^i(X^{\mathrm{cut}};\mathbb R)\to
H^i(\Sigma;\mathbb R)\to H^{i+1}(X;\mathbb R)\to\cdots
\]
together with Poincar\'e duality.

For \textup{(ii)}, consider the exact sequence of the pair $(\partial X\sqcup \Sigma,\;X)$, and use the identification
\[
H^1(X,\partial X\sqcup \Sigma;\mathbb R)
\cong
H^1(X^{\mathrm{cut}},\partial X^{\mathrm{cut}};\mathbb R).
\]
The tangent space of $\Lambda_{X^{\mathrm{cut}},p_{\mathrm{cut}}}\cap C$ is the image
of the restriction map appearing in that exact sequence, so the stated formula
follows by taking dimensions. Part \textup{(iii)} is the dimension formula extracted from the exact sequence
for the pair $(X,\partial X)$ together with the fact that $L_X$ is Lagrangian.
\end{proof}

\begin{corollary}
\label{cor:exponent-identity}
For every gluing $X^{\mathrm{cut}}\rightsquigarrow X$ one has
\[
m_{X^{\mathrm{cut}}}
+\frac14\dim H^1(\Sigma;\mathbb R)
+\frac12\dim(\Lambda_{X^{\mathrm{cut}},p_{\mathrm{cut}}}\cap C)
-\frac14\dim H^1(\partial X^{\mathrm{cut}};\mathbb R)
=
m_X.
\]
\end{corollary}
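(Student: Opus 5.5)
The plan is to reduce the corollary to pure bookkeeping with the three identities of Proposition~\ref{prop:dimension-identities-gluing}. Abbreviate $h^i(Y):=\dim H^i(Y;\mathbb R)$ and $h^i(Y,\partial Y):=\dim H^i(Y,\partial Y;\mathbb R)$. Using Definition~\ref{def:toral-normalization-exponent}, multiply the claimed identity by $4$. It becomes
\[
4m_{X^{\mathrm{cut}}}+h^1(\Sigma)+2\dim(\Lambda_{X^{\mathrm{cut}},p_{\mathrm{cut}}}\cap C)-h^1(\partial X^{\mathrm{cut}})=4m_X .
\]

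The first step rewrites the boundary term. Since $\partial X^{\mathrm{cut}}=\partial X\sqcup(-\Sigma)\sqcup\Sigma$, we have $h^1(\partial X^{\mathrm{cut}})=h^1(\partial X)+2h^1(\Sigma)$. The left side therefore contains $-h^1(\Sigma)-h^1(\partial X)$. We then substitute $h^1(\partial X)=2\bigl(h^1(X)-h^1(X,\partial X)+h^0(\partial X)-h^0(X)+h^0(X,\partial X)\bigr)$ from part (iii).

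The second step substitutes part (ii) for $2\dim(\Lambda\cap C)$. It contributes $2h^1(X)-2h^1(X^{\mathrm{cut}},\partial X^{\mathrm{cut}})+2h^0(\partial X)+2h^0(\Sigma)-2h^0(X)$. In this combination the $h^0(\partial X)$ terms cancel against those coming from (iii), and so do the $h^0(X)$ terms. What remains is
\[
4m_{X^{\mathrm{cut}}}-2h^1(X^{\mathrm{cut}},\partial X^{\mathrm{cut}})+2h^1(X,\partial X)+2h^0(\Sigma)-2h^0(X,\partial X)-h^1(\Sigma).
\]

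The third step expands $4m_{X^{\mathrm{cut}}}=h^1(X^{\mathrm{cut}})+h^1(X^{\mathrm{cut}},\partial X^{\mathrm{cut}})-h^0(X^{\mathrm{cut}})-h^0(X^{\mathrm{cut}},\partial X^{\mathrm{cut}})$. This turns the expression into a difference-type combination for $X^{\mathrm{cut}}$, which part (i) evaluates. We then want to compare the result with $4m_X=h^1(X)+h^1(X,\partial X)-h^0(X)-h^0(X,\partial X)$.

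The main obstacle is this last step, because the signs do not match directly. Part (i) controls $h^1-h^1_{\mathrm{rel}}-h^0+h^0_{\mathrm{rel}}$, whereas $m$ involves $h^1+h^1_{\mathrm{rel}}-h^0-h^0_{\mathrm{rel}}$. Bridging the two will likely need extra Euler-characteristic or duality input. One input is that Poincar\'e--Lefschetz duality together with $\chi(X)=\tfrac12\chi(\partial X)$ relates $h^1_{\mathrm{rel}}$ to $h^2$. The other is how $h^0(X^{\mathrm{cut}},\partial X^{\mathrm{cut}})$ and $h^0(X,\partial X)$ change under cutting, since each counts components without boundary. My first attempt will be to assume every component meets the boundary, so the relative $H^0$ terms vanish. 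I will then check the identity on the model case $X=\Sigma\times I$ cut along a middle copy of $\Sigma$, where Proposition~\ref{prop:toral-cylinder-kernel} fixes all quantities. That check either confirms the linear combination or shows which of (i)--(iii) has to be applied with its sign reversed. In the general case I will verify the residual identity via the Mayer--Vietoris Euler characteristic $\chi(X^{\mathrm{cut}})=\chi(X)+\chi(\Sigma)$.
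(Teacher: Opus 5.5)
You follow the paper's route, which is to substitute (i)--(iii) into the definition of $m$, and your first two steps are correct. But the proposal stops exactly where the argument closes, and it misdiagnoses that last step. There is no sign mismatch, and no further Euler-characteristic or duality input is needed. The terms $+2h^1(X,\partial X)-2h^0(X,\partial X)$ already in your step-two expression are precisely what flip the signs of the relative terms. Concretely, write
\[
\begin{aligned}
4m_{X^{\mathrm{cut}}}-2h^1(X^{\mathrm{cut}},\partial X^{\mathrm{cut}})
&=\bigl[h^1(X^{\mathrm{cut}})-h^1(X^{\mathrm{cut}},\partial X^{\mathrm{cut}})-h^0(X^{\mathrm{cut}})+h^0(X^{\mathrm{cut}},\partial X^{\mathrm{cut}})\bigr]\\
&\quad-2h^0(X^{\mathrm{cut}},\partial X^{\mathrm{cut}})
\end{aligned}
\]
and replace the bracket using (i). The $h^1(\Sigma)$ and $h^0(\Sigma)$ contributions of (i) cancel against your $2h^0(\Sigma)-h^1(\Sigma)$. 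Moreover,
\[
h^1(X)-h^1(X,\partial X)-h^0(X)+h^0(X,\partial X)+2h^1(X,\partial X)-2h^0(X,\partial X)=4m_X .
\]
Altogether you get
\[
4m_{X^{\mathrm{cut}}}+h^1(\Sigma)+2\dim(\Lambda_{X^{\mathrm{cut}},p_{\mathrm{cut}}}\cap C)-h^1(\partial X^{\mathrm{cut}})
=4m_X-2h^0(X^{\mathrm{cut}},\partial X^{\mathrm{cut}}).
\]

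So the only real issue is the cut-side term $h^0(X^{\mathrm{cut}},\partial X^{\mathrm{cut}})$. It enters $4m_{X^{\mathrm{cut}}}$ with a minus sign but enters (i) with a plus sign. It vanishes exactly when every component of $X^{\mathrm{cut}}$ meets $\partial X^{\mathrm{cut}}$. This is also the tacit hypothesis under which (ii) holds as printed. Indeed, (ii) is read off from the exact sequence of $(X,\partial X\sqcup\Sigma)$, and its initial term $H^0(X,\partial X\sqcup\Sigma)\cong H^0(X^{\mathrm{cut}},\partial X^{\mathrm{cut}})$ was dropped. Restoring that term adds $+h^0(X^{\mathrm{cut}},\partial X^{\mathrm{cut}})$ to (ii), and then the corollary holds with no hypothesis at all. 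State this in place of the model-case check.

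Your instinct to assume every component meets the boundary is the right one, but it is needed only for $X^{\mathrm{cut}}$. You must not also assume $h^0(X,\partial X)=0$. Those terms cancel on their own, and they are nonzero in the key case of closed $X$, for example $\Sigma\times S^1$ glued from the cylinder. The identity $\chi(X^{\mathrm{cut}})=\chi(X)+\chi(\Sigma)$ is not new input: it is just (i) rewritten via Lefschetz duality $h^i(Y,\partial Y)=h^{3-i}(Y)$. None of (i)--(iii) needs a sign reversal.
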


\begin{proof}
Substitute the three identities of
Proposition~\ref{prop:dimension-identities-gluing}
into the definition of $m_X$.
The computation is exactly the same algebraic manipulation as in the rank--one case \cite{Manoliu2},
because $m_X$ depends only on real cohomology dimensions.
\end{proof}

Corollary~\ref{cor:exponent-identity} is the bookkeeping identity that makes the powers of \(|\det K|\) match correctly under cutting and gluing. Without it, the
normalization in the definition of \(Z^{CS}_{\mathbb T,K}(X)\) would not be compatible with composition.

The only remaining scalar in the gluing problem is the comparison between the torsion half-density
$\mu_{X,p}$ and the normalized invariant half-density $d_{X,p}$; the next proposition identifies exactly how
that scalar changes under cutting.
\begin{prop}
\label{prop:scalar-normalization-half-densities}
The constants $a_X$ of Proposition~\ref{prop:normalized-invariant-half-density}
satisfy
\[
a_{X^{\mathrm{cut}}}^{\,2}
=
a_X^{\,2}\,
\frac{\#\,\Tors H^2(X^{\mathrm{cut}};\Lambda)}
{\#\,\Tors H^2(X;\Lambda)}.
\]
Equivalently,
\[
a_{X^{\mathrm{cut}}}^{\,2}\,
\#\operatorname{Tors}H^2(X;\Lambda)
=
a_X^{\,2}\,
\#\operatorname{Tors}H^2(X^{\mathrm{cut}};\Lambda).
\]
\end{prop}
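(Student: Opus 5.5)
The plan is to compute $a_X$ explicitly and then compare the two sides using the Mayer--Vietoris sequence. By Proposition~\ref{prop:normalized-invariant-half-density}, $a_X^2=\int_{\Lambda_{X,p}}\mu_{X,p}^2$ for any $p$. Definition~\ref{def:Componentwise-half-density} expresses $\mu_{X,p}^2$ as the translation-invariant density determined by $\bigl(\int_{H^1(X,\partial X;\mathbb T)}\overline w\bigr)^2\,T_X(\mathfrak t)\otimes w^{-2}$, transported through the determinant-line isomorphism of Proposition~\ref{prop:toral-determinant-line}. We integrate it over the torus $\Lambda_{X,p}\cong L_X/(L_X\cap H^1(\partial X;\Lambda))$. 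To do so, we choose $w$ to be the lattice density of $H^1(X,\partial X;\Lambda)/\mathrm{torsion}$, so that $\int\overline w=1$. We then evaluate the torsion $T_X(\mathfrak t)$ against integral bases of the lattices $H^i(X;\Lambda)/\mathrm{tors}$, $H^i(X,\partial X;\Lambda)/\mathrm{tors}$, and $H^0$. The classical formula for Reidemeister torsion with trivial real coefficients (Franz/Milnor, or Cheeger's formula $\tau=\prod|\Tors H^i(X;\mathbb Z)|^{\pm1}$ in integral bases) turns $T_X(\mathfrak t)$ into an alternating product of torsion orders, raised to the $n$-th power. The volume of $\Lambda_{X,p}$ computed in the density from the exact sequence adds a second correction: the index of $r_X(H^1(X;\Lambda))$ in its saturation $L_X\cap H^1(\partial X;\Lambda)$. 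The result should be a formula of the form
\[
a_X^2=\frac{\#\Tors H^2(X;\Lambda)}{\#\Tors H^1(X,\partial X;\Lambda)\cdot[\text{saturation index}]}\times(\text{universal factors}),
\]
possibly after using duality to rewrite the torsion of $H^2(X;\Lambda)$, $H^1(X,\partial X;\Lambda)$ and the cokernel indices in terms of each other.

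Once such an expression is in hand, the claimed identity says that the ratio $a_X^2/\#\Tors H^2(X;\Lambda)$ is unchanged under cutting. We would prove this by comparing the integral long exact sequences for $X$ and $X^{\mathrm{cut}}$. The tool is the integral Mayer--Vietoris sequence for $X=X^{\mathrm{cut}}\cup_{\Sigma\sqcup-\Sigma}\Sigma$, together with the gluing formula for Reidemeister torsion: multiplicativity of torsion in short exact sequences of cochain complexes $0\to C^\bullet(X)\to C^\bullet(X^{\mathrm{cut}})\to C^\bullet(\Sigma)\to0$. This formula gives $T_X=T_{X^{\mathrm{cut}}}\,T_\Sigma^{-1}$ times the torsion of the Mayer--Vietoris long exact sequence. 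Since $H^\bullet(\Sigma;\Lambda)$ is torsion-free, $T_\Sigma=1$ in integral bases. The torsion of the Mayer--Vietoris sequence, evaluated in integral bases, is then a product of indices and cokernel orders, and these are exactly what measure the change of $\Tors H^2$ and of the saturation indices between $X$ and $X^{\mathrm{cut}}$. Corollary~\ref{cor:exponent-identity} and Proposition~\ref{prop:dimension-identities-gluing} account for the Betti number bookkeeping, so any leftover powers of universal constants cancel.

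The main obstacle is the second step: controlling the torsion of the integral Mayer--Vietoris sequence and showing that it equals precisely $\#\Tors H^2(X^{\mathrm{cut}};\Lambda)/\#\Tors H^2(X;\Lambda)$ times the change in the saturation index, with no extra factor. The first thing we would try is to split the sequence at $H^1(\Sigma;\Lambda)\to H^2(X;\Lambda)$. Its image is finite exactly when the class comes from torsion, and its contribution to $\Tors H^2(X;\Lambda)$ versus $\Tors H^2(X^{\mathrm{cut}};\Lambda)$ is the cokernel we must count. We would verify the resulting count first in the simplest cases, the cylinder $\Sigma\times I$ (using Proposition~\ref{prop:toral-cylinder-kernel}) and gluing two handlebodies into a lens space, before trusting the general bookkeeping.
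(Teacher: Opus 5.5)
\textbf{There is a genuine gap: your step 1 is sound, but it shows the identity you need in step 2 is false.}

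Your first step is sound, and it is more concrete than anything in the paper's argument. In fact it finishes the computation, so no Mayer--Vietoris argument is needed afterwards.

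\begin{itemize}
\item Take $w$ to be the lattice density of $H^1(X,\partial X;\Lambda)$. This group is free because $H_0(X,\partial X)$ is free, so your factor $\#\Tors H^1(X,\partial X;\Lambda)$ equals $1$.
\item The Franz--Milnor formula gives $T_X(\mathfrak t)=\#\Tors H^2(X;\Lambda)$ times the lattice densities.
\item The integral long exact sequence of the pair is an exact sequence of free abelian groups ending in $r_XH^1(X;\Lambda)$. Hence the isomorphism of Proposition~\ref{prop:toral-determinant-line} carries lattice densities to the covolume-one density of $r_XH^1(X;\Lambda)$.
\end{itemize}

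Integrating over the leaf $\Lambda_{X,p}=L_X/(L_X\cap H^1(\partial X;\Lambda))$ then gives
\[
a_X^2=\#\Tors H^2(X;\Lambda)\,s_X^{-n},\qquad s_X:=\bigl[L_X^{\mathbb R}\cap H^1(\partial X;\mathbb Z):r_XH^1(X;\mathbb Z)\bigr],\quad n=\operatorname{rank}\Lambda .
\]
This formula has no dependence on $K$ and no further universal factors. The proposition is therefore equivalent to $s_{X^{\mathrm{cut}}}=s_X$. That is where your plan breaks: the saturation index is not invariant under cutting.

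\textbf{Counterexample.}
\begin{itemize}
\item Let $N$ be the orientable twisted $I$-bundle over the Klein bottle, with $\pi_1N=\langle a,b\mid aba^{-1}=b^{-1}\rangle$. Then $H_1(N)=\mathbb Z\oplus\mathbb Z/2$ and $H_1(\partial N)=\mathbb Z a^2\oplus\mathbb Z b$.
\item The generator of $H^1(N;\mathbb Z)$ restricts to twice a primitive class, so $s_N=2$. Equivalently, the two torsion sectors restrict to the two distinct leaves $\mathrm{hol}(b)=\pm1$. Hence $a_N^2=2/2=1$.
\item Glue a solid torus $V$ with meridian $a^2$. Then $X=N\cup V$ has $H_1(X)=(\mathbb Z/2)^2$ (it is $\mathbb{RP}^3\#\mathbb{RP}^3$), so in rank one $a_X^2=4$.
\item The cut manifold $X^{\mathrm{cut}}=N\sqcup V$ has $a^2_{X^{\mathrm{cut}}}=1$ and $\#\Tors H^2=2$.
\item So $a^2_{X^{\mathrm{cut}}}\,\#\Tors H^2(X)=4$, while $a_X^2\,\#\Tors H^2(X^{\mathrm{cut}})=8$.
\end{itemize}
Removing a ball from $X$ and from $V$ gives the same numbers with $\partial X\neq\varnothing$.

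Your test cases, the cylinder and lens spaces, have $s=1$ on every piece, so they cannot detect this. Changing conventions does not help either: switching the sign convention for $T_X$, or letting $\int\overline w$ run over all components of $H^1(X,\partial X;\mathbb T)$, at most changes the exponent of $\#\Tors H^2(X;\Lambda)$ in $a_X^2$ from $1$ to $-1$ or $3$. Any exponent other than $1$ is already ruled out by the lens space.

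So the ``main obstacle'' you flag is not a bookkeeping difficulty: the identity as stated is false. What your computation actually proves is
\[
a_{X^{\mathrm{cut}}}^2\,s_{X^{\mathrm{cut}}}^n\,\#\Tors H^2(X;\Lambda)=a_X^2\,s_X^n\,\#\Tors H^2(X^{\mathrm{cut}};\Lambda).
\]
The paper's proof does not close this gap either. It asserts the contraction factor for the normalized half-densities $d_{X^{\mathrm{cut}},p^{\mathrm{cut}}}$ ``exactly as in the rank-one case'' without computing it, and that is exactly the step where the saturation indices are dropped.
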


\begin{proof}
By Proposition~\ref{prop:normalized-invariant-half-density}, for every torsion component one has
\[
\mu_{X,p}=a_X\,d_{X,p},
\qquad
\mu_{X^{\mathrm{cut}},p^{\mathrm{cut}}}
=
a_{X^{\mathrm{cut}}}\,d_{X^{\mathrm{cut}},p^{\mathrm{cut}}},
\]
where $d_{X,p}$ and $d_{X^{\mathrm{cut}},p^{\mathrm{cut}}}$ are the normalized translation-invariant
half-densities whose squares have total volume $1$ on the corresponding leaves. The torsion gluing theorem for determinant lines gives a canonical identification
\begin{equation}\label{eq:torsion-gluing}
T_X(\mathfrak t)
=
T_{X^{\mathrm{cut}}}(\mathfrak t)\otimes T_\Sigma(\mathfrak t)^{-1}
\end{equation}
under the determinant-line isomorphism induced by the Mayer--Vietoris sequence. Transporting
\eqref{eq:torsion-gluing} through the constructions of
Definition~\ref{def:Componentwise-half-density} and Proposition~\ref{prop:toral-determinant-line}, one obtains
the gluing relation for the unnormalized torsion half-densities: contraction against the cylinder
half-density on $\Sigma$ sends $\mu_{X^{\mathrm{cut}},p^{\mathrm{cut}}}
\mapsto
\mu_{X,p}$ for each compatible pair of torsion components.

Now perform the same contraction on the normalized invariant half-densities
$d_{X^{\mathrm{cut}},p^{\mathrm{cut}}}$. Since these are normalized by the condition that their squares
have total volume $1$, the contraction cannot be equality; it differs by the scalar coming from
the change of normalization between the family of compatible leaves of $X^{\mathrm{cut}}$ and the
corresponding leaf of $X$. Exactly as in the rank--one case, this normalization factor is determined by the
cardinalities of the torsion-component sets, and one gets
\begin{equation}\label{eq:normalized-density-gluing}
\operatorname{Tr}_\Sigma\!\bigl(d_{X^{\mathrm{cut}},p^{\mathrm{cut}}}\bigr)
=
\left(
\frac{\#\operatorname{Tors}H^2(X;\Lambda)}
     {\#\operatorname{Tors}H^2(X^{\mathrm{cut}};\Lambda)}
\right)^{1/2}
d_{X,p}.
\end{equation}

Comparing the contraction of the two expressions
$\mu_{X^{\mathrm{cut}},p^{\mathrm{cut}}}
=
a_{X^{\mathrm{cut}}}\,d_{X^{\mathrm{cut}},p^{\mathrm{cut}}}$, and $\mu_{X,p}=a_X\,d_{X,p},$ and using the gluing relations above, we obtain
\[
a_{X^{\mathrm{cut}}}
\left(
\frac{\#\operatorname{Tors}H^2(X;\Lambda)}
     {\#\operatorname{Tors}H^2(X^{\mathrm{cut}};\Lambda)}
\right)^{1/2}
d_{X,p}
=
a_X\,d_{X,p}.
\]
Since $d_{X,p}$ is nowhere vanishing, the scalar coefficients must agree:
\[
a_{X^{\mathrm{cut}}}
\left(
\frac{\#\operatorname{Tors}H^2(X;\Lambda)}
     {\#\operatorname{Tors}H^2(X^{\mathrm{cut}};\Lambda)}
\right)^{1/2}
=
a_X.
\]
Squaring both sides gives
\[
a_{X^{\mathrm{cut}}}^{\,2}
=
a_X^{\,2}\,
\frac{\#\operatorname{Tors}H^2(X^{\mathrm{cut}};\Lambda)}
     {\#\operatorname{Tors}H^2(X;\Lambda)},
\]
which is the desired formula.
\end{proof}

The gluing theorem follows by combining the identity-kernel description of the cylinder, the gluing of the
classical Chern--Simons sections, the Mayer--Vietoris gluing law for torsion, and the cohomological
bookkeeping identity for the exponent $m_X$.

\begin{theorem}
\label{thm:Componentwise-toral-gluing}
Let $X^{\mathrm{cut}}$ be obtained by cutting $X$ along a closed oriented
surface $\Sigma$.
Then
\[
Z^{CS}_{\mathbb T,K}(X)
=
\Tr_\Sigma\!\bigl[
Z^{CS}_{\mathbb T,K}(X^{\mathrm{cut}})
\bigr].
\]
\end{theorem}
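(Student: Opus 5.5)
The plan is to expand both sides in the basis of Bohr--Sommerfeld states and compare them torsion component by torsion component. By Proposition~\ref{def:canonical-toral-boundary-vector},
\[
Z^{CS}_{\mathbb T,K}(X^{\mathrm{cut}})=\frac{|\det K|^{m_{X^{\mathrm{cut}}}}}{\#\Tors H^2(X^{\mathrm{cut}};\Lambda)}\sum_{p^{\mathrm{cut}}}\sigma_{X^{\mathrm{cut}},p^{\mathrm{cut}}}\otimes\mu_{X^{\mathrm{cut}},p^{\mathrm{cut}}},
\]
and the trace $\Tr_\Sigma$ pairs the $(-\Sigma)\sqcup\Sigma$ factors against the cylinder kernel. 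By Proposition~\ref{prop:toral-cylinder-kernel}, this kernel is the quantization of the diagonal $D$, normalized by $|\det K|^{\frac14\dim H^1(\Sigma;\mathbb R)}$. Geometrically, the contraction is the BKS pairing of the leaf $\Lambda_{X^{\mathrm{cut}},p^{\mathrm{cut}}}$ with $C=\mathcal M_{\partial X}(\mathbb T)\times D$. Where the Lagrangians on the $\Sigma$ factors are not already transverse or equal, I would first insert BKS operators $F_{L_2L_1}$ (Proposition~\ref{prop:toral-BKS}). I would then fix the Maslov phases as in the rank-one case, using Lemma~\ref{lem:toral-maslov-signature} to reduce the phase to $\sigma(K)\mu_\Sigma$.

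\textbf{Step 1 (sectors).} Identify which components survive. A flat connection on $X^{\mathrm{cut}}$ whose restrictions to $\pm\Sigma$ agree glues to a flat connection on $X$. Mayer--Vietoris with $\Lambda$ coefficients then shows that the compatible pairs $(p^{\mathrm{cut}})$ map onto $\Tors H^2(X;\Lambda)$. The classes $p^{\mathrm{cut}}$ whose leaf misses $C$ contribute zero. The image of $\Lambda_{X^{\mathrm{cut}},p^{\mathrm{cut}}}\cap C$ in $\mathcal M_{\partial X}$ is $\Lambda_{X,p}$.

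\textbf{Step 2 (sections).} Show that the restriction of $\sigma_{X^{\mathrm{cut}},p^{\mathrm{cut}}}$ to $\Lambda_{X^{\mathrm{cut}},p^{\mathrm{cut}}}\cap C$, contracted along $\Sigma$ with the canonical pairing $\mathcal L_{-\Sigma,K}\otimes\mathcal L_{\Sigma,K}\cong\mathbb C$, equals $\sigma_{X,p}$. In the filling model of Definition~\ref{def:toral-boundary-line-filling}, gluing the filling $X^{\mathrm{cut}}$ along $\Sigma$ produces the filling $X$. The two Chern--Weil phases agree because the comparison $4$-manifold can be taken to be a cylinder with vanishing curvature integral on flat data, exactly as in the proof of Theorem~\ref{thm:Componentwise-toral-CS-section}.

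\textbf{Step 3 (scalars).} This is the main obstacle. The contraction of the half-densities integrates over the intersection torus $\Lambda_{X^{\mathrm{cut}},p^{\mathrm{cut}}}\cap C$. Writing $\mu=a\,d$ with Proposition~\ref{prop:normalized-invariant-half-density}, the BKS integral of normalized half-densities yields a power of $|\det K|$. I expect this power to be $\frac12\dim(\Lambda_{X^{\mathrm{cut}},p^{\mathrm{cut}}}\cap C)-\frac14\dim H^1(\partial X^{\mathrm{cut}};\mathbb R)$, which comes from the covolume of the lattice $H^1(\cdot;\Lambda)$ measured with $\omega_{\Sigma,K}$, giving $\sqrt{|\det K|}$ per real dimension pair. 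There is also a factor from the number of sheets of the intersection. Combining this with Proposition~\ref{prop:scalar-normalization-half-densities} converts $a_{X^{\mathrm{cut}}}$ to $a_X$ at the cost of $\bigl(\#\Tors H^2(X^{\mathrm{cut}};\Lambda)/\#\Tors H^2(X;\Lambda)\bigr)^{1/2}$. The multiplicity of $p^{\mathrm{cut}}$ lying over each $p$ supplies the remaining ratio of torsion counts, so that $1/\#\Tors H^2(X^{\mathrm{cut}};\Lambda)$ becomes $1/\#\Tors H^2(X;\Lambda)$. The powers of $|\det K|$ then collapse to $|\det K|^{m_X}$ by Corollary~\ref{cor:exponent-identity}. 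The delicate point is checking that the sheet count of the intersection exactly matches the torsion-count ratio. I would first verify this via the exact sequence in Proposition~\ref{prop:dimension-identities-gluing}(ii) tensored with $\Lambda$, and then check it against the solid-torus and cylinder examples using Lemma~\ref{lem:genus-BKS-fourier}.
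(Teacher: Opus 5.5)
Your outline matches the paper's proof. You expand $Z^{CS}_{\mathbb T,K}(X^{\mathrm{cut}})$ as the averaged sum over torsion sectors and contract the $(-\Sigma)\sqcup\Sigma$ factors against the cylinder kernel of Proposition~\ref{prop:toral-cylinder-kernel}. You glue the Chern--Simons sections (your Step~2 is the paper's \eqref{eq:CS-section-gluing}). You settle the scalars with Proposition~\ref{prop:scalar-normalization-half-densities} and Corollary~\ref{cor:exponent-identity}.

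\textbf{The gap is Step~3}, which carries the content of the theorem. You leave the scalar identity as an expectation (``I expect'', ``the delicate point is checking''). The mechanism you propose also reverses the direction of the sector map. Torsion classes restrict from $X$ to $X^{\mathrm{cut}}$ by $p\mapsto p|_{X^{\mathrm{cut}}}$; there is no map from cut classes onto glued ones. So each $p$ has exactly one compatible $p^{\mathrm{cut}}$, while one $p^{\mathrm{cut}}$ may lie under several $p$. These several $p$ appear as different pieces of $\Lambda_{X^{\mathrm{cut}},p^{\mathrm{cut}}}\cap C$, whose image in $\mathcal M_{\partial X}(\mathbb T)$ is the union of the corresponding leaves $\Lambda_{X,p}$, not a single leaf.

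A concrete case: cut $X=L(k,1)$ along its Heegaard torus. Then $\Tors H^2(X^{\mathrm{cut}};\Lambda)=0$, but $\Tors H^2(X;\Lambda)\cong\Lambda/k\Lambda$ has order $k^n$. The single cut leaf meets the diagonal in $k^n$ points, one per glued sector. Hence:
\begin{itemize}
\item your Step~1 claim that cut classes map onto $\Tors H^2(X;\Lambda)$ is false;
\item the ``multiplicity of $p^{\mathrm{cut}}$ over $p$'' is always $1$;
\item that multiplicity therefore cannot turn $1/\#\Tors H^2(X^{\mathrm{cut}};\Lambda)$ into $1/\#\Tors H^2(X;\Lambda)$.
\end{itemize}
What you would actually have to compute is the number of components of $\Lambda_{X^{\mathrm{cut}},p^{\mathrm{cut}}}\cap C$, together with the degree and fibre volume of its projection to $\mathcal M_{\partial X}(\mathbb T)$. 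These are governed by the torsion of $\operatorname{coker}\bigl(H^1(X^{\mathrm{cut}};\Lambda)\to H^1(\Sigma;\Lambda)\bigr)$ in the Mayer--Vietoris sequence.

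The paper does not count sheets. It packages the change of leaf normalization into \eqref{eq:normalized-density-gluing}, and hence into Proposition~\ref{prop:scalar-normalization-half-densities}, which already carries the torsion-count ratio. If you use that proposition to pass from $a_{X^{\mathrm{cut}}}$ to $a_X$ and then add separate sheet and multiplicity factors, you risk counting the same ratio twice.

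Separately, the BKS/Maslov step does not belong in this statement. The gluing law here is the unweighted one, so $\Tr_\Sigma$ must be the canonical reduction pairing of $\Lambda_{X^{\mathrm{cut}},p^{\mathrm{cut}}}$ against $C$, that is, contraction with the cylinder kernel, exactly as in the paper. Suppose you first move the $\Sigma$-factors to a common Lagrangian with operators $F_{L_2L_1}$. By the projective composition law of Proposition~\ref{prop:toral-BKS}, the result then generally differs from that pairing by a phase of the form $e^{\frac{\pi i}{4}\mu_K(L\oplus L_\Sigma\oplus L_\Sigma,\,L^{\mathbb R}_{X^{\mathrm{cut}}},\,L^{\mathbb R}_X\oplus L_D)}$, as in \eqref{eq:K-twisted-ncut}. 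Nothing in the unweighted identity cancels that phase; it is absorbed only by the weights of Proposition~\ref{prop:K-twisted-Walker}.
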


\begin{proof}
We compare the two sides term by term. By Definition~\ref{def:canonical-toral-boundary-vector}, the state of the cut manifold is
\[
Z_{\mathbb T,K}(X^{\mathrm{cut}})
=
|\det K|^{m_{X^{\mathrm{cut}}}}
\sum_{p^{\mathrm{cut}}\in \operatorname{Tors}H^2(X^{\mathrm{cut}};\Lambda)}
\frac{1}{\#\operatorname{Tors}H^2(X^{\mathrm{cut}};\Lambda)}
\,
\bigl(\sigma_{X^{\mathrm{cut}},p^{\mathrm{cut}}}\otimes
\mu_{X^{\mathrm{cut}},p^{\mathrm{cut}}}\bigr),
\]
viewed as a vector in the Hilbert space associated to $\partial X^{\mathrm{cut}}=\partial X\sqcup \Sigma\sqcup (-\Sigma).$ The trace $\operatorname{Tr}_{\Sigma}$ contracts the two additional boundary factors
corresponding to $\Sigma$ and $-\Sigma$. By
Proposition~\ref{prop:toral-cylinder-kernel}, this contraction is given by the
cylinder state for $\Sigma\times I$, equivalently by the identity kernel on the
Hilbert space of $\Sigma$. Fix a torsion class $$p^{\mathrm{cut}}\in \operatorname{Tors}H^2(X^{\mathrm{cut}};\Lambda)$$
compatible with a torsion class $
p\in \operatorname{Tors}H^2(X;\Lambda).$ By the gluing property of the classical Chern--Simons line construction, contraction with the cylinder
kernel identifies the leafwise section on the cut manifold with the leafwise section on the glued manifold:
\begin{equation}\label{eq:CS-section-gluing}
\operatorname{Tr}_{\Sigma}\!\bigl(\sigma_{X^{\mathrm{cut}},p^{\mathrm{cut}}}\bigr)
=
\sigma_{X,p}.
\end{equation}
This is the descended form, on flat moduli, of the basic fact that the exponentiated classical
Chern--Simons line element is multiplicative under gluing of bordisms.  The determinant-line gluing theorem for Reidemeister torsion yields the identity
\begin{equation}\label{eq:torsion-gluing-main}
T_X(\mathfrak t)
=
T_{X^{\mathrm{cut}}}(\mathfrak t)\otimes T_\Sigma(\mathfrak t)^{-1}
\end{equation}
under the determinant-line isomorphism induced by the Mayer--Vietoris sequence. Transported through
Definition~\ref{def:Componentwise-half-density} and Proposition~\ref{prop:toral-determinant-line}, this implies
that contraction with the cylinder half-density sends the torsion half-density of the cut manifold to the
torsion half-density of the glued manifold, up to the scalar identified in
Proposition~\ref{prop:scalar-normalization-half-densities}. Concretely,
\begin{equation}\label{eq:half-density-gluing-main}
\operatorname{Tr}_{\Sigma}\!\bigl(\mu_{X^{\mathrm{cut}},p^{\mathrm{cut}}}\bigr)
=
\left(
\frac{\#\operatorname{Tors}H^2(X^{\mathrm{cut}};\Lambda)}
     {\#\operatorname{Tors}H^2(X;\Lambda)}
\right)^{1/2}
\mu_{X,p}.
\end{equation}
Equivalently, this is the content of Proposition~\ref{prop:scalar-normalization-half-densities} rewritten at the level of the
unnormalized half-densities. Applying \(\operatorname{Tr}_{\Sigma}\) to the summand indexed by \(p^{\mathrm{cut}}\), and using
\eqref{eq:CS-section-gluing} and \eqref{eq:half-density-gluing-main}, we obtain
\[
\operatorname{Tr}_{\Sigma}\!\bigl(
\sigma_{X^{\mathrm{cut}},p^{\mathrm{cut}}}\otimes
\mu_{X^{\mathrm{cut}},p^{\mathrm{cut}}}
\bigr)
=
\left(
\frac{\#\operatorname{Tors}H^2(X^{\mathrm{cut}};\Lambda)}
     {\#\operatorname{Tors}H^2(X;\Lambda)}
\right)^{1/2}
\,
\sigma_{X,p}\otimes \mu_{X,p}.
\]
The cylinder contributes the factor $|\det K|^{\frac14 \dim H^1(\Sigma;\mathbb R)}$ from Proposition~\ref{prop:toral-cylinder-kernel}. The remaining exponent bookkeeping is exactly
Corollary~\ref{cor:exponent-identity}, which states that
\begin{equation}\label{eq:mx-bookkeeping}
m_{X^{\mathrm{cut}}}
+\frac14\dim H^1(\Sigma;\mathbb R)
+\frac12\dim\bigl(\Lambda_{X^{\mathrm{cut}},p^{\mathrm{cut}}}\cap \mathcal C\bigr)
-\frac14\dim H^1(\partial X^{\mathrm{cut}};\mathbb R)
=
m_X.
\end{equation}
Thus the total power of $|\det K|$ produced after contraction is exactly $|\det K|^{m_X}$. At the same time, the scalar from \eqref{eq:half-density-gluing-main} combines with the averaging factor
$(\#\operatorname{Tors}H^2(X^{\mathrm{cut}};\Lambda))^{-1}$
in the definition of \(Z^{CS}_{\mathbb T,K}(X^{\mathrm{cut}})\), and after summing over all compatible
cut torsion classes one obtains exactly the coefficient
\[
\frac{1}{\#\operatorname{Tors}H^2(X;\Lambda)}
\]
appearing in Definition~\ref{def:canonical-toral-boundary-vector} for \(Z^{CS}_{\mathbb T,K}(X)\). Therefore every compatible family of cut summands contracts to the corresponding summand of
\(Z^{CS}_{\mathbb T,K}(X)\) with exactly the correct normalization, and summing over all torsion components
gives
\[
\operatorname{Tr}_{\Sigma}\!\bigl(Z^{CS}_{\mathbb T,K}(X^{\mathrm{cut}})\bigr)
=
Z^{CS}_{\mathbb T,K}(X).
\]
This proves the gluing law.
\end{proof}

\begin{remark}
Theorem~\ref{thm:Componentwise-toral-gluing} is the core TQFT statement, cutting \(X\) along \(\Sigma\) and then tracing over the new \(\Sigma\)--boundary components reproduces exactly the original state \(Z^{CS}_{\mathbb T,K}(X)\).
\end{remark}

\subsection{Closed Partition Function}
When \(\partial X=\varnothing\), the boundary Hilbert space is replaced by a scalar, and the boundary vector becomes the partition function of the closed \(3\)--manifold \(X\).

\begin{definition}
\label{def:closed-toral-partition-function}
If $X$ is closed and connected, define
\[
Z^{CS}_{\mathbb T,K}(X)
=
\frac{|\det K|^{\,m_X}}{\#\,\Tors H^2(X;\Lambda)}
\sum_{p\in \Tors H^2(X;\Lambda)}
\int_{\mathcal M_{X,p}(\mathbb T)}
\sigma_{X,p}\,(T_X(\mathfrak t))^{1/2},
\]
where $(T_X(\mathfrak t))^{1/2}$ denotes the translation-invariant density
induced by the square root of Ray--Singer/Reidemeister torsion with coefficients
in $\mathfrak t$.
\end{definition}

\begin{remark}
The normalization is not an analogy.
It is forced by two facts:
\begin{enumerate}
\item[\textup{(i)}]
The Hilbert space of a genus-$g$ surface has dimension
$|G_K|^g=|\det K|^g$, so the cylinder contributes the factor
$|\det K|^{\frac14\dim H^1(\Sigma;\mathbb R)}$.
\item[\textup{(ii)}]
The exponent identity of Corollary~\ref{cor:exponent-identity} is purely
cohomological and therefore unchanged from the rank--one case.
\end{enumerate}
Together with Proposition~\ref{prop:scalar-normalization-half-densities},
this uniquely forces the factor $|\det K|^{m_X}$.
\end{remark}

\subsection{Extended TQFT}

All ingredients are now in place: the state spaces
\(\mathcal H_{\mathbb T,K}(\Sigma,L)\), the boundary states
\(Z^{CS}_{\mathbb T,K}(X)\), and the BKS comparison operators
\(F_{L\,L_X^{\mathbb R}}\). The final theorem shows that the geometric structures constructed in the previous sections fit together to define a unitary extended $(2+1)$-dimensional TQFT.
At this point all axioms of the  formalism have been verified separately: finite-dimensionality
of state spaces, naturality, monoidality under disjoint union, the cylinder axiom, the gluing law, and the
Maslov-corrected independence of the boundary Lagrangian.

Before stating the extended theorem, we fix the bordism formalism used in the remainder
of this section. We work with a \(K\)-twisted  extended bordism formalism:
the underlying extended bordism data are the same as those in the Walker--Manoliu setup,
namely objects \((\Sigma,L)\) and weighted bordisms \((X,L,n)\), but the Maslov
correction in the \(3\)-dimensional weighting and composition law is replaced by the toral
cocycle \(\mu_K\).

\begin{prop}
\label{prop:K-twisted-Walker}
Let
\[
\mu_K(L_1,L_2,L_3)
:=
\mu(\mathcal P_{L_1},\mathcal P_{L_2},\mathcal P_{L_3}),
\qquad
\mathcal P_{L_i}=L_i\otimes \mathfrak t,
\]
be the toral Maslov--Kashiwara index of Proposition~\ref{prop:toral-BKS}. Equivalently,
by Lemma~\ref{lem:toral-maslov-signature},
\[
\mu_K(L_1,L_2,L_3)=\sigma(K)\,\mu_\Sigma(L_1,L_2,L_3).
\]
Define the \(K\)-twisted extended weighting by keeping the same underlying extended
bordism data as in the Walker--Manoliu setup, namely objects \((\Sigma,L)\) and
weighted bordisms \((X,L,n)\), but replacing Walker's ordinary Maslov correction by
\(\mu_K\). Thus, if \(X\) is cut
along a closed oriented surface \(\Sigma\), with
\[
\partial X^{\mathrm{cut}}=\partial X\sqcup (-\Sigma)\sqcup \Sigma
\]
and if \(L\subset H^1(\partial X;\mathbb R)\) and \(L_\Sigma\subset H^1(\Sigma;\mathbb R)\) are
rational Lagrangians, then the cut weight is defined by
\begin{equation}
\label{eq:K-twisted-ncut}
n_{\mathrm{cut}}
\equiv
n-\mu_K\!\bigl(L\oplus L_\Sigma\oplus L_\Sigma,\,
L^{\mathbb R}_{X^{\mathrm{cut}}},\,
L^{\mathbb R}_{X}\oplus L_D\bigr)
\pmod 8,
\end{equation}
where \(L_D\subset H^1(-\Sigma;\mathbb R)\oplus H^1(\Sigma;\mathbb R)\) is the diagonal Lagrangian. Then the weighted toral assignment
\[
\widetilde Z^{\,CS}_{\mathbb T,K}(X,L,n)
:=
e^{\frac{\pi i}{4}n}\,
F_{L\,L_X^{\mathbb R}}\!\bigl(Z^{CS}_{\mathbb T,K}(X)\bigr)
\in \mathcal H_{\mathbb T,K}(\partial X,L)
\]
is compatible with the \(K\)-twisted boundary-weight convention, in the sense that
the Maslov phase arising from BKS composition is canceled by the corresponding
weight shift under change of boundary Lagrangian and under gluing. In particular, for any rational Lagrangians \(L_1,L_2,L_3\subset H^1(\Sigma;\mathbb R)\),
the projective BKS factor
\[
F_{L_3L_2}\circ F_{L_2L_1}
=
e^{\frac{\pi i}{4}\mu_K(L_1,L_2,L_3)}\,F_{L_3L_1}
\]
is canceled exactly by the \(K\)-twisted weight convention.
\end{prop}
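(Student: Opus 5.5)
The proof is a cocycle-cancellation argument modelled on Walker's treatment of the signature cocycle and Manoliu's rank-one verification. The only change is that every Maslov--Kashiwara index is read with the form $\omega_{\Sigma,K}$, equivalently multiplied by $\sigma(K)$ via Lemma~\ref{lem:toral-maslov-signature}. I would verify two compatibilities in order: change of boundary Lagrangian, then gluing. Since $\widetilde Z^{\,CS}_{\mathbb T,K}(X,L,n)$ is built from $Z^{CS}_{\mathbb T,K}(X)$, which lives canonically in $\mathcal H_{\mathbb T,K}(\partial X,L_X^{\mathbb R})$ by Proposition~\ref{def:canonical-toral-boundary-vector}, all $L$-dependence enters through the BKS operators and the weight $n$.

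\emph{Change of Lagrangian.} In the $K$-twisted convention, moving the boundary Lagrangian from $L_1$ to $L_2$ shifts the weight by
\[
n_2\equiv n_1+\mu_K(L_X^{\mathbb R},L_1,L_2)\pmod 8 .
\]
Applying $F_{L_2L_1}$ to $\widetilde Z(X,L_1,n_1)$ gives $e^{\frac{\pi i}{4}n_1}F_{L_2L_1}F_{L_1L_X^{\mathbb R}}Z^{CS}_{\mathbb T,K}(X)$. The composition law of Proposition~\ref{prop:toral-BKS} rewrites this as $e^{\frac{\pi i}{4}(n_1+\mu_K(L_X^{\mathbb R},L_1,L_2))}F_{L_2L_X^{\mathbb R}}Z^{CS}_{\mathbb T,K}(X)$, which is $\widetilde Z(X,L_2,n_2)$. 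Consistency of this rule for three successive changes is the cocycle identity for $\mu_K$. It follows from the cocycle identity for $\mu_\Sigma$ (Kashiwara/Walker) multiplied by $\sigma(K)$. The last displayed claim of the proposition is the special case of this computation in which the three Lagrangians are composed directly.

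\emph{Gluing.} Here I would combine Theorem~\ref{thm:Componentwise-toral-gluing} with the BKS calculus. The trace $\operatorname{Tr}_\Sigma$ is naturally defined on vectors polarized by $L\oplus L_\Sigma\oplus L_\Sigma$. By Proposition~\ref{prop:toral-cylinder-kernel}, contraction against the diagonal is pairing with the quantization of $L_D$, so the state of the glued manifold is read in polarization $L_X^{\mathbb R}\oplus L_D$. One then writes
\[
\operatorname{Tr}_\Sigma\bigl(F_{L\oplus L_\Sigma\oplus L_\Sigma,\,L^{\mathbb R}_{X^{\mathrm{cut}}}}Z^{CS}(X^{\mathrm{cut}})\bigr)
\]
and uses Proposition~\ref{prop:toral-BKS} to reroute the BKS operator through $L_X^{\mathbb R}\oplus L_D$. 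This produces exactly the phase $e^{\frac{\pi i}{4}\mu_K(L\oplus L_\Sigma\oplus L_\Sigma,\,L^{\mathbb R}_{X^{\mathrm{cut}}},\,L_X^{\mathbb R}\oplus L_D)}$. The definition \eqref{eq:K-twisted-ncut} of $n_{\mathrm{cut}}$ subtracts precisely this index, so the phases cancel and $\operatorname{Tr}_\Sigma\widetilde Z(X^{\mathrm{cut}},\cdot,n_{\mathrm{cut}})=\widetilde Z(X,L,n)$.

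The main obstacle is the gluing step. It requires justifying that the BKS operator for $L^{\mathbb R}_{X^{\mathrm{cut}}}\to L_X^{\mathbb R}\oplus L_D$, followed by the diagonal contraction, reproduces $\operatorname{Tr}_\Sigma$ of Theorem~\ref{thm:Componentwise-toral-gluing} with no extra phase. It also requires the sign and ordering conventions of the Maslov triple to match Walker's. My first approach would be to reduce to the rank-one statement: diagonalize $K$ over $\mathbb R$ as in Lemma~\ref{lem:toral-maslov-signature}, so that all BKS kernels and Maslov quadratic forms split as orthogonal sums of $\pm\omega_\Sigma$ copies. Then Manoliu's rank-one phase identity \cite[Sec.~VI]{Manoliu2} holds factorwise, including its orientation-reversed variant for the negative eigenvalues, and summing the factors gives the $\sigma(K)$-weighted index. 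The integrality of $\mu_K$ mod $8$ needs no separate argument, since it is an integer signature.
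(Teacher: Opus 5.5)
Your change-of-Lagrangian computation is correct: it is the composition law of Proposition~\ref{prop:toral-BKS} with the first Lagrangian set to $L_X^{\mathbb R}$, and consistency of the rule is the cocycle identity. Your overall strategy, Manoliu's phase cancellation with $\mu_K$ in place of the ordinary Maslov index, is the paper's.

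The gap is in how you propose to settle the step you correctly identify as the crux of gluing. That step is the claim that rerouting the BKS operator through $L_X^{\mathbb R}\oplus L_D$ and then contracting along the diagonal produces exactly the phase $e^{\frac{\pi i}{4}\mu_K(\cdots)}$. Diagonalizing $K$ over $\mathbb R$ does not reduce this to rank one. A basis of $\mathfrak t$ with $K(e_a,e_b)=\varepsilon_a\delta_{ab}$ is not a basis of $\Lambda$. In general $(\Lambda,K)$ is not an orthogonal sum of rank-one even lattices at all. Hence $\mathcal M_\Sigma(\mathbb T)$, $\mathcal L_{\Sigma,K}$, the Bohr--Sommerfeld sets, the BKS kernels and $\operatorname{Tr}_\Sigma$ do not factor.

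The paper's $A_2$ example in Section~\ref{subsec:A2example} shows this concretely. There $\det K=3$, which is not divisible by $4$, so $\Lambda$ cannot split orthogonally into even rank-one pieces. Likewise $\dim\mathcal H_{\mathbb T,K}(\Sigma,L)=3^g$ is odd, so it cannot be a tensor product of rank-one even-level spaces, whose dimensions are $k^g$ with $k$ even. The real "factors" carry no compatible lattice. They would in any case sit at odd level $\pm1$, where Manoliu's even-level rank-one theory does not apply.

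The only thing real diagonalization splits is the Maslov quadratic form. That is exactly the linear-algebra content of Lemma~\ref{lem:toral-maslov-signature} and nothing more. Your reduction does work when $K$ is diagonal over $\mathbb Z$, since the theory is then literally a tensor product of rank-one theories. But that excludes every genuinely coupled case.

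The paper avoids this by never splitting the quantization. It takes the toral BKS composition law of Proposition~\ref{prop:toral-BKS} as a statement about the unsplit torus. It then observes that the cancellation argument of \cite[Thm.~VI.11(e)]{Manoliu2} is formal in the index. That argument uses only skew-symmetry, the cocycle identity and additivity under direct sums, and $\mu_K=\sigma(K)\mu_\Sigma$ inherits all three from $\mu_\Sigma$. You should proceed the same way.

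If you want to verify the BKS/partial-trace phase directly rather than quote it, it must be computed on $G_K$ itself, for instance through Gauss sums. There Milgram's formula $\sum_{u\in G_K}q_K(u)=|G_K|^{1/2}e^{\pi i\sigma(K)/4}$ is the mechanism by which $\sigma(K)$ enters for a non-split lattice. It cannot be done factor by factor.
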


\begin{proof}
By Proposition~\ref{prop:toral-BKS},
\[
F_{L_3L_2}\circ F_{L_2L_1}
=
e^{\frac{\pi i}{4}\mu_K(L_1,L_2,L_3)}\,F_{L_3L_1}.
\]
Thus the failure of strict transitivity of the BKS operators is measured by the toral
Maslov cocycle \(\mu_K\), not by the ordinary surface Maslov index. Accordingly, to obtain strict compatibility with boundary weights, one must use the inverse
cocycle in the weighting convention. This is exactly what \eqref{eq:K-twisted-ncut} does:
every time a cut or a change of boundary Lagrangian introduces the phase
\(e^{\frac{\pi i}{4}\mu_K(\cdots)}\),
the weight shift contributes the inverse phase
\(e^{-\frac{\pi i}{4}\mu_K(\cdots)}\).

The verification is the same Maslov-cocycle computation as in the proof of the
extended gluing law in \cite[Thm.~VI.11(e), eq.~(VI.28)]{Manoliu2}, with the ordinary
Maslov index \(t\) replaced everywhere by \(\mu_K\). Since
\[
\mu_K=\sigma(K)\mu_\Sigma,
\]
the function \(\mu_K\) has the same skew-symmetry, cocycle, and additivity properties as
\(\mu_\Sigma\). Therefore the phase-cancellation argument of \cite[Thm.~VI.11(e)]{Manoliu2}
applies unchanged with \(\mu_K\) in place of \(t\), and the weighted assignment
\[
\widetilde Z^{\,CS}_{\mathbb T,K}(X,L,n)
=
e^{\frac{\pi i}{4}n}\,
F_{L\,L_X^{\mathbb R}}\!\bigl(Z^{CS}_{\mathbb T,K}(X)\bigr)
\]
is compatible with the \(K\)-twisted boundary-weight convention under change of boundary
Lagrangian and under gluing.
\end{proof}

\begin{theorem}
\label{thm:toral-TQFT}
Let $\mathbb T=\mathfrak t/\Lambda\cong U(1)^n$ be a compact torus and let
$K:\Lambda\times\Lambda\to\mathbb Z$ be an even, integral, nondegenerate
symmetric bilinear form. Then the assignments
\[
(\Sigma,L)\longmapsto \mathcal H_{\mathbb T,K}(\Sigma,L),
\]
\[
X\longmapsto Z^{CS}_{\mathbb T,K}(X),
\]
and
\[
(X,L,n)\longmapsto
e^{\frac{\pi i}{4}n}\,
F_{L\,L_X^{\mathbb R}}\!\bigl(Z^{CS}_{\mathbb T,K}(X)\bigr)
\]
define a unitary extended \((2+1)\)-dimensional
topological quantum field theory, where the boundary-weight convention is the
\(K\)-twisted one of Proposition~\ref{prop:K-twisted-Walker}.

If $X$ is closed and connected, the associated scalar is the closed toral
partition function of Definition~\ref{def:closed-toral-partition-function}.
\end{theorem}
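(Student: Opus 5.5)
The proof will be an assembly argument: we check the axioms of a unitary extended $(2+1)$-dimensional TQFT in the Walker--Manoliu formalism one at a time. Each axiom is matched with a result already established. For objects, we take $(\Sigma,L)\mapsto\mathcal H_{\mathbb T,K}(\Sigma,L)$. Finite-dimensionality, with $\dim=|\det K|^g$, is Theorem~\ref{thm:toral-hilbert-dimension}. The Hermitian structure comes from the leafwise pairing of the canonical half-densities $\kappa_\ell^{1/2}$ (Proposition~\ref{prop:toral-canonical-leafwise-half-density}), and we use it to define unitarity. Disjoint unions of surfaces go to tensor products: Bohr--Sommerfeld leaves of $\Sigma_1\sqcup\Sigma_2$ for $L_1\oplus L_2$ are products of leaves, and the prequantum bundle is an external tensor product. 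Orientation reversal $\Sigma\mapsto-\Sigma$ replaces $\omega_{\Sigma,K}$ by its negative, so it corresponds to complex conjugation, i.e.\ to the dual Hilbert space. Diffeomorphisms $f\colon(\Sigma,L)\to(\Sigma',L')$ act by pullback on $\mathcal M_\Sigma(\mathbb T)$, preserving $\mathcal L_{\Sigma,K}$ with its connection (by the naturality of the filling construction, Proposition~\ref{prop:gauge-action-boundary-line}) and sending leaves to leaves. The action is unitary, and composing with $F_{L'\,f_*L}$ gives the mapping-class action.

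For morphisms, we use the weighted vector $\widetilde Z^{\,CS}_{\mathbb T,K}(X,L,n)=e^{\frac{\pi i}{4}n}F_{L\,L_X^{\mathbb R}}(Z^{CS}_{\mathbb T,K}(X))$. It is well defined because Proposition~\ref{def:canonical-toral-boundary-vector} places $Z^{CS}_{\mathbb T,K}(X)$ in $\mathcal H_{\mathbb T,K}(\partial X,L_X^{\mathbb R})$ and Proposition~\ref{prop:toral-BKS} gives the unitary BKS operator. The remaining axioms are then checked in order.
\begin{enumerate}
\item[(a)] Naturality and multiplicativity follow from Proposition~\ref{prop:toral-boundary-vector-functoriality} together with the naturality of BKS operators under symplectomorphisms of tori.
\item[(b)] Change of boundary Lagrangian: $(X,L,n)$ and $(X,L',n')$, with $n'=n-\mu_K(L,L',L_X^{\mathbb R})$ as in the $K$-twisted convention, are related by $F_{L'L}$. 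This is the composition law of Proposition~\ref{prop:toral-BKS} combined with Proposition~\ref{prop:K-twisted-Walker}.
\item[(c)] The cylinder axiom is Proposition~\ref{prop:toral-cylinder-kernel}. Here I need to normalize so that the weighted cylinder with the appropriate $n$ acts as the identity. The factor $|\det K|^{\frac14\dim H^1(\Sigma)}$ is absorbed by the identification of the cylinder vector with an operator through the inner product on $\mathcal H(\Sigma,L)\otimes\mathcal H(-\Sigma,L)$, whose dimension normalization, by Theorem~\ref{thm:toral-hilbert-dimension} and Lemma~\ref{lem:genus-BKS-fourier}, is precisely $|G_K|^{g/2}$.
\item[(d)] Gluing: the unweighted trace identity is Theorem~\ref{thm:Componentwise-toral-gluing}. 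The phases from moving $L_{X^{\mathrm{cut}}}^{\mathbb R}$ to $L\oplus L_\Sigma\oplus L_\Sigma$ and back to $L_X^{\mathbb R}\oplus L_D$ are cancelled by the cut weight \eqref{eq:K-twisted-ncut}, again by Proposition~\ref{prop:K-twisted-Walker}.
\end{enumerate}

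For closed connected $X$, the boundary Hilbert space is $\mathbb C$, the leaf $\Lambda_{X,p}$ degenerates to the whole torus $\mathcal M_{X,p}(\mathbb T)$, and pairing $\sigma_{X,p}$ with $\mu_{X,p}$ becomes integration against the torsion density by Proposition~\ref{prop:Componentwise-toral-torsion-half-density}(iv). This gives Definition~\ref{def:closed-toral-partition-function}. I would confirm consistency by cutting $X$ along a separating surface and applying (d), so that the closed scalar agrees with the paired boundary vectors. The dimension statement of the Main Theorem is then Theorem~\ref{thm:toral-hilbert-dimension}.

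The main obstacle is the cylinder/identity normalization in step (c), together with its interplay with the gluing trace. One must see that the scalar $|\det K|^{\frac14\dim H^1(\Sigma)}$ is exactly the Hilbert--Schmidt normalization that turns the BKS pairing of the diagonal Bohr--Sommerfeld leaves into the identity operator. I would verify this first in the explicit basis $\{e_{\mathbf u}\}$ of Lemma~\ref{lem:genus-BKS-fourier}. There the diagonal $D$ meets the $L\oplus L$ Bohr--Sommerfeld leaves in $|G_K|^g$ points, and the computation reduces to character orthogonality on $G_K^g$. Once this is settled, everything else is bookkeeping: the powers of $|\det K|$ are handled by Corollary~\ref{cor:exponent-identity}, the torsion counts by Proposition~\ref{prop:scalar-normalization-half-densities}, and the phases mod $8$ by $\mu_K=\sigma(K)\mu_\Sigma$.
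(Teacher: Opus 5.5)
Your proposal is essentially the paper's own proof. The paper likewise just collects Theorem~\ref{thm:toral-hilbert-dimension}, Propositions~\ref{prop:toral-BKS}, \ref{prop:toral-boundary-vector-functoriality}, \ref{prop:toral-cylinder-kernel} and \ref{prop:K-twisted-Walker}, and Theorem~\ref{thm:Componentwise-toral-gluing}, reads the closed case directly off Definition~\ref{def:closed-toral-partition-function}, and does no separate cylinder computation beyond citing Proposition~\ref{prop:toral-cylinder-kernel}.

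One correction if you do carry out your extra cylinder check: the diagonal $D$ is not transverse to $\mathcal P_L\oplus\mathcal P_L$. It meets only the $|G_K|^g$ diagonal Bohr--Sommerfeld leaves $\ell\times\ell$, and each intersection is a diagonal copy of $\ell$ (a torus of dimension $g\dim\mathfrak t$), not a point. So that pairing is diagonal from the outset and involves no character sum. Character orthogonality only enters for a transverse choice such as $L\oplus L'$ on $(-\Sigma)\sqcup\Sigma$, where $D$ meets each leaf in exactly one point.
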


\begin{proof}
We verify the axioms by collecting the results established in the preceding sections.

Let $\Sigma$ be a closed oriented surface and let
$L\subset H^1(\Sigma;\mathbb R)$ be a rational Lagrangian subspace.
Then the state space $\mathcal H_{\mathbb T,K}(\Sigma,L)$ is finite-dimensional by
Theorem~\ref{thm:toral-hilbert-dimension}. For any two rational Lagrangians
$L_1,L_2\subset H^1(\Sigma;\mathbb R)$, Proposition~\ref{prop:toral-BKS}
provides a canonical unitary BKS operator
\[
F_{L_2L_1}\colon
\mathcal H_{\mathbb T,K}(\Sigma,L_1)\xrightarrow{\ \cong\ }
\mathcal H_{\mathbb T,K}(\Sigma,L_2),
\]
and its composition law is governed by the toral Maslov cocycle
\[
F_{L_3L_2}\circ F_{L_2L_1}
=
e^{\frac{\pi i}{4}\mu_K(L_1,L_2,L_3)}\,F_{L_3L_1}.
\]

Now let $X$ be a compact oriented bordism. The boundary state
\[
Z^{CS}_{\mathbb T,K}(X)\in
\mathcal H_{\mathbb T,K}(\partial X,L_X^{\mathbb R})
\]
is natural under orientation-preserving diffeomorphisms and multiplicative under
disjoint union by
Proposition~\ref{prop:toral-boundary-vector-functoriality}.
The cylinder bordism acts by the identity, with the required normalization, by
Proposition~\ref{prop:toral-cylinder-kernel}.
The unweighted gluing law
\[
Z^{CS}_{\mathbb T,K}(X)
=
\operatorname{Tr}_{\Sigma}\!\bigl(Z^{CS}_{\mathbb T,K}(X^{\mathrm{cut}})\bigr)
\]
holds by Theorem~\ref{thm:Componentwise-toral-gluing}. It remains to incorporate the extended boundary-weight correction.
This is exactly the content of Proposition~\ref{prop:K-twisted-Walker}:
the failure of strict transitivity of the BKS operators is measured by the cocycle
$\mu_K$, and the \(K\)-twisted weight convention contributes the inverse phase.
Hence the weighted assignment
\[
(X,L,n)\longmapsto
e^{\frac{\pi i}{4}n}\,
F_{L\,L_X^{\mathbb R}}\!\bigl(Z^{CS}_{\mathbb T,K}(X)\bigr)
\]
is compatible with change of boundary Lagrangian and with gluing in the
\(K\)-twisted extended sense.

Therefore the above assignments satisfy finite-dimensionality, unitarity,
naturality, monoidality under disjoint union, the cylinder axiom, the unweighted
gluing law, and the \(K\)-twisted Maslov-corrected boundary-weight compatibility.
Hence they define a unitary  extended
\((2+1)\)-dimensional topological quantum field theory.

If $X$ is closed and connected, then $\partial X=\varnothing$, so the associated
Hilbert space is canonically identified with $\mathbb C$, and the resulting scalar
is exactly the closed toral partition function of
Definition~\ref{def:closed-toral-partition-function}.
\end{proof}

\begin{remark}
Theorem \ref{thm:toral-TQFT} shows that toral Chern–Simons theory is not only functorially well defined, but also geometrically explicit. Quantization via real polarization identifies the boundary state spaces through Bohr–Sommerfeld data and canonical half-densities, while BKS operators and Maslov phases control the change of polarization. The torsion-sector Chern–Simons sections provide canonical bordism vectors satisfying the cylinder and gluing laws, so the extended TQFT structure is realized concretely.
\end{remark}

\begin{corollary}
\label{cor:rank-one-recovery}
Assume
\[
\mathbb T=U(1)=\mathbb R/\mathbb Z,\qquad
\Lambda=\mathbb Z,\qquad
\mathfrak t=\mathbb R,\qquad
K=[k],
\]
with \(k\in 2\mathbb Z_{>0}\).
Then the  theory of Theorem~\ref{thm:toral-TQFT} reduces
canonically to Manoliu's rank-one \(U(1)\) Abelian Chern--Simons TQFT.

More precisely:
\begin{enumerate} \renewcommand{\labelenumi}{(\roman{enumi})}
\item the boundary phase space is
\[
\mathcal M_\Sigma(U(1))
=
H^1(\Sigma;\mathbb R)/H^1(\Sigma;\mathbb Z),
\]
with symplectic form
\[
\omega_{\Sigma,K}([\alpha],[\beta])
=
k\int_\Sigma \alpha\wedge\beta,
\]
so the prequantum line bundle and  quantization via real polarization data agree with
Manoliu's level-\(k\) rank-one construction;

\item for every rational Lagrangian \(L\subset H^1(\Sigma;\mathbb R)\), the Hilbert space
\(\mathcal H_{U(1),k}(\Sigma,L)\) is canonically identified with Manoliu's Hilbert space
\(H(\Sigma,L)\);

\item the BKS operators coincide with Manoliu's BKS comparison operators, and since
\[
\sigma(K)=\sigma([k])=1,
\]
the toral Maslov cocycle satisfies
\[
\mu_K(L_1,L_2,L_3)=\mu_\Sigma(L_1,L_2,L_3),
\]
so the \(K\)-twisted boundary weighting reduces to the ordinary Walker weighting used
by Manoliu;

\item for every extended bordism \((X,L,n)\), the vector
\[
e^{\frac{\pi i}{4}n}\,
F_{L\,L_X^{\mathbb R}}\!\bigl(Z^{CS}_{U(1),k}(X)\bigr)
\]
is exactly Manoliu's vector \(Z(X,L,n)\).
\end{enumerate}

Hence, in the rank-one positive even case, Theorem~\ref{thm:toral-TQFT}
recovers Manoliu's unitary extended \(U(1)\) Chern--Simons TQFT.
\end{corollary}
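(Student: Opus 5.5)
The plan is to specialize every construction in the paper to $n=1$, $\Lambda=\mathbb Z$, $\mathfrak t=\mathbb R$, $K=[k]$, and check each item directly against Manoliu's definitions in \cite{Manoliu1,Manoliu2}.

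\emph{Items (i) and (ii).} Here $H^1(\Sigma;\mathfrak t)=H^1(\Sigma;\mathbb R)$ and $H^1(\Sigma;\Lambda)=H^1(\Sigma;\mathbb Z)$, and $K(\alpha\wedge\beta)=k\,\alpha\wedge\beta$. So $\omega_{\Sigma,K}$ is exactly $k$ times the cup-product form, which is Manoliu's level-$k$ symplectic form. The filling construction of Definition~\ref{def:toral-boundary-line-filling} becomes the rank-one Chern--Simons line construction, with comparison phase $\exp\bigl(\tfrac{\pi i k}{(2\pi)^2}\int_W F\wedge F\bigr)$. Evenness of $k$ is precisely what Proposition~\ref{prop:well-defined-boundary-line} needs, and it matches Manoliu's hypothesis. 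Therefore the Hermitian line bundle with connection produced by Proposition~\ref{prop:canonical-toral-prequantum} is canonically isomorphic to Manoliu's prequantum bundle, and this isomorphism respects the gauge action.

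Since the polarization is $\mathcal P_L=L\otimes\mathbb R=L$ and the leafwise half-densities of Proposition~\ref{prop:toral-canonical-leafwise-half-density} are the normalized invariant ones, Definition~\ref{def:toral-hilbert} reproduces Manoliu's $H(\Sigma,L)$ leaf by leaf. The Bohr--Sommerfeld count is $|G_K|^g=k^g$, with $G_K=\mathbb Z/k$, as it should be.

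\emph{Item (iii).} This follows because Proposition~\ref{prop:toral-BKS} was built by invoking Manoliu's BKS formalism on the same symplectic torus with the same polarizations, so the two sets of operators coincide. For $k>0$ we have $\sigma(K)=1$, so Lemma~\ref{lem:toral-maslov-signature} gives $\mu_K=\mu_\Sigma$. The cut-weight formula \eqref{eq:K-twisted-ncut} is therefore literally Walker's formula.

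\emph{Item (iv).} This is where the real work lies. I would match term by term the data entering Proposition~\ref{def:canonical-toral-boundary-vector}:
\begin{itemize}
\item $\Tors H^2(X;\mathbb Z)$ indexes Manoliu's components;
\item $\sigma_{X,p}$ is the rank-one Chern--Simons section, since it is defined by the same filling element $[(X,P,\Theta,\mathrm{id}),1]$;
\item $\mu_{X,p}$ is Manoliu's torsion half-density, since Definition~\ref{def:Componentwise-half-density} with $\mathfrak t=\mathbb R$ and lattice $\mathbb Z$ uses the standard covolume-one normalization, so the $H^0$-trivializations agree;
\item the prefactor $|\det K|^{m_X}=k^{m_X}$ coincides with Manoliu's normalization exponent.
\end{itemize}
Applying $e^{\pi i n/4}F_{L L_X^{\mathbb R}}$ and using item (iii) then gives Manoliu's $Z(X,L,n)$.

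The main obstacle is confirming that the normalization conventions agree exactly rather than up to a positive constant. Three points need checking: the averaging factor $1/\#\Tors$, the exponent $m_X$, and the density on $\mathbb R$ inherited from $\mathbb Z$. I would first test them on the cylinder (Proposition~\ref{prop:toral-cylinder-kernel}) and on a solid torus against Manoliu's explicit formulas. After that, I would argue that any remaining discrepancy would be a multiplicative invariant compatible with gluing and equal to $1$ on the basic pieces, hence trivial.
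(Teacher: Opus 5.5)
Your proposal is correct and follows the same route as the paper. You specialize to $\Lambda=\mathbb Z$, $\mathfrak t=\mathbb R$, $K=[k]$, read off the level-$k$ form and $G_K=\mathbb Z/k$, use $\sigma([k])=1$ with Lemma~\ref{lem:toral-maslov-signature} to get $\mu_K=\mu_\Sigma$, and observe that the prequantum line, BKS operators, $\sigma_{X,p}$, $\mu_{X,p}$, $1/\#\Tors$ and $k^{m_X}$ are literally Manoliu's; this is what the paper compresses into ``all remaining constructions reduce tautologically''. The direct comparison of those conventions with Manoliu's definitions is all that is needed, so your closing fallback is superfluous. It also would not work on its own: cylinders and solid tori do not generate all bordisms under gluing along closed surfaces (no genus-$2$ handlebody arises that way), so agreement on them cannot force a gluing-compatible discrepancy to be trivial.
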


\begin{proof}
For \(\mathbb T=U(1)\), one has \(\mathfrak t=\mathbb R\), \(\Lambda=\mathbb Z\), and
\(K=[k]\). Therefore
\[
H^1(\Sigma;\mathfrak t)=H^1(\Sigma;\mathbb R),\qquad
H^1(\Sigma;\Lambda)=H^1(\Sigma;\mathbb Z),
\]
so
\[
\mathcal M_\Sigma(U(1))
=
H^1(\Sigma;\mathbb R)/H^1(\Sigma;\mathbb Z).
\]
The symplectic form becomes
\[
\omega_{\Sigma,K}([\alpha],[\beta])
=
\int_\Sigma K(\alpha\wedge\beta)
=
k\int_\Sigma \alpha\wedge\beta,
\]
which is exactly the rank-one level-\(k\) symplectic form. Since \(k>0\), the signature of \(K=[k]\) is $\sigma(K)=1.$ Hence Lemma~\ref{lem:toral-maslov-signature} gives
\[
\mu_K(L_1,L_2,L_3)=\mu_\Sigma(L_1,L_2,L_3).
\]
Therefore the \(K\)-twisted Maslov correction of
Proposition~\ref{prop:K-twisted-Walker} reduces to the ordinary Walker correction. In rank one, all remaining constructions reduce tautologically:
the rational real polarizations are the same as in Manoliu's theory,
the Bohr--Sommerfeld leaves are the same,
the Hilbert spaces agree,
the BKS operators are the same operators,
and the bordism vectors have the same normalization.
In particular, the genus-\(g\) dimension formula becomes
\[
\dim \mathcal H_{U(1),k}(\Sigma,L)
=
|G_K|^g
=
|\det K|^g
=
k^g,
\]
which is exactly Manoliu's rank-one dimension formula.

Thus the assignments of
Theorem~\ref{thm:toral-TQFT} coincide with Manoliu's assignments on extended
surfaces and extended bordisms, and therefore recover her unitary \(U(1)\)
Abelian Chern--Simons TQFT.
\end{proof}

\section{(2+1) Abelian Bosonic Topological Order}

\subsection{Example: A Coupled Rank-Two Theory}
\label{subsec:A2example}

We now work out explicitly the smallest genuinely coupled higher-rank example. Let
\[
\mathbb T=\mathfrak t/\Lambda \cong U(1)^2,
\qquad
\Lambda=\mathbb Z^2,
\qquad
K=
\begin{pmatrix}
2&1\\
1&2
\end{pmatrix}.
\]
Then \(K\) is even, integral, symmetric, and nondegenerate, with
\[
\det K=3,
\qquad
K^{-1}=\frac13
\begin{pmatrix}
2&-1\\
-1&2
\end{pmatrix}.
\]
Hence \(G_K=\Lambda^*/K\Lambda\) is a finite Abelian group of order \(3\), and therefore \(G_K\cong \mathbb Z/3\mathbb Z\). Let \(g:=[(1,1)]\in G_K.\)
Since \(3(1,1)=K(1,1),\) the class \(g\) has order \(3\), hence generates \(G_K\). The induced quadratic form is
\[
q_K([x])=\exp\!\bigl(\pi i\,x^{\top}K^{-1}x\bigr).
\]
Therefore
\[
q_K(0)=1,
\qquad
q_K(g)=\exp\!\left(\pi i\frac23\right)=e^{2\pi i/3},
\qquad
q_K(2g)=\exp\!\left(\pi i\frac83\right)=e^{2\pi i/3}.
\]
If we write \(\omega:=e^{2\pi i/3},\) then
\[
q_K(0)=1,
\qquad
q_K(g)=q_K(2g)=\omega.
\]
The associated bicharacter is
\[
\Omega_K(a,b)=\frac{q_K(a+b)}{q_K(a)q_K(b)}
=\exp\!\bigl(2\pi i\,x^{\top}K^{-1}y\bigr),
\qquad a=[x],\ b=[y].
\]
In the ordered basis \(\{0,g,2g\}\) of \(G_K\), one obtains
\[
\bigl(\Omega_K(a,b)\bigr)_{a,b\in\{0,g,2g\}}
=
\begin{pmatrix}
1&1&1\\
1&\omega^2&\omega\\
1&\omega&\omega^2
\end{pmatrix}.
\]
Let \(\Sigma=T^2\), and choose a symplectic basis \((a,b)\) of \(H^1(\Sigma;\mathbb Z)\). Let \(L=\mathbb R\langle a\rangle\subset H^1(\Sigma;\mathbb R),\) and \(L'=\mathbb R\langle b\rangle.\) Then
\[
H^1(\Sigma;\mathfrak t)\cong (L\otimes\mathfrak t)\oplus (L'\otimes\mathfrak t),
\]
so a point of the phase space may be written as
\[
(x,y)\in \mathfrak t\oplus\mathfrak t\cong \mathbb R^2\oplus\mathbb R^2,
\qquad
x=(x_1,x_2),\quad y=(y_1,y_2).
\]
Accordingly,
\[
\mathcal M_\Sigma(\mathbb T)
=
H^1(\Sigma;\mathfrak t)/H^1(\Sigma;\Lambda)
\cong
(\mathbb R^2/\mathbb Z^2)\times (\mathbb R^2/\mathbb Z^2).
\]
The symplectic form is \(\omega_{\Sigma,K}=K(dx\wedge dy),\)
that is,
\[
\omega_{\Sigma,K}
=
2\,dx_1\wedge dy_1
+dx_1\wedge dy_2
+dx_2\wedge dy_1
+2\,dx_2\wedge dy_2.
\]

Let \(\mathcal L_{\Sigma,K}\rightarrow \mathcal M_\Sigma(\mathbb T)\) be the canonical prequantum line bundle. On the universal cover
\[
H^1(\Sigma;\mathfrak t)\cong \mathbb R^2\oplus\mathbb R^2
\]
one may choose the translation-invariant connection \(1\)-form $\alpha
=
\pi i\bigl(K(x,dy)-K(y,dx)\bigr),$
namely
\[
\alpha
=
\pi i\Bigl(
2x_1\,dy_1+x_1\,dy_2+x_2\,dy_1+2x_2\,dy_2
-2y_1\,dx_1-y_1\,dx_2-y_2\,dx_1-2y_2\,dx_2
\Bigr),
\]
with \(d\alpha=-2\pi i\,\omega_{\Sigma,K}.\) The rational Lagrangian \(L\) determines the real polarization
\[
\mathcal P_L:=L\otimes \mathfrak t\subset T\mathcal M_\Sigma(\mathbb T),
\]
which is tangent to the \(x\)-directions. Its leaves are therefore
\[
\ell_y=\{(x,y):x\in \mathbb R^2/\mathbb Z^2\},
\qquad
y\in \mathbb R^2/\mathbb Z^2.
\]
Restricting the connection form to \(\ell_y\) gives
\[
\alpha|_{\ell_y}
=
-\pi i\bigl((2y_1+y_2)\,dx_1+(y_1+2y_2)\,dx_2\bigr).
\]
Hence the Bohr--Sommerfeld condition is
\[
2y_1+y_2\in \mathbb Z,
\qquad
y_1+2y_2\in \mathbb Z,
\]
equivalently \(Ky\in \Lambda^*=\mathbb Z^2.\) Thus \(y\in K^{-1}\Lambda^*.\) Since
\[
K^{-1}\binom{m}{n}
=
\frac13
\binom{2m-n}{-m+2n},
\qquad m,n\in\mathbb Z,
\]
every Bohr--Sommerfeld class modulo \(\Lambda=\mathbb Z^2\) is represented by one of
\[
y_0=\binom00,
\qquad
y_1=\binom{1/3}{1/3},
\qquad
y_2=\binom{2/3}{2/3}.
\]
Indeed,
\[
Ky_0=\binom00,
\qquad
Ky_1=\binom11,
\qquad
Ky_2=\binom22,
\]
so all three satisfy the Bohr--Sommerfeld condition, and they are pairwise distinct modulo
\(\mathbb Z^2\). Therefore
\[
\mathcal{BS}(\Sigma,L)=\{\ell_0,\ell_1,\ell_2\},
\qquad
\ell_r:=\ell_{y_r},\quad r=0,1,2.
\]
This is the concrete realization of the general fact
\[
\# \mathcal{BS}(\Sigma,L)=|G_K|=3.
\]

For each Bohr--Sommerfeld leaf \(\ell_r\), define
\[
\mathcal S_{\ell_r}
:=
\Gamma_{\mathrm{flat}}
\!\left(
\ell_r;\,
\mathcal L_{\Sigma,K}\otimes |\det \mathcal P_L^*|^{1/2}
\right).
\]
Then
\[
\mathcal H_{\mathbb T,K}(\Sigma,L)
=
\bigoplus_{r=0}^2 \mathcal S_{\ell_r},
\qquad
\dim \mathcal H_{\mathbb T,K}(\Sigma,L)=3.
\]
Let \(\kappa^{1/2}_{\ell_r}\) be the canonical normalized half-density on \(\ell_r\), and choose
a parallel unit section \(s_r\) of \(\mathcal L_{\Sigma,K}|_{\ell_r}\). Set
\[
e_r:=s_r\otimes \kappa^{1/2}_{\ell_r}\in \mathcal S_{\ell_r},
\qquad r=0,1,2.
\]
Then
\[
\mathcal H_{\mathbb T,K}(\Sigma,L)
\cong
\mathbb C e_0\oplus \mathbb C e_1\oplus \mathbb C e_2.
\]

If instead one chooses the complementary rational Lagrangian \(L'=\mathbb R\langle b\rangle,\) the same construction gives another \(3\)-dimensional Hilbert space \(\mathcal H_{\mathbb T,K}(\Sigma,L').\)
The Blattner--Kostant--Sternberg formalism then gives a canonical unitary operator
\[
F_{L'L}:\mathcal H_{\mathbb T,K}(\Sigma,L)\xrightarrow{\ \sim\ }\mathcal H_{\mathbb T,K}(\Sigma,L').
\]
Thus, even in this first nontrivial coupled example, the abstract BKS formalism becomes
an honest unitary comparison map between two explicit quantizations of the same
symplectic torus.

Let \(H=D^2\times S^1\) be the solid torus with \(\partial H=\Sigma\), and choose the above
symplectic basis so that
\[
L_H^{\mathbb R}
=
\operatorname{Im}\bigl(H^1(H;\mathbb R)\to H^1(\Sigma;\mathbb R)\bigr)
=
\mathbb R\langle a\rangle
=
L.
\]
Since \(H^2(H;\Lambda)=0,\) there is only one torsion component, namely \(p=0\). Its moduli space of flat connections is
\[
\mathcal M_{H,0}(\mathbb T)
\cong
H^1(H;\mathfrak t)/H^1(H;\Lambda)
\cong
\mathfrak t/\Lambda
\cong
\mathbb T,
\]
and the image of boundary restriction is \(\mathcal \Lambda_{H,0}=\ell_0.\) The classical toral Chern--Simons section gives a nowhere-vanishing parallel section
\[
\sigma_{H,0}:\mathcal \Lambda_{H,0}=\ell_0
\longrightarrow
\mathcal L_{\Sigma,K}|_{\ell_0},
\]
and the torsion construction gives a flat half-density
\[
\mu_{H,0}\in
\Gamma_{\mathrm{flat}}
\!\left(
\ell_0;\,|\det \mathcal P_L^*|^{1/2}
\right).
\]
Since
\[
m_H
=
\frac14\Bigl(
\dim H^1(H;\mathbb R)+\dim H^1(H,\partial H;\mathbb R)
-\dim H^0(H;\mathbb R)-\dim H^0(H,\partial H;\mathbb R)
\Bigr)=0,
\]
the boundary vector is
\[
Z^{CS}_{\mathbb T,K}(H)=\sigma_{H,0}\otimes \mu_{H,0}\in \mathcal S_{\ell_0}\subset \mathcal H_{\mathbb T,K}(\Sigma,L).
\]
Thus the solid torus selects the distinguished Bohr--Sommerfeld leaf \(\ell_0\).
After fixing the phase of \(e_0\), one may write
\[
Z^{CS}_{\mathbb T,K}(H)=c_H\,e_0,
\qquad c_H>0.
\]

For the cylinder \(\Sigma\times I\), one has \(\dim H^1(\Sigma;\mathbb R)=2\), so
\[
Z^{CS}_{\mathbb T,K}(\Sigma\times I)
=
|\det K|^{\frac14\dim H^1(\Sigma;\mathbb R)}\,\mathrm{Id}
=
3^{1/2}\,\mathrm{Id}_{\mathcal H_{\mathbb T,K}(\Sigma,L)}.
\]

Since
\[
H^1(S^3;\mathbb R)=0,
\qquad
H^2(S^3;\Lambda)=0,
\]
there is a unique torsion component and its flat moduli space is a point. Hence
\[
m_{S^3}=\frac12\bigl(b_1(S^3)-1\bigr)=-\frac12.
\]
Therefore the closed partition function is
\[
Z^{CS}_{\mathbb T,K}(S^3)=|\det K|^{-1/2}=3^{-1/2}.
\]

The preceding calculations already determine the genus-one operators in this example.
Indeed, the Bohr--Sommerfeld leaves for the polarization $L=\mathbb{R}\langle a\rangle$
are indexed by $G_K=\{0,g,2g\},$ and we write the corresponding basis of $\mathcal H_{\mathbb T,K}(\Sigma,L)$ as $\{e_0,e_1,e_2\}.$ Applying Lemma~\ref{lem:genus-T-diagonal}, we obtain
\[
T(e_a)=q_K(a)\,e_a,\qquad a\in G_K.
\]
Applying Lemma~\ref{lem:genus-BKS-fourier} to the transverse genus-one
polarizations $L=\mathbb{R}\langle a\rangle$ and $L'=\mathbb{R}\langle b\rangle$,
and identifying the two Bohr--Sommerfeld bases by their common labels in $G_K$,
we obtain
\[
S(e_a)=|G_K|^{-1/2}\sum_{b\in G_K}\Omega_K(a,b)\,e_b.
\]
Using the values of $q_K$ and $\Omega_K$ computed above, it follows that, in the
ordered basis $(e_0,e_1,e_2)$,
\[
T=
\begin{pmatrix}
1&0&0\\
0&\omega&0\\
0&0&\omega
\end{pmatrix},
\qquad
S=
\frac{1}{\sqrt{3}}
\begin{pmatrix}
1&1&1\\
1&\omega^2&\omega\\
1&\omega&\omega^2
\end{pmatrix}.
\]

Thus the genus-one theory is controlled by an explicit $3\times 3$ discrete Fourier
transform. In particular, although the gauge group has rank \(2\), the torus state space is \(3\)-dimensional, governed by the discriminant group \(G_K\cong \mathbb Z/3\mathbb Z\), rather than by the rank of the gauge group alone.

\begin{remark}
The modular data agree with the standard discriminant-group description in the literature. In the present case, the resulting Abelian topological order is that of the Halperin \((221)\) state; with the standard charge vector \(t=(1,1)\), this is precisely the Halperin \((221)\) fractional quantum Hall state \cite{Halperin1983}. Beyond this example, one advantage of the present approach over earlier \(K\)-matrix treatments \cite{wenzee1992,KeskiVakkuriWen1993,Wen2016} is that it recovers the same genus-one data directly from the geometric quantization of toral Chern--Simons theory, rather than from the effective \(K\)-matrix or from torus wave-function analyses based on modular transformations of degenerate ground states, and then extends those data to an all-genus extended TQFT. Whereas many approaches encode these invariants in \(K\)-matrix, modular-tensor-category, or discriminant-group language \cite{BelovMoore,Stirling,Wen2016,LuVishwanath2016}, our construction realizes them explicitly within an extended-TQFT framework.
\end{remark}

\begin{remark}
This example displays explicitly the main structures developed in this work:
\begin{itemize}
\item the compact symplectic torus \(\mathcal M_\Sigma(\mathbb T)\),
\item the canonical prequantum line bundle \(\mathcal L_{\Sigma,K}\),
\item the Bohr--Sommerfeld set \(\mathcal{BS}(\Sigma,L)=\{\ell_0,\ell_1,\ell_2\}\),
\item the \(3\)-dimensional Hilbert space \(\mathcal H_{\mathbb T,K}(\Sigma,L)\),
\item the distinguished solid-torus state \(Z^{CS}_{\mathbb T,K}(H)\in \mathcal S_{\ell_0}\),
\item the cylinder normalization \(3^{1/2}\mathrm{Id}\),
\item the closed value \(Z^{CS}_{\mathbb T,K}(S^3)=3^{-1/2}\).
\end{itemize}
\end{remark}

\subsection{From Toral TQFT to Wen's measurable Abelian data}

We now explain how the genus-one operators arising from the toral
quantization via real polarization recover the standard Abelian finite quadratic data and connect with Wen's
genus-one measurable quantities \((S,T,c)\) \cite{Wen2016}. Our aim is threefold:
first, to extract the finite quadratic data directly from geometric quantization;
second, to formulate the corresponding notion of equivalence at the level of extracted
measurable Abelian data; and third, to show that, once the genus-one operators are written
in the Bohr--Sommerfeld basis, the finite quadratic theory can be reconstructed canonically.

Let $\mathbb T=\mathfrak t/\Lambda \cong U(1)^n, K:\Lambda\times \Lambda \to \mathbb Z$
be an even, integral, nondegenerate symmetric bilinear form, and write $G_K:=\Lambda^*/K\Lambda$ for the associated discriminant group. By Proposition \ref{prop:toral-BS} , the Bohr--Sommerfeld leaves
of a genus-\(g\) surface form a \(G_K^g\)-torsor, and by Theorem~\ref{thm:toral-hilbert-dimension} the corresponding quantum space via real polarization has dimension \(|G_K|^g\). The quantization space attached
to a rational Lagrangian \(L\subset H^1(\Sigma;\mathbb R)\) is
\[
\mathcal{H}_{\mathbb T,K}(\Sigma,L)
=
\bigoplus_{\ell\in \mathcal{BS}(\Sigma,L)}
\Gamma_{\mathrm{flat}}\!\left(
\ell;\mathcal{L}_{\Sigma,K}\otimes |\det \mathcal{P}_L^*|^{1/2}
\right).
\]

We begin by recording the finite quadratic data determined by \(K\).

\begin{definition}
For \(u=[x]\in G_K=\Lambda^*/K\Lambda\), define
\[
q_K(u):=\exp\!\bigl(\pi i\,x^{\top}K^{-1}x\bigr)\in U(1).
\]
For \(u=[x]\) and \(v=[y]\) in \(G_K\), define
\[
\Omega_K(u,v)
:=
\frac{q_K(u+v)}{q_K(u)\,q_K(v)}
=
\exp\!\bigl(2\pi i\,x^{\top}K^{-1}y\bigr)\in U(1).
\]
We call \(q_K\) the quadratic form induced by \(K\), and \(\Omega_K\) the associated
symmetric bicharacter.
\end{definition}

\begin{lemma}
The maps \(q_K:G_K\to U(1)\) and \(\Omega_K:G_K\times G_K\to U(1)\) are well defined.
Moreover, \(\Omega_K\) is a nondegenerate symmetric bicharacter.
\end{lemma}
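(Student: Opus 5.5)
The plan is to check everything directly on lifts $x,y\in\Lambda^*$, using two facts: $K^{-1}$ maps $\Lambda^*$ into $\Lambda\otimes\mathbb Q$, and $K$ is even. Choose a $\Lambda$-basis, so that $\Lambda^*=\mathbb Z^n$ and $K\Lambda=K\mathbb Z^n$. We write $x^\top K^{-1}y$ for the pairing $\langle x, K^{-1}y\rangle$, where $K^{-1}y\in K^{-1}\Lambda^*\subset\mathfrak t$.

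For \emph{well-definedness of $\Omega_K$}, replace $y$ by $y+K\lambda$ with $\lambda\in\Lambda$. Then
\[
x^\top K^{-1}(y+K\lambda)=x^\top K^{-1}y+x^\top\lambda ,
\]
and $x^\top\lambda=x(\lambda)\in\mathbb Z$ because $x\in\Lambda^*$. So the exponential of $2\pi i$ times this is unchanged. The bilinear expression is symmetric because $K^{-1}$ is symmetric, so the same argument handles the first slot.

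For \emph{well-definedness of $q_K$}, which I expect to be the only step needing care, the factor is $\pi i$ rather than $2\pi i$, so evenness must enter. Compute
\[
(x+K\lambda)^\top K^{-1}(x+K\lambda)=x^\top K^{-1}x+2x^\top\lambda+\lambda^\top K\lambda .
\]
The middle term $2x(\lambda)$ is an even integer. The last term $K(\lambda,\lambda)$ is even because $K$ is even. Hence $\exp(\pi i\,\cdot)$ is unchanged.

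Next I will check that the two formulas for $\Omega_K$ agree: expanding $(x+y)^\top K^{-1}(x+y)$ gives exactly $\exp(2\pi i\,x^\top K^{-1}y)$ for the ratio $q_K(u+v)/(q_K(u)q_K(v))$. Biadditivity in each variable is then immediate from the bilinear exponent, so $\Omega_K$ is a symmetric bicharacter.

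Finally, for \emph{nondegeneracy}, suppose $\Omega_K(u,v)=1$ for all $v$. Then $x^\top K^{-1}y\in\mathbb Z$ for all $y\in\Lambda^*$, which means $K^{-1}x\in(\Lambda^*)^*=\Lambda$. Writing $K^{-1}x=\lambda$ gives $x=K\lambda\in K\Lambda$, so $u=0$. Symmetry gives nondegeneracy in the other slot as well.
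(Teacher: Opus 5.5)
Your proof is correct and follows essentially the same route as the paper. You use the same lift-change computation $2x^\top\lambda+\lambda^\top K\lambda\in 2\mathbb Z$ (via evenness of $K$) for $q_K$, and the same duality argument $K^{-1}x\in(\Lambda^*)^*=\Lambda$ for nondegeneracy. The only cosmetic differences are that you change one lift at a time for $\Omega_K$ and invoke symmetry where the paper changes both at once, and that you explicitly check that the two formulas for $\Omega_K$ agree.
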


\begin{proof}
Let \(x'=x+K\lambda\) with \(\lambda\in \Lambda\). Then
\[
(x')^{\top}K^{-1}x' - x^{\top}K^{-1}x
=
2\,\lambda^{\top}x + \lambda^{\top}K\lambda.
\]
Since \(x\in \Lambda^*\), one has \(\lambda^{\top}x\in \mathbb Z\), and since \(K\) is even,
\(\lambda^{\top}K\lambda\in 2\mathbb Z\). Hence
\[
(x')^{\top}K^{-1}x' - x^{\top}K^{-1}x \in 2\mathbb Z,
\]
so \(q_K\) is independent of the chosen lift. Similarly, if \(x'=x+K\lambda\) and \(y'=y+K\mu\), then
\[
(x')^{\top}K^{-1}y' - x^{\top}K^{-1}y
=
\lambda^{\top}y + x^{\top}\mu + \lambda^{\top}K\mu \in \mathbb Z,
\]
hence \(\Omega_K\) is well defined. Its bicharacter property and symmetry are immediate
from the exponential formula. To prove nondegeneracy, suppose \(\Omega_K(u,v)=1\) for all \(v=[y]\in G_K\), and choose
a lift \(x\in \Lambda^*\) of \(u\). Then
\[
x^{\top}K^{-1}y\in \mathbb Z
\qquad
\text{for all }y\in \Lambda^*.
\]
Equivalently, \(K^{-1}x\in (\Lambda^*)^*=\Lambda\), so \(x\in K\Lambda\), hence \(u=0\) in
\(G_K\). Thus \(\Omega_K\) is nondegenerate.
\end{proof}

The genus-one theory is the first place where this data appears as modular operators.

\begin{prop}
Let \(\Sigma=T^2\), choose a symplectic basis \((a,b)\) of \(H^1(\Sigma;\mathbb Z)\), and set
\[
L_a=\mathbb R\langle a\rangle,
\qquad
L_b=\mathbb R\langle b\rangle.
\]
Choose origin leaves in the two Bohr--Sommerfeld torsors and let
\[
\{e_u\}_{u\in G_K}\subset \mathcal{H}_{\mathbb T,K}(\Sigma,L_a),
\qquad
\{e'_v\}_{v\in G_K}\subset \mathcal{H}_{\mathbb T,K}(\Sigma,L_b)
\]
be the corresponding Bohr--Sommerfeld bases. Then:
\begin{enumerate}[label=(\roman*)]
\item
\[
\dim \mathcal{H}_{\mathbb T,K}(\Sigma,L_a)=|G_K|.
\]

\item The BKS operator
\[
F_{L_bL_a}:\mathcal{H}_{\mathbb T,K}(\Sigma,L_a)\to \mathcal{H}_{\mathbb T,K}(\Sigma,L_b)
\]
is given by
\[
F_{L_bL_a}(e_u)
=
|G_K|^{-1/2}\sum_{v\in G_K}\Omega_K(u,v)e'_v.
\]

\item If \(\tau_a\in \mathrm{Map}^+(\Sigma)\) denotes the Dehn twist about the \(a\)-cycle, then
the induced operator
\[
T_a:\mathcal{H}_{\mathbb T,K}(\Sigma,L_a)\to \mathcal{H}_{\mathbb T,K}(\Sigma,L_a)
\]
acts diagonally by
\[
T_a(e_u)=q_K(u)e_u.
\]
\end{enumerate}
\end{prop}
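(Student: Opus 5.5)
This proposition is the genus-one specialization of results already proved for arbitrary genus, so the plan is to reduce each item to the corresponding earlier statement with $g=1$, checking that the conventions match.

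For (i), we apply Theorem~\ref{thm:toral-hilbert-dimension} with $g=1$. The symplectic basis $(a,b)$ makes $L_a$ a rational Lagrangian, so Proposition~\ref{prop:toral-BS} shows that the Bohr--Sommerfeld leaves form a $G_K$-torsor. Proposition~\ref{prop:toral-BS-one-dimensional} shows that each such leaf contributes a one-dimensional summand. Hence $\dim\mathcal H_{\mathbb T,K}(\Sigma,L_a)=|G_K|=|\det K|$. The chosen origin leaves fix torsor trivializations, and with them the labelled bases $\{e_u\}$ and $\{e'_v\}$, each built from a parallel unit section tensored with the canonical half-density of Proposition~\ref{prop:toral-canonical-leafwise-half-density}.

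For (ii), we invoke Lemma~\ref{lem:genus-BKS-fourier} with $g=1$, $L=L_a$, $L'=L_b$, so that $\Omega_{K,1}=\Omega_K$. The one point to check is that "choose origin leaves" here agrees with the lemma's convention of taking the leaf through the origin in the coordinates of Proposition~\ref{prop:toral-BS}.

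\begin{itemize}
\item If a different origin leaf is chosen, the labels shift by a fixed element of $G_K$.
\item Because $\Omega_K$ is a bicharacter, such a shift changes the matrix only by characters of $u$ and of $v$ separately.
\item These can be absorbed into phases of the basis vectors, which is exactly the freedom the lemma allows ("after fixing the phases of $e'_v$").
\end{itemize}

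So we state (ii) with the understanding that the phases of $e'_v$ are normalized as in the lemma. Its proof computes the BKS kernel at the unique transverse intersection point $(K^{-1}\tilde v,K^{-1}\tilde u)$ and then uses unitarity together with character orthogonality on $G_K$ to get the constant $|G_K|^{-1/2}$.

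For (iii), we apply Lemma~\ref{lem:genus-T-diagonal} with $g=1$. The Dehn twist acts on the universal cover by $(x,y)\mapsto(x+y,y)$, so it preserves every leaf $\ell_y$ and the translation-invariant half-density. On the parallel section it produces the factor $\exp(\pi i K(y,y))$. At $y=K^{-1}\tilde u$ this equals $\exp(\pi i\,\tilde u^{\top}K^{-1}\tilde u)=q_K(u)$, which is well defined by the preceding lemma on $q_K$.

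The main obstacle is the phase bookkeeping in (ii): showing that the proposition's freedom in choosing origin leaves is compatible with the normalization in Lemma~\ref{lem:genus-BKS-fourier}. I would handle it with the shift argument above. A related subtlety in (iii) is that the parallel unit section $s_y$ must be extended compatibly with the automorphy factor along the leaf, so that the twist acts on it by a scalar rather than a section-dependent phase. This follows from the explicit form of $s_y$ in Proposition~\ref{prop:toral-BS}.
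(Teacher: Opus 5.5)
Your proposal is correct and follows the paper's proof, which likewise reads off (i) from Proposition~\ref{prop:toral-BS} and Theorem~\ref{thm:toral-hilbert-dimension}, and obtains (ii) and (iii) as the $g=1$ cases of Lemmas~\ref{lem:genus-BKS-fourier} and~\ref{lem:genus-T-diagonal}.

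One correction to your side remark on origin leaves. Shifting the $L_b$-origin by $z$ produces a factor $\Omega_K(u,z)$, which can only be absorbed by rephasing the $e_u$, not the $e'_v$ as the lemma allows. Shifting the $L_a$-origin by $w\neq 0$ would turn (iii) into $T_a(e_u)=q_K(u+w)e_u$, and no rephasing repairs this, since the eigenvalues are intrinsic to the leaves. So the statement must be read with the lemmas' convention of the leaf through the origin, which is exactly what your computation in (iii) assumes.
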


\begin{proof}
The first statement is the genus-one case of Proposition \ref{prop:toral-BS}  and Theorem~\ref{thm:toral-hilbert-dimension}.
The second is Lemma~\ref{lem:genus-BKS-fourier} reduced to \(g=1\).
The third is Lemma~\ref{lem:genus-T-diagonal} reduced to \(g=1\).
\end{proof}

Thus the toral quantization via real polarization determines the standard Abelian finite quadratic
data
\[
\text{anyon sectors } \longleftrightarrow G_K,
\qquad
\text{anyon twists } \longleftrightarrow q_K,
\qquad
\text{mutual braiding } \longleftrightarrow \Omega_K.
\]
Here the adjective \emph{bosonic} refers to the underlying local theory, not to the claim
that every superselection sector has trivial twist. The evenness of \(K\) guarantees that the
local sector descends trivially to the quotient \(G_K\), while nontrivial anyon sectors may
still have \(q_K(u)\neq 1\).

\begin{remark}
The signature \(\sigma(K)\) already enters the present construction through the toral
Maslov cocycle
\[
\mu_K(L_1,L_2,L_3)=\sigma(K)\,\mu_\Sigma(L_1,L_2,L_3),
\]
and hence governs the phase correction in the \(K\)-twisted extended theory.
In the standard Abelian \(K\)-matrix description, this same signature determines the
chiral central charge modulo \(8\):
\[
c_K\equiv \sigma(K)\pmod 8.
\]
Thus the genus-one Dehn twist \(T_a\) computed above agrees with Wen's torus
\(T\)-matrix up to the usual central-charge phase.
\end{remark}

Motivated by Wen's genus-one quantities \((S,T,c)\) and by the Belov--Moore classification of quantum toral Chern--Simons theories in terms of discriminant-group data \((D,q,c)\) \cite{BelovMoore,Stirling}, we make the following definition of equivalence at the level of extracted measurable Abelian data.

\begin{definition}
Two integral forms \(K\) and \(K'\) are said to be \emph{equivalent at the level of
extracted measurable Abelian data} if there exists a group isomorphism\footnote{This will be made particularly clear in our forthcoming papers on the classification of Abelian Chern--Simons theories \cite{galviz3, galviz4}.}
\[
\phi:G_K\overset{\sim}{\longrightarrow} G_{K'}
\]
such that $q_{K'}(\phi(u))=q_K(u)$ for all $u\in G_K,$ and $c_{K'}\equiv c_K\pmod 8.$
\end{definition}

\begin{prop}
The following are equivalent.
\begin{enumerate}[label=(\roman*)]
\item
\(K\) and \(K'\) are equivalent at the level of extracted measurable Abelian data.

\item
There exists a group isomorphism \(\phi:G_K\overset{\sim}{\longrightarrow}G_{K'}\) such that,
after identifying the Bohr--Sommerfeld bases by the permutation induced by \(\phi\),
\[
P_\phi\,F_{L_bL_a}\,P_\phi^{-1}=F'_{L_bL_a},
\qquad
P_\phi\,T_a\,P_\phi^{-1}=T'_a,
\]
and
\[
c_{K'}\equiv c_K\pmod 8.
\]
\end{enumerate}
\end{prop}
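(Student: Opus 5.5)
The plan is to reduce both conditions to statements about matrix entries in the Bohr--Sommerfeld bases, using the genus-one proposition just proved. There $F_{L_bL_a}$ has entries $|G_K|^{-1/2}\Omega_K(u,v)$ and $T_a$ is diagonal with entries $q_K(u)$, and likewise for $K'$. Let $P_\phi$ denote the permutation operator $e_u\mapsto e_{\phi(u)}$, applied to both bases $\{e_u\}$ and $\{e'_v\}$. The central-charge condition $c_{K'}\equiv c_K \pmod 8$ appears verbatim on both sides, so it only needs to be carried along. Note also that (ii) forces $|G_K|=|G_{K'}|$, because $\phi$ is a group isomorphism; this keeps the normalizations $|G_K|^{-1/2}$ and $|G_{K'}|^{-1/2}$ consistent.

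For (i)$\Rightarrow$(ii), assume $q_{K'}\circ\phi=q_K$. Conjugation by the permutation gives
\[
P_\phi T_a P_\phi^{-1}(e_{\phi(u)})=q_K(u)\,e_{\phi(u)}=q_{K'}(\phi(u))\,e_{\phi(u)},
\]
which is exactly $T'_a$. For the Fourier operator, I will use the polarization identity $\Omega_K(u,v)=q_K(u+v)/(q_K(u)q_K(v))$ from the definition of $\Omega_K$. Since $\phi$ is additive, this gives $\Omega_{K'}(\phi u,\phi v)=\Omega_K(u,v)$. Then $P_\phi F_{L_bL_a}P_\phi^{-1}$ sends $e_{\phi(u)}$ to $|G_{K'}|^{-1/2}\sum_{v}\Omega_{K'}(\phi u,\phi v)\,e'_{\phi(v)}$. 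Reindexing the sum by $w=\phi(v)$ shows this equals $F'_{L_bL_a}(e_{\phi(u)})$.

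For (ii)$\Rightarrow$(i), compare diagonal entries in the identity $P_\phi T_aP_\phi^{-1}=T'_a$. This gives $q_{K'}(\phi(u))=q_K(u)$ for all $u$, which together with the assumed congruence of central charges is exactly (i). The condition on $F$ is then redundant, though it is consistent: its $(0,0)$ entry reproduces $|G_K|=|G_{K'}|$.

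The main subtlety, rather than a real obstacle, is making the phrase ``identifying the Bohr--Sommerfeld bases'' precise. The bases of Lemma~\ref{lem:genus-BKS-fourier} are fixed only after choosing origin leaves and rephasing the $e'_v$. I will therefore fix the normalizations of Lemma~\ref{lem:genus-BKS-fourier} and Lemma~\ref{lem:genus-T-diagonal} for both $K$ and $K'$, with origin leaves through $0$ and $c=|G_K|^{-1/2}$. With these choices the matrix formulas above hold exactly, not merely up to diagonal phases. Once this is in place, the remaining work is the reindexing computation above.
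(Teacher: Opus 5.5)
Your proposal is correct and follows essentially the same route as the paper. For (i)$\Rightarrow$(ii), both you and the paper transport $\Omega_K$ along $\phi$ by polarization and reindex; for (ii)$\Rightarrow$(i), both read $q_{K'}(\phi(u))=q_K(u)$ off the diagonal of $T_a$. The only difference is that you correctly note the $F$-condition is redundant in the converse, since $\phi(0)=0$ holds automatically for a group isomorphism, so the paper's vacuum-label remark is unnecessary.
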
 
\begin{proof}
Assume \((i)\). Since
\[
\Omega_K(u,v)=\frac{q_K(u+v)}{q_K(u)q_K(v)},
\]
the identity \(q_{K'}(\phi(u))=q_K(u)\) implies
\[
\Omega_{K'}(\phi(u),\phi(v))=\Omega_K(u,v)
\qquad\text{for all }u,v\in G_K.
\]
It follows immediately from the explicit formulas for \(F_{L_bL_a}\) and \(T_a\) that they
intertwine under \(P_\phi\). Conversely, assume \((ii)\). Since \(F_{L_bL_a}\) is a finite Fourier transform with vacuum row
equal to the positive constant \(|G_K|^{-1/2}\), the permutation preserves the vacuum label.
The diagonal relation for \(T_a\) implies
\[
q_{K'}(\phi(u))=q_K(u)
\qquad\text{for all }u\in G_K.
\]
Then the bicharacters agree by polarization. The central-charge classes are compared
separately through the condition \(c_{K'}\equiv c_K\pmod 8\). Hence \((i)\) holds.
\end{proof}

We now show that, in the present Abelian toral setting, the genus-one operators already
determine the finite quadratic data. After identifying the Bohr--Sommerfeld bases for
\(L_a\) and \(L_b\) by their common labels in \(G_K\), we may regard \(F_{L_bL_a}\) as a matrix
with entries indexed by \(G_K\times G_K\).

\begin{theorem}
\label{thm:reconstruct BS basis}
Assume that \(F_{L_bL_a}\) and \(T_a\) are written in the genus-one Bohr--Sommerfeld basis
\(\{e_u\}_{u\in G_K}\). Then one can reconstruct from \((F_{L_bL_a},T_a)\) the finite Abelian
group \(G_K\), the quadratic form \(q_K\), and the bicharacter \(\Omega_K\). In particular,
the genus-one operators determine the finite quadratic data, and together with the class \(c_K\pmod 8\) they determine the extracted measurable
Abelian data.
\end{theorem}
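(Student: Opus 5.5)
The plan is to read off all three pieces of finite quadratic data from the two matrices, using the explicit formulas of Lemma~\ref{lem:genus-BKS-fourier} and Lemma~\ref{lem:genus-T-diagonal} in the case $g=1$. By those lemmas, with the phases fixed as there,
\[
F_{L_bL_a}=|G_K|^{-1/2}\bigl(\Omega_K(u,v)\bigr)_{u,v\in G_K},\qquad
T_a=\operatorname{diag}\bigl(q_K(u)\bigr)_{u\in G_K}.
\]
The reconstruction then has four steps. First, I recover the order $|G_K|$ as the size of the matrix; equivalently, it is $|F_{L_bL_a}(e_u,e_v)|^{-2}$ for any entry. Second, I locate the vacuum label $0$ as the unique index whose row of $|G_K|^{1/2}F_{L_bL_a}$ is identically $1$. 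Uniqueness uses the nondegeneracy of $\Omega_K$ from the preceding lemma: if $\Omega_K(u,\cdot)\equiv 1$, then $u=0$.

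The third step, which I expect to be the main obstacle, is recovering the group law on the label set. The matrix entries only give the values of $\Omega_K$, not the addition, so the labels must be shown to carry an intrinsic group structure. The approach I will take is to attach to each label $u$ its row vector
\[
\chi_u:=\bigl(\Omega_K(u,v)\bigr)_{v}\in U(1)^{G_K}.
\]
Because $\Omega_K$ is a bicharacter, $\chi_u$ is a character of $G_K$, and $\chi_{u+u'}$ is the pointwise product $\chi_u\chi_{u'}$. Nondegeneracy makes $u\mapsto\chi_u$ injective, so the set of rows is closed under pointwise multiplication and forms a group isomorphic to $G_K$. I then define $u\oplus u'$ to be the unique label $w$ with $\chi_w=\chi_u\chi_{u'}$. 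This operation is computable from the matrix alone and agrees with addition in $G_K$, with identity the vacuum label found in step two. Unitarity of $F_{L_bL_a}$ from Proposition~\ref{prop:toral-BKS}, together with character orthogonality, guarantees the rows are distinct. This bypasses any need to know the lift $\Lambda^*\to G_K$.

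Fourth, I read off the remaining data. The quadratic form is $q_K(u)$, the $(u,u)$ diagonal entry of $T_a$. The bicharacter is $\Omega_K(u,v)=|G_K|^{1/2}(F_{L_bL_a})_{uv}$, and I check consistency through the polarization identity $\Omega_K(u,v)=q_K(u\oplus v)/(q_K(u)q_K(v))$. One subtlety is the phase freedom in the basis $e'_v$ in Lemma~\ref{lem:genus-BKS-fourier}: I use the normalization fixed there, under which the vacuum row and column are positive, so that the reconstructed values are the genuine $\Omega_K$. Finally, appending the class $c_K\equiv\sigma(K)\pmod 8$, which is supplied separately, gives the full extracted measurable Abelian data $(G_K,q_K,c_K)$, in the sense of the preceding definition and proposition.
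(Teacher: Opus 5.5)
Your proposal is correct and follows essentially the same route as the paper. Both arguments use the Fourier kernel of Lemma~\ref{lem:genus-BKS-fourier} to get $|G_K|$ and the vacuum, rebuild the group law by declaring $u\oplus w$ to be the unique label with $\chi_{u\oplus w}=\chi_u\chi_w$ (injectivity coming from nondegeneracy of $\Omega_K$), and then read off $q_K$ from the diagonal of $T_a$ and $\Omega_K$ from the rescaled kernel. Your one departure is a small improvement: you identify the vacuum intrinsically as the unique constant row, whereas the paper's definition $\chi_u(v)=F_{L_bL_a}(u,v)/F_{L_bL_a}(0,v)$ already assumes the label $0$ is known.
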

\begin{proof}
First, the cardinality of \(G_K\) is recovered from the vacuum entry:
\[
|G_K|=\bigl(F_{L_bL_a}(0,0)\bigr)^{-2},
\]
since
\[
F_{L_bL_a}(0,v)=|G_K|^{-1/2}
\qquad
\text{for all }v\in G_K.
\]

Next define
\[
\chi_u(v):=
\frac{F_{L_bL_a}(u,v)}{F_{L_bL_a}(0,v)}.
\]
By the explicit Fourier formula,
\[
\chi_u(v)=\Omega_K(u,v).
\]
Hence each \(u\) determines a character \(\chi_u\) of the finite label set. Since \(\Omega_K\)
is nondegenerate, the assignment $u\mapsto \chi_u$ is injective, hence bijective onto the full character group. We may therefore reconstruct the group law by declaring \(u\oplus w\) to be the unique
label satisfying
\[
\chi_{u\oplus w}=\chi_u\,\chi_w.
\]
The identity is the unique label \(0\) with \(\chi_0\equiv 1\), and the inverse of \(u\) is the
unique label \(-u\) with \(\chi_{-u}=\chi_u^{-1}\). This reconstructs the finite Abelian group
\(G_K\). Once the group law is known, the quadratic form is recovered from the diagonal action
\[
q_K(u)=T_a(u,u).
\]
Finally, the bicharacter is recovered either directly from the Fourier kernel
\[
\Omega_K(u,v)=\frac{F_{L_bL_a}(u,v)}{F_{L_bL_a}(0,v)},
\]
or from the polarization identity
\[
\Omega_K(u,v)=\frac{q_K(u+v)}{q_K(u)q_K(v)}.
\]
Thus \((G_K,q_K,\Omega_K)\) are reconstructed canonically from the genus-one operators.
\end{proof}

\begin{corollary}
For the toral Chern--Simons theory constructed above,  geometric
quantization via real polarization canonically recovers the finite quadratic genus-one data associated with
Abelian bosonic topological order in Wen's sense. Its superselection sectors are indexed by
\(G_K\), its anyon twists are given by \(q_K\), its mutual braiding by \(\Omega_K\), and its
genus-one operators are \(F_{L_bL_a}\) and \(T_a\).
\end{corollary}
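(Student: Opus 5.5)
The statement is the final Corollary. Its proof will be assembly: each clause is matched to a result already established.

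The plan has four steps.

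First, the superselection sectors. For $\Sigma=T^2$ and the Lagrangian $L_a$, I will invoke Proposition~\ref{prop:toral-BS} with $g=1$. It says the Bohr--Sommerfeld leaves form a $G_K$-torsor. Once the origin leaf is chosen as the leaf through the image of flat connections on the solid torus (the leaf $\Lambda_{H,0}$, which is Bohr--Sommerfeld by Theorem~\ref{thm:Componentwise-toral-CS-section}), this torsor becomes canonically a copy of $G_K$. By Proposition~\ref{prop:toral-BS-one-dimensional}, each leaf contributes a line. So the basis $\{e_u\}$ is indexed by $G_K$, with vacuum $e_0$ proportional to $Z^{CS}_{\mathbb T,K}(H)$. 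This gives the sector labelling.

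Second, the twists and braiding. I will take the genus-one Proposition above, which combines Lemma~\ref{lem:genus-BKS-fourier} and Lemma~\ref{lem:genus-T-diagonal}. It identifies $F_{L_bL_a}$ as the normalized Fourier kernel $|G_K|^{-1/2}\Omega_K$, and $T_a$ as the diagonal operator with entries $q_K$. The earlier Lemma supplies that $q_K$ is a well-defined quadratic refinement and that $\Omega_K$ is its nondegenerate polarization.

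Third, canonicity. Theorem~\ref{thm:reconstruct BS basis} shows that $(G_K,q_K,\Omega_K)$ is recovered from the pair $(F_{L_bL_a},T_a)$ alone. The data are therefore intrinsic to the quantized theory and do not depend on the presentation by $K$.

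Fourth, the link to Wen. I will note that these triples are exactly Wen's genus-one data $(S,T)$ for an Abelian bosonic order, with $S=F_{L_bL_a}$ and $T=T_a$ taken up to the central-charge phase $c_K\equiv\sigma(K)$ of the preceding Remark. Evenness of $K$ makes $q_K$ well defined on $G_K$, so the local (transparent) sector is trivial; this is the sense of ``bosonic''.

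The main obstacle is the word ``canonically''. The identification with $G_K$ requires a basepoint leaf, and the Fourier formula requires a choice of phases for the $e'_v$. I would resolve both the same way: fix the basepoint by the solid-torus leaf, and fix the phases by the normalization $F_{L_bL_a}(0,v)>0$. Theorem~\ref{thm:reconstruct BS basis} then shows that the reconstructed data are independent of any remaining choices, up to isomorphism of quadratic groups.
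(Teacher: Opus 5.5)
This is essentially the paper's argument: the paper's proof just cites Theorem~\ref{thm:reconstruct BS basis}, which rests on the same ingredients you assemble, namely the Bohr--Sommerfeld count and the genus-one Fourier and twist formulas of Lemmas~\ref{lem:genus-BKS-fourier} and~\ref{lem:genus-T-diagonal}; your use of the solid-torus leaf as basepoint also matches the paper's origin-leaf convention. One correction to your canonicity paragraph: normalizing only $F_{L_bL_a}(0,v)>0$ fixes the phases of the $e'_v$ relative to $e_0$, but each $e_u$ with $u\neq 0$ stays free up to a phase $\phi_u$, which turns $\chi_u$ into $(\phi_u/\phi_0)\,\Omega_K(u,\cdot)$ and breaks the character property the reconstruction needs, so you must also normalize the vacuum column $F_{L_bL_a}(u,0)>0$, or equivalently use the rephasing-invariant ratio $F_{L_bL_a}(u,v)F_{L_bL_a}(0,0)/\bigl(F_{L_bL_a}(u,0)F_{L_bL_a}(0,v)\bigr)=\Omega_K(u,v)$.
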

\begin{proof}
Immediate from Theorem \ref{thm:reconstruct BS basis}.
\end{proof}

\subsection{All-genus TQFT data}
The genus-one operators extracted above are the natural point of contact with Wen's
formulation of \((2+1)\)-dimensional topological order in terms of the genus-one quantities
\((S,T,c)\) \cite{Wen2016}. However, the geometric
construction of the present paper is intrinsically an all-genus theory. It is therefore useful
to state explicitly how the genus-one data sit inside the full toral TQFT.

Wen's point of departure is that topological ground-state degeneracies on compact spaces
are not by themselves sufficient to characterize topological order. One must also consider
the geometric phases generated by automorphisms of the underlying surface. In genus one,
the mapping class group is \(SL(2,\mathbb Z)\), generated by the modular transformation
\(S\) and a Dehn twist \(T\), and it is these two operators, together with the chiral central
charge \(c\), that form Wen's genus-one measurable data. In the Abelian case, these are
already rich enough to recover the finite quadratic data \((G_K,q_K)\) extracted above.

By contrast, the toral construction via real polarization associates data to every closed oriented
surface \(\Sigma\), not just to \(T^2\). More precisely, for each pair \((\Sigma,L)\) consisting of
a closed oriented surface and a rational Lagrangian
\(L\subset H^1(\Sigma;\mathbb R)\), one has a finite-dimensional Hilbert space $\mathcal{H}_{\mathbb T,K}(\Sigma,L).$ For each pair of rational Lagrangians \(L_1,L_2\subset H^1(\Sigma;\mathbb R)\), one has a
canonical unitary BKS operator
\[
F_{L_2L_1}:\mathcal{H}_{\mathbb T,K}(\Sigma,L_1)\to \mathcal{H}_{\mathbb T,K}(\Sigma,L_2),
\]
and for each weighted bordism one has the corresponding boundary vector or bordism map.
These satisfy the cylinder axiom, the gluing law, and the Maslov-corrected compatibility
required for the extended toral TQFT. From this viewpoint, the genus-one data are the smallest nontrivial shadow of a larger
all-genus structure:
\[
(\Sigma,L)\longmapsto \mathcal{H}_{\mathbb T,K}(\Sigma,L),
\qquad
(L_1,L_2)\longmapsto F_{L_2L_1},
\qquad
X\longmapsto Z^{CS}_{\mathbb T,K}(X).
\]
The genus-one operators \(F_{L_bL_a}\) and \(T_a\) arise by reducing this assignment to
\(\Sigma=T^2\) and to the standard transverse Lagrangians \(L_a,L_b\).

The difference in emphasis may be summarized as follows. Wen’s framework isolates the physically measurable genus-one information encoded by the torus representation of the mapping class group together with the chiral central charge. By contrast, the present framework constructs an all-genus TQFT from which those genus-one operators arise as special cases. Thus the torus matrices are not external input to the theory, but are derived from the geometry of quantization via real polarization in all genera. For higher-genus analogues of the  $S$- and $T$-operators, see Lemmas \ref{lem:genus-BKS-fourier} and \ref{lem:genus-T-diagonal}.

In the present Abelian toral setting, the finite discriminant group \(G_K\) controls both
levels of description. At genus one, it indexes the Bohr--Sommerfeld basis and determines
the Fourier kernel and the twist eigenvalues. At genus \(g\), Proposition \ref{prop:toral-BS}  identifies the
Bohr--Sommerfeld leaves with a \(G_K^g\)-torsor, and Theorem~\ref{thm:toral-hilbert-dimension} gives
\[
\dim \mathcal{H}_{\mathbb T,K}(\Sigma,L)=|G_K|^g.
\]
Thus the same finite quadratic datum that appears in Wen's genus-one measurable data
also governs the all-genus state-space dimensions. 

\begin{prop}
In toral Chern--Simons theory, Wen's genus-one measurable data are obtained by
restricting the all-genus extended TQFT to the torus and to the standard
mapping-class-group generators. Conversely, in the Abelian toral setting, the genus-one
Bohr--Sommerfeld operators recover the finite quadratic data \((G_K,q_K)\) that control
the state spaces in every genus.
\end{prop}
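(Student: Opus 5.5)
The proposition has two directions, and I will treat them separately, using only results already established.

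\emph{Forward direction (restriction to genus one).} Specialize the all-genus assignments of Theorem~\ref{thm:toral-TQFT} to $\Sigma=T^2$ with a symplectic basis $(a,b)$, and take the rational Lagrangians $L_a=\mathbb R\langle a\rangle$ and $L_b=\mathbb R\langle b\rangle$. The state space $\mathcal H_{\mathbb T,K}(T^2,L_a)$ is the $g=1$ case of Theorem~\ref{thm:toral-hilbert-dimension}, of dimension $|G_K|$; this is Wen's ground-state degeneracy on the torus. The genus-one mapping class group $SL(2,\mathbb Z)$ is generated by $S$ and the Dehn twist $T$. I will identify $S$ with the BKS operator $F_{L_bL_a}$, followed by the identification of Bohr--Sommerfeld labels. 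The reason is that $S$ exchanges the $a$- and $b$-cycles and so carries the polarization $L_a$ to $L_b$, so its quantization is obtained by composing the naturality isomorphism with the BKS change of polarization. Lemma~\ref{lem:genus-BKS-fourier} at $g=1$ then gives the Fourier kernel $|G_K|^{-1/2}\Omega_K$. I will identify $T$ with the operator $T_a$ of Lemma~\ref{lem:genus-T-diagonal} at $g=1$, which is diagonal with entries $q_K$. The central charge enters through the Maslov phase: by Lemma~\ref{lem:toral-maslov-signature} and Proposition~\ref{prop:K-twisted-Walker}, the projective defect of these operators is $e^{\pi i\sigma(K)\mu_\Sigma/4}$. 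With $c_K\equiv\sigma(K)\pmod 8$, this reproduces Wen's $(S,T,c)$ up to the standard phase convention. Altogether this shows that $(S,T,c)$ is the restriction of the extended theory to $T^2$ and to the mapping class group generators.

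\emph{Converse direction (genus one determines all genera).} Apply Theorem~\ref{thm:reconstruct BS basis} to recover $(G_K,q_K,\Omega_K)$ from $(F_{L_bL_a},T_a)$. Then Proposition~\ref{prop:toral-BS} shows that the genus-$g$ Bohr--Sommerfeld set is a $G_K^g$-torsor. Consequently the state spaces in every genus are determined: their dimension is $|G_K|^g$, and they carry the product bases $e_{\mathbf u}$. Lemmas~\ref{lem:genus-BKS-fourier} and~\ref{lem:genus-T-diagonal} express the genus-$g$ operators as tensor products built from $\Omega_K$ and $q_K$, namely $\Omega_{K,g}=\prod_j\Omega_K$ and $T_j=q_K(u_j)$. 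Hence they too are determined by the recovered genus-one data.

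\emph{Main obstacle.} The difficult step is the identification of the geometric mapping class action of $S$ with $F_{L_bL_a}$. The paper only identifies Dehn twists diagonally and treats the BKS operator as a change of polarization. I will make this precise by noting that the diffeomorphism implementing $S$ maps the $L_a$-leaves onto the $L_b$-leaves. Naturality (Proposition~\ref{prop:toral-boundary-vector-functoriality}) then gives a unitary identification of $\mathcal H(T^2,L_a)$ with $\mathcal H(T^2,L_b)$ that preserves the canonical half-densities. Composing this with $F_{L_bL_a}$ produces the $S$-matrix, up to the phase ambiguity already fixed in Lemma~\ref{lem:genus-BKS-fourier}. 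The remaining overall phase is absorbed into the central-charge convention.
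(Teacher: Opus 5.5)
Your proposal is correct and follows the paper's proof. The forward direction restricts Theorem~\ref{thm:toral-TQFT} to $T^2$ with $L_a,L_b$ and reads off $F_{L_bL_a}$ and $T_a$ from Lemmas~\ref{lem:genus-BKS-fourier} and~\ref{lem:genus-T-diagonal} at $g=1$, and the converse combines Theorem~\ref{thm:reconstruct BS basis} with Proposition~\ref{prop:toral-BS} and Theorem~\ref{thm:toral-hilbert-dimension}. Your extra step, realizing $S$ as the naturality map composed with the BKS operator, goes beyond the paper, which simply identifies $S$ with $F_{L_bL_a}$ after matching labels; it is consistent with the paper, except that the orientation of the chosen generator decides whether you get the kernel $\Omega_K$ or $\overline{\Omega_K}$ (i.e.\ $S$ versus $S^{-1}$), which is a convention and not a gap.
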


\begin{proof}
The all-genus toral theory is constructed in Theorem~\ref{thm:toral-TQFT}. reducing that assignment to
$\Sigma=T^2$ and to the standard transverse Lagrangians $L_a,L_b$, and using the explicit
genus-one formulas for the BKS operator $F_{L_bL_a}$ and the Dehn twist $T_a$ from the
preceding genus-one computation, one recovers Wen's genus-one measurable operators as
the genus-one restriction of the all-genus theory.

Conversely, Theorem~\ref{thm:reconstruct BS basis} shows that the genus-one operators $(F_{L_bL_a},T_a)$
reconstruct the finite quadratic data $(G_K,q_K,\Omega_K)$. On the other hand,
Proposition~\ref{prop:toral-BS} identifies the Bohr--Sommerfeld set in genus $g$ with a
$G_K^g$-torsor, and Theorem~\ref{thm:toral-hilbert-dimension} gives
\[
\dim \mathcal{H}_{\mathbb T,K}(\Sigma,L)=|G_K|^g.
\]
Hence the same finite quadratic data recovered from genus one control the state spaces in
every genus.
\end{proof}

\bibliographystyle{alpha}
\renewcommand{\refname}{References}
\bibliography{refs}
\end{document}